\newcommand{\N}{\mathbb{N}}
\newcommand{\fun}[3]{\ensuremath{#1\colon #2 \to #3}}
\newcommand{\A}{\mathcal{A}}
\newcommand{\B}{\mathcal{B}}
\newcommand{\C}{\mathcal{C}}
\newcommand{\D}{\mathcal{D}}
\newcommand{\w}{\mathrm{width}}
\newcommand{\rk}{\mathrm{rk}}
\newcommand{\Adet}{\mathcal{A}_{\mathrm{det}}}
\newcommand{\Atriv}{\mathcal{A}_{\mathrm{triv}}}
\newcommand{\CC}{\mathcal{C}}
\newcommand{\trans}[1]{\stackrel{#1}{\longrightarrow}}
\newcommand{\srelT}[2]{
  \xrightarrow
  [\protect{\raisebox{1mm}[0pt][0pt]{\ensuremath{\scriptstyle #2}}}]
  {#1}
  {\!\!}^*\,
}
\newcommand{\Gg}{\mathcal{G}}
\newcommand{\GG}{\mathcal{G}}
\newcommand{\Gk}{\GG_k}
\newcommand{\inc}{\sqsubseteq}
\newcommand{\inck}{\sqsubseteq_k}
\newcommand{\cl}{\mathit{cl}}
\definecolor{newgreen}{RGB}{10,130,20}
\begin{document}

\title{Computing the Width of Non-deterministic Automata}

\author[D.~Kuperberg]{Denis Kuperberg}
\address{CNRS, LIP, ENS Lyon}
\email{denis.kuperberg@ens-lyon.fr}
\thanks{Work supported by the Grant Palse Impulsion}

\author[A.~Majumdar]{Anirban Majumdar}
\address{LSV, CNRS \&  ENS Paris-Saclay, Univ. Paris-Saclay --
  Cachan (France) \newline Univ. Rennes, Inria, CNRS, IRISA -- Rennes
  (France)} 
\email{majumdar@lsv.fr}


\begin{abstract}
We introduce a measure called width, quantifying the amount of nondeterminism in automata. Width generalises the notion of good-for-games (GFG) automata, that correspond to NFAs of width $1$, and where an accepting run can be built on-the-fly on any accepted input. We describe an incremental determinisation construction on NFAs, which can be more efficient than the full powerset determinisation, depending on the width of the input NFA. This construction can be generalised to infinite words, and is particularly well-suited to coB\"uchi automata. For coB\"uchi automata, this procedure can be used to compute either a deterministic automaton or a GFG one, and it is algorithmically more efficient in the last case. We show this fact by proving that checking whether a coB\"uchi automaton is determinisable by pruning is NP-complete. On finite or infinite words, we show that computing the width of an automaton is EXPTIME-complete. This implies EXPTIME-completeness for multipebble simulation games on NFAs.
\end{abstract}

\maketitle

\section{Introduction}

Determinisation of non-deterministic automata (NFAs) is one of the cornerstone problems of automata theory. Determinisation algorithms occupy a central place in the theoretical study of regular languages of finite or infinite words, inducing for instance many of the robustness properties of these classes. Moreover, determinisation algorithms are not only used to prove theoretical properties related with decidability and complexity, but are also used when we want to put these theories to practical use, with many applications for instance in verification and synthesis.
Consequently, there is a very active field of research aiming at optimizing or approximating determinisation, or circumventing it in contexts like inclusion of NFA or Church Synthesis. Indeed, determinisation is a costly operation, as the state space blow-up is in $O(2^n)$ on finite words, $O(3^n)$ for coB\"uchi automata \cite{MH84}, and $2^{O(n\log(n))}$ for B\"uchi automata \cite{Safra88}, where there is also an increased complexity of the acceptance condition, going from B\"uchi to Rabin.

If $\A$ and $\B$ are NFAs, the classical way of checking the inclusion $L(\A)\subseteq L(\B)$ is to determinise $\B$, complement it, and test emptiness of $L(A)\cap \overline{L(B)}$. To circumvent a full determinisation, the recent algorithm from \cite{BP13} proved to be very efficient, as it is likely to explore only a part of the powerset construction.
Other approaches use simulation games to approximate inclusion at a cheaper cost, see for instance \cite{Ete02}.

%

Another way of avoiding a full determinisation construction consists in replacing determinism by a weaker constraint that suffices in some particular contexts.
In this spirit, Good-for-Games (GFG for short) automata were introduced in \cite{HP06}, as a way to solve the Church synthesis problem. 
This problem \cite{Church63} asks, given a specification $L$, typically given by an LTL formula over an alphabet of inputs and outputs, whether there is a reactive system (transducer) whose behaviour is included in $L$. The classical solution \cite{Rabin72} computes a deterministic automaton for $L$, and solves a game defined on this automaton. It turns out that replacing determinism by the weaker constraint of being GFG is sufficient in this context.
Intuitively, a GFG automaton is a non-deterministic automaton where it is possible to build an accepting run in an online way, without any knowledge of the future, provided the input word is in the language of the automaton. In \cite{HP06}, it is shown that GFG automata allow an incremental algorithm for the Church synthesis problem: we can build increasingly large games, with the possibility that the algorithm stops before the full determinisation is needed. 
One of the aims of this paper is to generalise this idea to determinisation of NFA, for use in any context and not only Church synthesis. We give an incremental determinisation construction, where the emphasis is on space-saving, and that allows in some cases to avoid the full powerset construction.
\medskip

The notion of width introduced in \cite{KM18} (of which this paper is an extended version) generalises the GFG model, by allowing more than one run to be built in an online way. Intuitively, width quantifies how many states we have to keep track of simultaneously in order to build an accepting run in an online way. The maximal width of an automaton is its number of states. The width of an automaton also corresponds to the number of steps performed by our incremental determinisation construction before stopping. In the worst case where the width is equal to the number of states of the automaton, we end up performing the full powerset construction (or its generalisations for infinite words). We study here the complexity of directly computing the width of a nondeterministic automaton, and we show that it is EXPTIME-complete, even in the restricted case of universal safety automata. This constitutes a new contribution compared to the conference version of this paper \cite{KM18}, where only PSPACE-hardness was shown for the width problem.

We obtain this result via a reduction from a combinatorial game on boolean formulas from \cite{SC79}.
In the process, we also show that multipebble simulation games on NFAs are EXPTIME-complete, even when testing simulation of a trivial automaton by an NFA of size $n$, using a fixed number of $n/2$ pebbles.
This generalizes a previous result from \cite{Clemente12}, where EXPTIME-completeness is shown for multipebble simulations on B\"uchi automata, with a number of pebbles fixed to $\sqrt{n}$.

\medskip

The properties of GFG automata and their links with other models and algorithms (tree automata, Markov Decision Processes, efficient algorithms for parity games) are nowadays actively investigated \cite{BKKS13,KMBK14,KS15,BKS17,BK18,AK19,IK19,CF19}. Colcombet introduced a generalisation of the concept of GFG called history-determinism \cite{Col09}, replacing determinism in the framework of automata with counters. It was conjectured by Colcombet \cite{Col12} that GFG automata were essentially deterministic automata with additional useless transitions. It was shown in \cite{KS15} that on the contrary there is in general an exponential state space blowup to translate GFG automata to deterministic ones. GFG automata retain several good properties of determinism, in particular they can be composed with trees and games, and easily checked for inclusion.  

We give here the first algorithms allowing to build GFG automata from arbitrary non-deterministic automata on infinite words, allowing to potentially save exponential space compared to deterministic automata. Our incremental constructions look for small GFG automata, and aim at avoiding the worst-case complexities of determinisation constructions. Moreover, in the case of coB\"uchi automata, we show that the procedure is more efficient than its analog looking for a deterministic automaton, since checking the GFG property is polynomial \cite{KS15}, while we show here that the corresponding step for determinisation, that is checking whether a coB\"uchi automaton is Determinisable By Pruning (DBP) is NP-complete. Combined with the good properties of GFG coB\"uchi automata related to succinctness even for LTL-definable languages \cite{IK19} and polynomial time minimization \cite{AK19}, this makes the class of coB\"uchi automata especially well-suited for this approach.
\medskip

%

As a measure of non-determinism, width can be compared with ambiguity, where the idea is to limit the number of possible runs of the automaton. In this context unambiguous automata play a role analogous to GFG automata for width.
Unambiguous automata are studied in \cite{Leung06}, degrees of ambiguity are investigated in \cite{WS91,Leung98, Leung05}. We give examples of automata with various width and ambiguity, showing that these two measures are essentially orthogonal.
\medskip

After defining automata and games in Section \ref{sec:def}, we describe the width approach on finite words and the incremental determinisation construction for NFAs in Section \ref{sec:NFA}. We compare width and ambiguity in Section \ref{sec:ambig}, and show a link between width and multipebble simulation relations in Section \ref{sec:sim}. We show in Section \ref{sec:exptime} that computing the width of a NFA, as well as testing whether a multipebble simulation holds, is EXPTIME-complete. We then move to infinite words, and start by focusing on the coB\"uchi acceptance condition in Section \ref{sec:cobuchi}. We show that the breakpoint construction \cite{MH84} can be adapted to yield an incremental breakpoint construction, that can be used to build either a deterministic or a GFG coB\"uchi automaton from a nondeterministic one. We compare the two approaches, and exhibit several advantages of GFG automata in this special case. We finally describe the general case of B\"uchi automata in Section \ref{sec:buchi}, where we give an incremental version of the Safra construction \cite{Safra88}, and point to open problems related to the algorithmic complexity of this approach.

\section{Definitions}\label{sec:def}

We will use $\Sigma$ to denote a finite alphabet. The empty word is denoted $\varepsilon$.
If $i\leq j$, the set $\{i,i+1,i+2,\dots,j\}$ is denoted $[i,j]$.
If $X$ is a set and $k\in\N$, we note $X^{\leq k}$ the set of subsets of $X$ of size at most $k$.
The complement of a set $X$ is denoted $\overline{X}$.
If $u\in \Sigma^*$ is a word and $L\subseteq \Sigma^*$ is a language, the left quotient of $L$ by $u$ is $u^{-1}L:=\{v\in\Sigma^*\mid uv\in L\}$.

\subsection{Automata}


A non-deterministic automaton $\A$ is a tuple $(Q,\Sigma,q_0,\Delta,F)$ where $Q$ is the set of states, $\Sigma$ is a finite alphabet, $q_0\in Q$ is the initial state, $\Delta: Q\times\Sigma\to 2^Q$ is the transition function, and $F\subseteq Q$ is the set of accepting states.

The transition function is naturally generalised to $2^Q$ by setting for any $(X,a)\in 2^Q\times\Sigma$, $\Delta(X,a)$ the set of $a$-successors of $X$, i.e. $\Delta(X,a)=\{q\in Q\mid \exists p\in X, q\in\Delta(p,a)\}$.

We will sometimes identify $\Delta$ with its graph, and write $(p,a,q)\in\Delta$ instead of $q\in\Delta(p,a)$.

If for all $(p,a)\in Q\times\Sigma$ there is a unique $q\in Q$ such that $(p,a,q)\in\Delta$, we say that $\A$ is \emph{deterministic}.

If $u=a_1\dots a_n$ is a finite word of $\Sigma^*$, a run of $\A$ on $u$ is a sequence $q_0q_1\dots q_n$ such that for all $i\in[1,n]$, we have $q_i\in\Delta(q_{i-1},a_i)$. The run is said to be \emph{accepting} if $q_n\in F$.

If $u=a_1a_2\dots$ is an infinite word of $\Sigma^\omega$, a run of $\A$ on $u$ is a sequence $q_0q_1q_2\dots$ such that for all $i>0$, we have $q_i\in\Delta(q_{i-1},a_i)$. A run is said to be \emph{B\"uchi accepting} if it contains infinitely many accepting states, and \emph{coB\"uchi accepting} if it contains finitely many non-accepting states. Automata on infinite words will be called B\"uchi and coB\"uchi automata, to specify their acceptance condition.

We will note NFA (resp. DFA) for a non-deterministic (resp. deterministic) automaton on finite words, NBA (resp. DBA) for a non-deterministic (resp. deterministic) B\"uchi automaton, and NCA (resp. DCA) for a non-deterministic (resp. deterministic) coB\"uchi automaton.

We also mention more general acceptance conditions on infinite words:
\begin{itemize}
\item \emph{Parity condition}: each state $q$ has a rank $\rk(q)\in\N$, and an infinite run is accepting if the highest rank appearing infinitely often is even.
\item \emph{Rabin condition}: a set $\{(G_1,B_1),\dots,(G_k,B_k)\}$ of pairs with $G_i,B_i\subseteq Q$ is given, a run is accepting if there exists $i\in[1,k]$ such that the run contains infinitely many states from $G_i$ and finitely many states from $B_i$.
\item \emph{Streett condition}: dual of the Rabin condition.
\item \emph{Muller condition}: a set $\{F_1,\dots, F_k\}$ of subsets of $Q$ is given, a run is accepting if there is $i\in[1,k]$ such that the set of states appearing infinitely often is exactly $F_i$.
\end{itemize}
\noindent
The language of an automaton $\A$, noted $L(\A)$, is the set of words on which the automaton $\A$ has an accepting run. Two automata are said \emph{equivalent} if they recognise the same language. An automaton $\A$ is said \emph{universal} if it accepts all words.

An automaton $\A$ is \emph{determinisable by pruning} (DBP) if an equivalent deterministic automaton can be obtained from $\A$ by removing some transitions.

An automaton $\A$ is \emph{Good-For-Games} (GFG) if there exists a function $\fun{\sigma}{\Sigma^\ast}{Q}$ (called \emph{GFG strategy}) that resolves the non-determinism of $\A$ depending only on the prefix of the input word read so far: over every word $u=a_1a_2a_3\dots$ (finite or infinite depending on the type of automaton considered), the sequence of states $\sigma(\varepsilon)\sigma(a_1)\sigma(a_1a_2)\sigma(a_1a_2a_3)\dots$ is a run of $\A$ on $u$, and it is accepting whenever $u\in L(\A)$. For instance every DBP automaton is GFG. See \cite{BKKS13} for more introductory material and examples on GFG automata.

\subsection{Games}
%

A \emph{game} $\Gg=(V_0,V_1,v_I,E, W)$ of infinite duration between two players $0$ and $1$ consists of: a finite set of \emph{positions} $V$ being a disjoint union of $V_0$ and $V_1$; an \emph{initial position} $v_I\in V$; a set of \emph{edges} $E\subseteq V\times V$; and a \emph{winning condition} $W\subseteq V^\omega$.

A \emph{play} is an infinite sequence of positions $v_0v_1v_2\dots\in V^\omega$ such that $v_0=v_I$ and for all $n\in \mathbb N$, $(v_n,v_{n+1})\in E$. A play $\pi\in V^\omega$ is \emph{winning} for Player 0 if it belongs to $W$. Otherwise $\pi$ is \emph{winning} for Player 1.

A \emph{strategy} for Player 0 (resp. 1) is a function $\fun{\sigma_0}{V^\ast\times V_0}{V}$ (resp. $\fun{\sigma_1}{V^\ast\times V_1}{V}$), describing which edge should be played given the history of the play $u\in V^\ast$ and the current position $v\in V$. A strategy $\sigma_P$ has to obey the edge relation, i.e.~there has to be an edge in $E$ from $v$ to $\sigma_P(u,v)$. A play $\pi=v_0v_1v_2\dots$ is \emph{consistent} with a strategy $\sigma_P$ of a player $P$ if for every $n$ such that $v_n\in V_P$ we have $v_{n+1}=\sigma_P(v_0\ldots v_{n-1},v_n)$.

A strategy for Player 0 (resp. Player 1) is \emph{positional} if it does not use the history of the play, i.e. it can be seen as a function $V_0\to V$ (resp. $V_1\to V$).

We say that a strategy $\sigma_P$ of a player $P$ is \emph{winning} if every play consistent with $\sigma_P$ is winning for $P$. In this case, we say that $P$ \emph{wins} the game $\Gg$.

A game is \emph{positionally determined} if exactly one of the players has a positional winning strategy in the game.

\section{Finite words}\label{sec:NFA}

\subsection{Width of an NFA}

\label{sec:width}

%

We want to define the \emph{width} of an NFA as the minimum number of simultaneous states that need to be tracked in order to be able to deterministically build an accepting run in an online way. 
Let $\A=(Q,\Sigma,q_0,\Delta,F)$ be an NFA, and $n=|Q|$ be the size of $\A$.
In order to define this notion of width formally, we introduce a family of games $\Gg_w(\A,k)$, parameterized by an integer $k\in[1,n]$.

\newcommand{\Qk}{Q^{\leq k}}
\newcommand{\Gw}{\Gg_w}

The game $\Gw(\A,k)$ is played on $\Qk$, starts in $X_0=\{q_0\}$, and the round $i$ of the game from a position $X_i\in \Qk$ is defined as follows:
\begin{itemize}
\item Player 1 chooses a letter $a_{i+1}\in \Sigma$.
\item Player 0 moves to a subset $X_{i+1}\subseteq \Delta(X_i,a_{i+1})$ of size at most $k$.
\end{itemize}
\noindent
A play is winning for Player $0$ if for all $r\in\mathbb N$, whenever $a_1a_2\dots a_r\in L(\A)$, $X_r$ contains an accepting state.

\begin{defi}
The width of an NFA $\A$, denoted $\w(\A)$, is the least $k$ such that Player $0$ wins $\Gw(\A,k)$.
\end{defi}

Intuitively, the width measures the ``amount of non-determinism'' in an automaton: it counts the number of simultaneous states we have to keep track of, in order to be sure to find an accepting run in an online way.

\begin{fact}
An NFA $\A$ is GFG if and only if $\w(\A)=1$.
\end{fact}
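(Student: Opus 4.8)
The plan is to unfold both definitions and check that they coincide. Recall that $\w(\A)=1$ means Player~0 wins $\Gw(\A,1)$, a game whose positions are singletons $\{q\}\in Q^{\leq 1}$ (or $\emptyset$), where in each round Player~1 picks a letter $a$ and Player~0 must move from $\{q\}$ to some $\{q'\}\subseteq\Delta(q,a)$. A strategy for Player~0 in this game is essentially a function mapping a history of letters read so far to a single state, subject to the constraint that consecutive states form a run of $\A$; the winning condition demands that whenever the prefix read so far lies in $L(\A)$, the current singleton contains an accepting state. This is, up to the cosmetic difference of wrapping states in singleton braces, exactly the definition of a GFG strategy $\fun{\sigma}{\Sigma^*}{Q}$.

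Concretely, I would argue both directions. For the forward direction, suppose $\A$ is GFG via a strategy $\sigma$. I define a Player~0 strategy in $\Gw(\A,1)$ by, after Player~1 has played $u=a_1\dots a_i$, moving to the position $\{\sigma(a_1\dots a_i)\}$ (with $\sigma(\varepsilon)=q_0$ forcing the correct start). Since $\sigma$ produces a run of $\A$, we have $\sigma(a_1\dots a_i)\in\Delta(\sigma(a_1\dots a_{i-1}),a_i)$, so this is a legal move into $Q^{\leq 1}$; and since $\sigma$ is accepting on words of $L(\A)$, for every $r$ with $a_1\dots a_r\in L(\A)$ the state $\sigma(a_1\dots a_r)$ is accepting, so $X_r$ contains an accepting state. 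Hence Player~0 wins and $\w(\A)\leq 1$; since widths are at least $1$ by definition, $\w(\A)=1$. For the converse, suppose Player~0 has a winning strategy $\tau$ in $\Gw(\A,1)$. A play of this game is determined by the sequence of letters Player~1 chooses, so $\tau$ induces, for each finite word $u\in\Sigma^*$, a position $X_u\in Q^{\leq 1}$ reached after Player~1 plays out $u$. One checks that $X_u$ is never $\emptyset$: if $X_u=\emptyset$ were reachable then, extending $u$ to any word $u v\in L(\A)$ (which exists unless $u^{-1}L(\A)=\emptyset$, in which case emptiness is harmless as no constraint is triggered), the position would stay $\emptyset$ and fail to contain an accepting state, contradicting that $\tau$ is winning — so we may assume, by redirecting Player~0 arbitrarily into any nonempty successor whenever no constraint is ever violated, that $X_u=\{\sigma(u)\}$ for a well-defined $\sigma(u)\in Q$. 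Then $\sigma$ is a GFG strategy: it starts at $q_0$, produces a run because $\tau$'s moves respect $\Delta$, and is accepting on $L(\A)$ because $\tau$ is winning.

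The only mild subtlety — and the step I expect to require a sentence of care rather than real work — is the treatment of the empty set as a position in $\Gw(\A,1)$, since a GFG strategy is a genuine function into $Q$ with no ``dead'' value available. The clean fix is the observation that if $u^{-1}L(\A)=\emptyset$ then the winning condition imposes nothing on the play after $u$, so one can fill in $\sigma$ on such suffixes arbitrarily (say following any fixed run, or collapsing to a sink if one adds one); and if $u^{-1}L(\A)\neq\emptyset$ then reaching $\emptyset$ makes Player~0 lose, so a winning $\tau$ never does this on relevant prefixes. With that dispensed, the equivalence of the two notions is immediate from matching up the data. I would present this as a short two-direction argument, emphasising that the game $\Gw(\A,1)$ is literally the GFG game with states dressed as singletons.
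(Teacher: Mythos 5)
Your proof is correct and is exactly the argument the paper has in mind: the paper states this as a \emph{fact} without proof, precisely because $\Gw(\A,1)$ is the GFG game with states written as singletons, which is the identification you carry out. Your care about the $\emptyset$ position (equivalently, incompleteness of $\A$) is the one genuine subtlety, and your resolution — it only arises on prefixes with empty residual, where the winning condition is vacuous — is the right one.
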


\subsection{Partial powerset construction}
\label{sec:partial}
\newcommand{\Ak}{\A_k}

We give here a generalisation of the powerset construction, following the intuition of the width measure.

We define the $k$-subset construction of $\A$ to be the subset construction where the size of each set is bounded by $k$. Formally, it is the NFA $\Ak = (\Qk,\Sigma,\{q_0\},\Delta',F')$ where:
\begin{itemize}
\item $\Delta'(X,a) := 
\begin{cases}
    \{\Delta(X,a)\}						   & \text{if }|\Delta(X,a)| \leq k\\
    \{X'\mid X' \subseteq \Delta(X,a), |X'|=k\}              & \text{otherwise}
\end{cases}$
\item $F' := \{X\in\Qk \mid X\cap F \neq \emptyset\}$
\end{itemize}

\begin{rem}\label{rem:A1}
Notice that $\A_1$ is isomorphic to the automaton $\A$.
\end{rem}

\begin{lem}\label{lem:sizek}
The automaton $\Ak$ has less than $\dfrac{n^k}{(k-1)!}+1$ states.
\end{lem}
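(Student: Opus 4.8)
The statement is a counting bound on the number of states of $\Ak$. The states of $\Ak$ are subsets of $Q$ of size at most $k$, so the plan is simply to count these subsets and bound the sum $\sum_{i=0}^{k} \binom{n}{i}$ from above by $\frac{n^k}{(k-1)!}+1$. The key observation is that $\binom{n}{i} \le \frac{n^i}{i!} \le \frac{n^k}{i!}$ for $i \le k$ (using $n^i \le n^k$, valid since $n \ge 1$; one should note the degenerate case $n=0$ separately or assume $Q$ nonempty), so it suffices to show $\sum_{i=1}^{k} \frac{n^k}{i!} \le \frac{n^k}{(k-1)!}$, i.e. $\sum_{i=1}^{k} \frac{1}{i!} \le \frac{1}{(k-1)!}$, and then add $1$ for the $i=0$ term (the empty set).

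**Main step.** The inequality $\sum_{i=1}^{k} \frac{1}{i!} \le \frac{1}{(k-1)!}$ I would prove by observing it is equivalent, after multiplying through by $(k-1)!$, to $\sum_{i=1}^{k} \frac{(k-1)!}{i!} \le 1$. For $i \ge k-1$ the terms are $\frac{1}{(k-1)!/(k-1)!} = 1$ when $i=k-1$... wait, that already gives $1$, so this crude split fails. Instead I would bound more carefully: $\sum_{i=1}^{k}\frac{1}{i!} = 1 + \frac12 + \cdots + \frac{1}{k!}$, and compare termwise with a telescoping or geometric series dominated by $\frac{1}{(k-1)!}$. Concretely, for $k \ge 2$ one has $\frac{1}{(k-1)!} = \frac{k}{k!} = \frac{1}{k!} + \frac{k-1}{k!} \ge \frac{1}{k!} + \frac{1}{(k-1)!} \cdot \frac{k-1}{k}$, which by induction on $k$ unwinds to the desired bound; the base cases $k=1$ ($\frac{n^1}{0!}+1 = n+1$, and $\Ak = \A_1$ has $n+1 \le n+1$ states counting the empty set, or exactly $n$ by Remark \ref{rem:A1}) and $k=2$ are checked by hand.

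**Alternative cleaner route.** Rather than fight the induction, I would more likely argue: $\sum_{i=0}^{k}\binom{n}{i} = 1 + \sum_{i=1}^{k}\binom{n}{i}$, and for each $i\in[1,k]$, $\binom{n}{i} \le \frac{n^i}{i!} \le \frac{n^k}{i!}$. Now $\sum_{i=1}^{k}\frac{1}{i!} < \sum_{i=1}^{\infty}\frac{1}{i!} = e - 1 < 2$, but that only gives $\frac{2n^k}{?}$ — too weak. The tight constant $\frac{1}{(k-1)!}$ genuinely requires the observation that $\frac{1}{i!} \le \frac{1}{(k-1)!}\cdot\left(\text{something summing to }\le 1\right)$: take $\frac{1}{i!} = \frac{1}{(k-1)!}\cdot\frac{(k-1)!}{i!}$ and note $\sum_{i=1}^{k}\frac{(k-1)!}{i!} = (k-1)! \sum_{i=1}^k \frac{1}{i!}$; since $\frac{(k-1)!}{i!} \le \frac{1}{i(i-1)\cdots} \le 2^{-(i-1)}$ is false for small $i$, the honest bound is the induction above. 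So the real content is this short induction, and I would present it directly.

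**Expected obstacle.** There is no deep obstacle; the only subtlety is getting the constant exactly right so that the strict inequality "$<$" holds (the $+1$ and the empty-set term must be handled so that, e.g. for $k=1$, $\A_1$ with $n$ (or $n+1$) states is strictly below $\frac{n}{0!}+1 = n+1$), and making sure the argument degrades gracefully at $n=1$ and $k=1$. I would also explicitly remark that states of $\Ak$ are precisely elements of $\Qk$ (including possibly the empty set as a sink), so $|\Qk| = \sum_{i=0}^{k}\binom{n}{i}$, and then the inequality chain above finishes the proof.
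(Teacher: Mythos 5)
There is a genuine gap. Your counting setup ($|\Qk|=\sum_{i=0}^k\binom{n}{i}$, peel off the $i=0$ term, bound $\binom{n}{i}\le \frac{n^i}{i!}$) is fine, but the inequality you reduce everything to, namely $\sum_{i=1}^{k}\frac{1}{i!}\le\frac{1}{(k-1)!}$, is simply false for every $k\ge 2$: already at $k=2$ it reads $1+\frac12\le 1$. So the induction in your ``main step'' cannot close (its base case $k=2$ fails), and the ``alternative cleaner route'' runs into the same wall. The root cause is that the bound $\frac{n^i}{i!}\le\frac{n^k}{i!}$ is far too lossy for small $i$: the $i=1$ term alone contributes $n^k$, which already exceeds the target $\frac{n^k}{(k-1)!}$ as soon as $k\ge 3$. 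You sensed the trouble (``this crude split fails'') but the repair you propose still aims at a false statement.

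The missing ingredient is the hypothesis $k\le n$, which is built into the definition of the $k$-subset construction ($k\in[1,n]$) and which your argument never uses. With it, one bounds each term by $\frac{n^k}{k!}$ rather than $\frac{n^k}{i!}$: for $1\le i\le k$,
\[
\frac{n^i}{i!}\le\frac{n^k}{k!}
\quad\text{because}\quad
\frac{k!}{i!}=(i+1)(i+2)\cdots k\le k^{\,k-i}\le n^{\,k-i}.
\]
Then $\sum_{i=0}^k\binom{n}{i}\le 1+\sum_{i=1}^k\frac{n^k}{k!}=1+k\cdot\frac{n^k}{k!}=1+\frac{n^k}{(k-1)!}$, which is exactly the paper's argument. (As a sanity check on strictness: for $k=1$ the bound $n+1$ is attained by $|\Qk|=n+1$ rather than strictly exceeded, so the lemma's ``less than'' should really be ``at most''; but that is a quibble with the statement, not with the counting.)
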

\begin{proof}
The number of states of $\Ak$ is (at most) $|\Qk|=\sum_{i=0}^k \binom{n}{i}$. Using the fact that $\binom{n}{i}\leq \frac{n^i}{i!}$, we can bound the number of states of $\Ak$ by $\sum_{i=0}^k \frac{n^i}{i!}\leq \sum_{i=0}^k \frac{n^k}{k!}\leq 1+\sum_{i=1}^k \frac{n^k}{k!}=\frac{n^k}{(k-1)!}+1$.
\end{proof}

The following lemma shows the link between width and the $k$-powerset construction.
\begin{lem}\label{lem:AkGFG}
One has $\w(\A)\leq k$ if and only if $\Ak$ is GFG.
\end{lem}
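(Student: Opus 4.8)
The plan is to match a GFG strategy for $\Ak$ with a winning strategy for Player~0 in the game $\Gw(\A,k)$, in both directions, since both objects are really the same kind of thing: a way of choosing, letter by letter, a subset of size at most $k$ that is a subset of the $a$-successors of the current subset, with the guarantee that we sit on an accepting state exactly when the word read so far is in $L(\A)$. First I would observe that $L(\Ak)=L(\A)$: any run of $\Ak$ on a word $u$ projects to a family of runs of $\A$ (one can track, by an easy induction, that the reachable subset $X_i$ in $\Ak$ is always contained in $\Delta(\{q_0\},a_1\cdots a_i)$), and conversely a single accepting run of $\A$ can be followed inside $\Ak$ by keeping it in the tracked subset; hence $X_i\cap F\neq\emptyset$ is achievable along $u$ iff $u\in L(\A)$. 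This is the bridge that makes ``$X_r$ contains an accepting state whenever $a_1\cdots a_r\in L(\A)$'' equivalent to ``the run of $\Ak$ is at an accepting state whenever the prefix is in $L(\Ak)$''.

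For the direction $\w(\A)\le k \Rightarrow \Ak$ GFG: given a winning Player~0 strategy in $\Gw(\A,k)$, it maps a history of letters $a_1\cdots a_i$ together with the current subset to a next subset $X_{i+1}\subseteq\Delta(X_i,a_{i+1})$ of size at most~$k$. I would turn this into a GFG strategy $\sigma:\Sigma^*\to\Qk$ for $\Ak$ by simulating the game: $\sigma(\varepsilon)=\{q_0\}$, and $\sigma(a_1\cdots a_{i+1})$ is the subset Player~0 moves to after Player~1 plays the letters $a_1,\dots,a_{i+1}$ in order. The only subtlety is that in $\Ak$, when $|\Delta(X,a)|\le k$ the transition function is forced to be $\{\Delta(X,a)\}$, whereas the game allows Player~0 to move to any subset of $\Delta(X,a)$. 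This is harmless: replacing Player~0's move by the full set $\Delta(X_i,a_{i+1})$ in that case only enlarges the tracked subset, which can only help (an accepting state present in $X_{i+1}$ is still present in $\Delta(X_i,a_{i+1})$), and the enlarged set is still winning — formally one checks that the ``monotone closure'' of a winning strategy, always taking the largest legal subset, is still winning. So without loss of generality the winning strategy already respects $\Delta'$, and $\sigma$ is a genuine GFG strategy for $\Ak$: it produces a run of $\Ak$ on every $u$, and that run is at an accepting state whenever the prefix is in $L(\A)=L(\Ak)$.

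For the converse, $\Ak$ GFG $\Rightarrow \w(\A)\le k$: a GFG strategy $\sigma:\Sigma^*\to\Qk$ for $\Ak$ directly yields a Player~0 strategy in $\Gw(\A,k)$ — when Player~1 has played $a_1\cdots a_{i+1}$, Player~0 moves to $\sigma(a_1\cdots a_{i+1})$. The run of $\Ak$ being valid gives $\sigma(a_1\cdots a_{i+1})\in\Delta'(\sigma(a_1\cdots a_i),a_{i+1})$, which in both cases of the definition of $\Delta'$ implies $\sigma(a_1\cdots a_{i+1})\subseteq\Delta(\sigma(a_1\cdots a_i),a_{i+1})$ and $|\sigma(a_1\cdots a_{i+1})|\le k$, so the move is legal. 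The GFG acceptance guarantee says $\sigma(a_1\cdots a_r)$ contains an accepting state of $\Ak$, i.e.\ meets $F$, whenever $a_1\cdots a_r\in L(\Ak)=L(\A)$, which is precisely Player~0's winning condition in $\Gw(\A,k)$. Finally, $\w(\A)\le k$ is by definition the existence of such a winning strategy. I expect the only real work, rather than routine bookkeeping, to be the $L(\Ak)=L(\A)$ claim (the inductive invariant $X_i\subseteq\Delta(\{q_0\},a_1\cdots a_i)$ together with the fact that a single accepting run survives the size-$k$ pruning) and the ``enlarging the tracked set preserves winning'' observation used in the first direction; both are short but are the load-bearing points.
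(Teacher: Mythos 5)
Your proof is correct and follows essentially the same route as the paper, which simply asserts that winning strategies in $\Gw(\A,k)$ correspond to GFG strategies for $\Ak$. You rightly flag (and correctly repair, via the monotone-closure argument) the one point this correspondence glosses over, namely that $\Delta'$ forces the full successor set when it has size at most $k$ while the game permits smaller subsets.
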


\begin{proof}
Winning strategies in $\Gw(\A,k)$ are in bijection with GFG strategies for $\Ak$.
\end{proof}

\subsection{GFG automata on finite words}
We recall here results on GFG automata on finite words. 

We start with a lemma characterizing GFG strategies. Let $\A=(Q,\Sigma,q_0,\Delta,F)$ be an NFA recognising a language $L$, and $\sigma:\Sigma^*\to Q$ be a potential GFG strategy. For any $q\in Q$, we denote $L(q)$ the language accepted from $q$ in $\A$, i.e. $L(q)$ is the language of $\A$ with $q$ as initial state.

\begin{lem}\label{lem:quotient}
$\sigma$ is a GFG strategy if and only if for all $u\in\Sigma^*$, $L(\sigma(u))=u^{-1}L$.
\end{lem}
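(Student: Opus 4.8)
The plan is to prove both directions directly from the definitions. For the ``if'' direction, suppose $L(\sigma(u)) = u^{-1}L$ for all $u \in \Sigma^*$. First I would check that $\sigma$ actually produces a run: for $u \in \Sigma^*$ and $a \in \Sigma$, I need $\sigma(ua) \in \Delta(\sigma(u), a)$. The key observation is that $u^{-1}L = \bigcup_{q \in \Delta(q_0,u)} L(q)$ when we only track reachable states, but more to the point, $a^{-1}(u^{-1}L) = (ua)^{-1}L$, so $L(\sigma(ua)) = (ua)^{-1}L = a^{-1}(u^{-1}L) = a^{-1}L(\sigma(u)) = \bigcup_{q \in \Delta(\sigma(u),a)} L(q)$. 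This shows $L(\sigma(ua)) \subseteq \bigcup_{q \in \Delta(\sigma(u),a)} L(q)$; to conclude $\sigma(ua) \in \Delta(\sigma(u),a)$ I would need $\sigma$ to be chosen among successors, which is part of what ``potential GFG strategy'' should mean, or I argue that the hypothesis forces it (e.g. if $\sigma(ua)$ were not a successor, one can still have the language equality in degenerate cases, so I should state the convention that $\sigma$ ranges over successors, or restrict attention to that). Granting the run condition, acceptance is immediate: if $u = a_1 \cdots a_n \in L$, then $\varepsilon \in u^{-1}L = L(\sigma(u))$, so $\sigma(u) \in F$, hence the run $\sigma(\varepsilon)\sigma(a_1)\cdots\sigma(u)$ is accepting.

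For the ``only if'' direction, suppose $\sigma$ is a GFG strategy. The inclusion $L(\sigma(u)) \subseteq u^{-1}L$ is easy and holds for any run: if $v \in L(\sigma(u))$, pick an accepting run from $\sigma(u)$ on $v$; prepending the run $\sigma(\varepsilon) \cdots \sigma(u)$ gives a run of $\A$ on $uv$ ending in $F$, so $uv \in L$, i.e. $v \in u^{-1}L$. The interesting inclusion is $u^{-1}L \subseteq L(\sigma(u))$: take $v \in u^{-1}L$, so $uv \in L$. Since $\sigma$ is a GFG strategy and $uv \in L(\A)$, the run it builds on $uv$ is accepting; but that run, read from position $|u|$ onward, starts at $\sigma(u)$ and is an accepting run of $\A$ on $v$ (here I use that $\sigma$ is prefix-determined, so the state reached after reading $u$ along the word $uv$ is exactly $\sigma(u)$, independent of the continuation $v$). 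Hence $v \in L(\sigma(u))$.

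The main subtlety — the only real obstacle — is the edge case where $\sigma(u)$ is a state from which $\A$ is ``stuck'' or $\Delta(\sigma(u),a) = \emptyset$ while $u^{-1}L$ is still nonempty through some other reachable state; this is precisely why GFG-ness is a genuine constraint and not automatic, and the argument above shows it cannot happen for a genuine GFG strategy. I would make sure to state at the outset the convention that a ``potential GFG strategy'' $\sigma$ satisfies $\sigma(\varepsilon) = q_0$ and $\sigma(ua) \in \Delta(\sigma(u),a)$ whenever the latter is nonempty, so that the only thing left to verify is the acceptance/language-coverage condition, which is exactly the content of the stated equivalence.
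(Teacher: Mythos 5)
Your proof is correct and follows essentially the same route as the paper's: the easy inclusion $L(\sigma(u))\subseteq u^{-1}L$ from concatenating runs, the reverse inclusion from the suffix of the accepting run built by $\sigma$ on $uv$, and the converse direction by noting $u\in L$ forces $\varepsilon\in L(\sigma(u))$, i.e.\ $\sigma(u)\in F$. Your extra care about the convention that a ``potential GFG strategy'' already satisfies the run condition $\sigma(ua)\in\Delta(\sigma(u),a)$ is a reasonable clarification of what the paper leaves implicit, but does not change the argument.
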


\begin{proof}
%
Assume $\sigma$ is a GFG strategy, and let $u\in\Sigma^*$. Let $q=\sigma(u)$. It is clear that $L(q)\subseteq u^{-1}L$, as any run accepting $v$ from $q$ is a witness that $uv\in L$ (together with the run on $u$ reaching $q$ from $q_0$). We therefore have to show that for all $v\in u^{-1}L$, we have $v\in L(q)$. For this, recall that $\sigma$ is a GFG strategy, so $\sigma(uv)\in F$. Since $\sigma(u)=q$, there is an accepting run starting in $q$ and labelled by $v$, showing $v\in L(q)$.

Conversely, assume that for any $u\in\Sigma^*$, $L(\sigma(u))=u^{-1}L$. In particular, it means, that for any $u\in L$ we have $\varepsilon\in L(\sigma(u))$, so $\sigma(u)$ is an accepting state. This implies that $\sigma$ is indeed a GFG strategy.
\end{proof}

We now go to the main result of this section. This result on DBP automata has first been proved in \cite{AKL09}, and then a more general version allowing lookahead was proved using a game-based approach in \cite{LR13}. The link between GFG and DBP automata on finite words was first mentioned in \cite{Col12}.

\begin{thm}\label{thm:loding} \cite{AKL09,Col12,LR13}
An NFA $\A$ is GFG if and only if it is DBP. Moreover, there is a quadratic algorithm that determines whether an NFA is GFG, and in the positive cases builds an equivalent DFA by removing transitions.
\end{thm}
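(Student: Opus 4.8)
The plan is to prove the two directions separately, then extract the algorithm.

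\medskip

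\textbf{Approach.} One direction is immediate from the definitions recalled in the excerpt: if $\A$ is DBP, then pruning transitions yields an equivalent DFA, and the unique run of that DFA is a GFG strategy for $\A$ (this is exactly the remark ``every DBP automaton is GFG'' already in the paper, which holds on finite words too). So the content is in the converse: \emph{GFG implies DBP}. First I would invoke Lemma~\ref{lem:quotient}: if $\A$ is GFG via a strategy $\sigma$, then for every $u\in\Sigma^*$, $L(\sigma(u))=u^{-1}L$ where $L=L(\A)$. The key observation is that only finitely many left quotients $u^{-1}L$ exist (since $L$ is regular), and more importantly that $\sigma$ can be assumed to depend only on the \emph{state reached}, not the full history. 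To make this precise I would argue that $\sigma$ may be taken ``memoryless'' in the following sense: define an equivalence on prefixes by $u\sim u'$ iff $\sigma(u)=\sigma(u')$, and show we can redefine $\sigma$ consistently so that $\sigma(ua)$ depends only on $\sigma(u)$ and $a$. Concretely, pick for each state $q$ appearing in the image of $\sigma$ a representative prefix $u_q$ with $\sigma(u_q)=q$; since $L(q)=u_q^{-1}L$, the value $\sigma(u_q a)$ is some state $q'$ with $L(q')=(u_qa)^{-1}L=a^{-1}L(q)$, and this $q'$ is a legal $a$-successor of $q$ in $\Delta$. Setting $\delta(q,a):=q'$ gives a deterministic transition function on the reachable states, defined by pruning $\Delta$, and a straightforward induction shows the resulting DFA accepts exactly $L$.

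\medskip

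\textbf{Key steps, in order.} (1) DBP $\Rightarrow$ GFG: one line, from the definitions. (2) GFG $\Rightarrow$ DBP: from $\sigma$ and Lemma~\ref{lem:quotient}, obtain that each $\sigma(u)$ has right language $u^{-1}L$. (3) Build a memoryless version: choose representatives, define $\delta(q,a)=\sigma(u_qa)$, check $\delta(q,a)\in\Delta(q,a)$ and $L(\delta(q,a))=a^{-1}L(q)$. (4) Verify by induction on $|u|$ that the run of the pruned automaton on $u$ reaches a state with right language $u^{-1}L$, hence is accepting iff $u\in L$; conclude the pruned automaton is a DFA equivalent to $\A$, i.e.\ $\A$ is DBP. (5) The algorithm: to test GFG, it suffices by the above to search for such a $\delta$; equivalently, compute for each state $q$ its right language $L(q)$ up to equivalence (via the minimal DFA of $L$, or by checking language equivalence of states, which is polynomial), then check that the initial state has right language $L$ and that for each reachable $q$ and each $a$ there is \emph{some} $p\in\Delta(q,a)$ with $L(p)=a^{-1}L(q)$; greedily selecting such $p$ prunes $\A$ to a DFA. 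Bounding the language-equivalence checks gives the quadratic running time.

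\medskip

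\textbf{Main obstacle.} The delicate point is step (3): a priori a GFG strategy $\sigma$ may genuinely use the history, so one must argue it can be replaced by a state-based (memoryless) one without losing the GFG property. The crux is that Lemma~\ref{lem:quotient} pins down $L(\sigma(u))$ solely in terms of the \emph{word} $u$ read so far — indeed solely in terms of $u^{-1}L$ — so two histories reaching the same state are interchangeable for all future purposes, and this is precisely what licenses the memoryless redefinition. Making this interchangeability argument airtight (that redefining $\sigma$ on one $\sim$-class does not break the run-building property elsewhere, which is why one fixes representatives once and for all rather than locally) is the only real subtlety; everything else is bookkeeping. For the complexity claim I would cite that right-language equivalence of two states of an NFA is decidable in polynomial time here because we are comparing against quotients of a fixed regular language, or alternatively pass through the minimal DFA of $L$ and track which minimal-DFA state each NFA state ``simulates''; the quadratic bound then follows from iterating over states and letters. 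I would mention that the full lookahead-tolerant and more refined algorithmic statements are in \cite{AKL09,LR13} and only sketch the self-contained argument here.
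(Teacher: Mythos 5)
The paper does not actually prove this theorem --- it is imported from \cite{AKL09,Col12,LR13} --- so there is no internal proof to compare against. Judged on its own merits, your argument for the equivalence GFG $\Leftrightarrow$ DBP is correct and is essentially the standard one: fixing a representative prefix $u_q$ for each state in the image of $\sigma$, the map $\delta(q,a):=\sigma(u_q a)$ is a legal $a$-successor of $q$ with $L(\delta(q,a))=a^{-1}L(q)$ by Lemma~\ref{lem:quotient}, and the induction showing the pruned automaton tracks the residuals $u^{-1}L$ goes through. Your worry about redefinitions interfering across $\sim$-classes is resolved exactly as you say, by fixing representatives once.

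The genuine gap is in step~(5), the algorithm. First, greedy selection of a residual-correct successor can dead-end even when $\A$ is DBP: take $L=\Sigma^*$ over $\Sigma=\{a,b\}$ with an accepting initial state $q_0$ looping on itself, plus a second $a$-successor $p_1$ of $q_0$ with $L(p_1)=\Sigma^*$ but whose $a$-successors $r_1,r_2$ satisfy $L(r_1)\cup L(r_2)=\Sigma^*$ while neither equals $\Sigma^*$. The automaton is DBP (keep the self-loops on $q_0$), yet a greedy run that picks $p_1$ gets stuck, so your test gives a false negative. The choice among states with the same residual matters, and the correct procedure is a greatest-fixpoint (safety-game) computation over pairs, not a greedy commitment. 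Second, the complexity claim does not follow from your route: deciding $L(p)=a^{-1}L(q)$ for states of an NFA is PSPACE-hard in general, and the minimal DFA of $L$ can be exponentially larger than $\A$, so ``compute the right languages up to equivalence'' is neither quadratic nor polynomial. The quadratic algorithm of the cited works avoids residual computations altogether and instead solves a safety game played on $Q\times Q$ (Spoiler moves a token in $\A$ while choosing letters, Duplicator moves her own token, and loses if Spoiler's token is accepting while hers is not); deferring to \cite{AKL09,LR13} for this, as you propose in your last sentence, is the right move, but the self-contained algorithm you sketch in its place is not a valid substitute.
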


\subsection{Incremental determinisation procedure}
\label{sec:progNFA}

We can now describe an incremental determinisation procedure, aiming at saving resources in the search of a deterministic automaton. In the process, we also compute the width of the input NFA.

The algorithm goes as follows:

\begin{algorithm}[h]
\KwIn{NFA $\A$}
\KwOut{$\w(\A)$ and DFA $\D$ equivalent to $\A$}
 $k:=1$\;
\While{$\Ak$ is not GFG}{
$k:=k+1$\;
Construct $\Ak$\;
} Compute an equivalent DFA $\D$ from $\Ak$ by removing transitions\;
 Return $k,\D$\;
    \caption{{\bf Incremental NFA determinisation} \label{alg:incrNFA}}
\end{algorithm}
The usual determinisation procedure uses the full powerset construction, i.e. assumes that we are in the case of maximal width. Once a deterministic automaton has been obtained, be it by full determinization or via our incremental approach, it can be minimized easily.

Our method here is to approach this powerset construction ``from below'', and incrementally increase the width until we find the good one. In some cases, this allows to compute directly a smaller automaton, and avoids using the full powerset construction of exponential state complexity as an intermediary step.

For an NFA with $n$ states and width $k$, the complexity of this algorithm is in $O\big(\frac{n^{2k}}{(k-1)!^2}\big)$, by Lemma \ref{lem:sizek} and Theorem \ref{thm:loding}.
%
\begin{exa}\label{ex:width2}
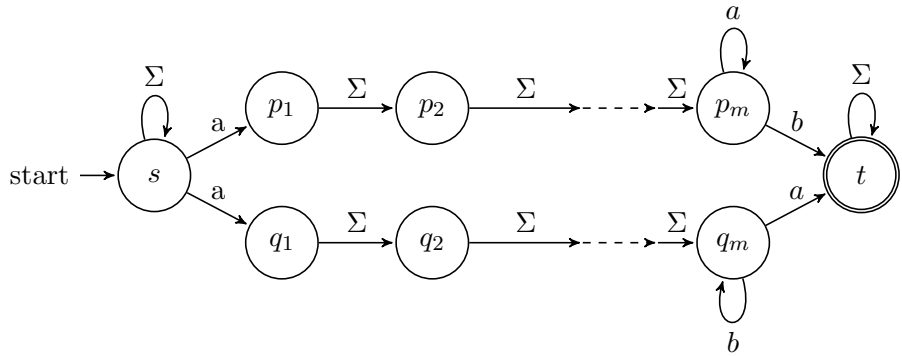
\begin{figure}[ht]
\centering
\begin{tikzpicture}[->,>=stealth',shorten >=1pt,auto,node distance=2cm,
                    semithick]
  \tikzstyle{every state}=[fill=none,draw=black,text=black]

  \node[initial,state] (A)                    {$s$};
  \node[state]         (B1) [above right= .2cm and 1cm of A] 	  {$p_1$};
  \node[state]         (C1) [right of=B1] 	  {$p_2$};

  \coordinate[right of=C1] (d1);
  \coordinate[right=1cm of d1] (d1');
  \node[state]         (D1) [right of=d1] 	  {$p_m$};

\node[state]         (B2) [below right= .2cm and 1cm of A] 	  {$q_1$};
  \node[state]         (C2) [right of=B2] 	  {$q_2$};

  \coordinate[right of=C2] (d2);
  \coordinate[right=1cm of d2] (d2');
  \node[state]         (D2) [right of=d2] 	  {$q_m$};

  \node[state,accepting]         (A2) [above right=.2cm and 1 cm of D2] 	  {$t$};

 \path (A) edge 	node [above] {a} (B1)
	 (B1) edge 	node [above] {$\Sigma$} (C1)
	 (C1) edge 	node[above] {$\Sigma$} (d1)
	  (d1) edge[dashed]  (d1')
	 (d1') edge	node [above] {$\Sigma$} (D1)
	 (D1) edge 	node [above] {$b$} (A2)

	(A) edge 	node [above] {a} (B2)
	 (B2) edge 	node [above] {$\Sigma$} (C2)
	 (C2) edge 	node[above] {$\Sigma$} (d2)
	 (d2) edge[dashed]  (d2')
	 (d2') edge	node [above] {$\Sigma$} (D2)
	 (D2) edge 	node [above] {$a$} (A2)

	(A) edge [loop above] node {$\Sigma$} (A)
	(D1) edge [loop above] node {$a$} (D1)
	(D2) edge [loop below] node {$b$} (D2)
	(A2) edge [loop above] node {$\Sigma$} (A2)
;
\end{tikzpicture}
\caption{Example: 2-subset construction is enough}\label{fig:M4}
\end{figure}
We take an example in Figure \ref{fig:M4}. Here the language recognised by this automaton is $L(\A) = \Sigma^*a\Sigma^{\geq m}$, and it has width 2. Indeed, the automaton $\A_2$ is DBP, and can be pruned to keep only states $\{s\}, \{p_1,q_1\},\dots\{p_m,q_m\},\{t\}$ (so getting rid of states such as $\{s,p_1\}$) while still recognizing $L(\A)$. Therefore, our determinisation procedure uses time $O(n^4)$ and builds an intermediary DFA $\A_2$ of size $O(n^2)$, while a classical determinisation via powerset construction would build an exponential-size DFA. The pruning process of the DBP automaton $\A_2$ yields here the minimal DFA of size $m+2=O(n)$.
\end{exa}

\begin{figure}[ht]
\centering
\begin{tikzpicture}[->,>=stealth',shorten >=1pt,auto,node distance=2cm,
                    semithick]
  \tikzstyle{every state}=[fill=none,draw=black,text=black,inner sep = 0pt,minimum size=.7cm]

  \node[initial,state] (A)                    {$s$};

  \node[state]         (C1) [above right =.5cm and 2cm of A] 	  {$p_2$};
  \node[state]         (B1) [above =1cm of C1] 	  {$p_1$};
  \node        (D1) [below =1cm of C1] 	  {$\vdots$};
  \node[state]         (E1) [below =1cm of D1] 	  {$p_n$};

%
%

  \node[state,accepting]         (A2) [below right=.5cm and 2cm of C1] 	  {$t$};

 \path (A) edge 	node [above] {$\Sigma$} (B1)
	 (A) edge 	node [above] {$\Sigma$} (C1)
	 (A) edge[dotted]  (D1)
	 (A) edge 	node[above] {$\Sigma$} (E1)
%
%
	(B1) edge  node[above=.1cm] {$a_1$} (A2)
	(C1) edge  node[above] {$a_2$} (A2)
	(D1) edge[dotted]  (A2)
	(E1) edge node[above=.07cm] {$a_n$} (A2)
;
\end{tikzpicture}
\caption{Example: Subset construction can be efficient}\label{fig:Subset_const_isbetter}
\end{figure}
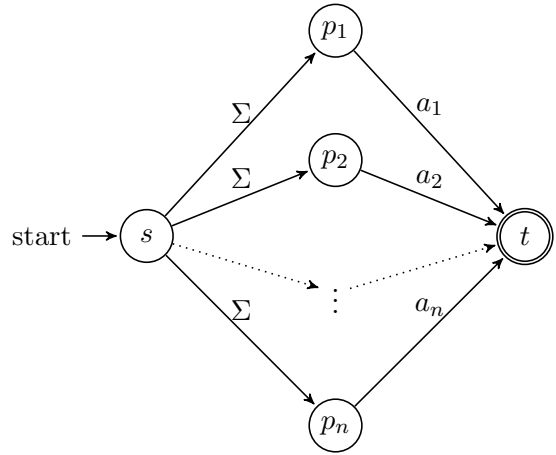

But in some other cases, the powerset construction is actually more efficient than the $k$-powerset construction, in terms of number of reachable states. For instance consider an example where the alphabet is $\Sigma=\{a_1,a_2,\dots,a_n\}$ and the automaton has $n+2$ states: one initial state, one final state and $n$ other transition states, as shown in Figure \ref{fig:Subset_const_isbetter}. The transition relation is defined as in the picture.
On this example, the automaton obtained from subset construction has only $3$ states whereas for any $k$, the automaton obtained by $k$-subset construction will have $\binom{n}{k}+2$ states. 
This example illustrates that sometimes the powerset construction can actually be more efficient than the $k$-powerset construction, and the incremental $k$-powerset construction is not necessarily increasing in terms of number of states as $k$ grows.  It would therefore be interesting to be able to either run the two methods in parallel, or guess which one is more effiicent based on the shape of the input NFA.

\section{Width versus Ambiguity}\label{sec:ambig}


In this section, we recall a useful notion of automata, namely \emph{ambiguity} from \cite{Leung98} and investigate its relation with the notion of \emph{width} in the form of examples.

\newcommand{\amb}{\mathit{amb}}

\begin{defi}
Given an NFA $\A$ and a word $w$, the ambiguity of $w$ is the number of different accepting paths for $w$ in $\A$.
\end{defi}

Note that a word is accepted by $\A$ if and only if the ambiguity of the word is non-zero. $\A$ is called \emph{unambiguous} if ambiguity of any word is either zero or one. $\A$ is called \emph{finitely} (resp. \emph{polynomially, exponentially}) ambiguous if there exists a constant (resp. polynomial, exponential) function $f$  such that the ambiguity of any word of length $n$ is bounded by $f(n)$.
%

For example, every DFA is unambiguous since every word accepted by a DFA has a unique accepting run. Some more illustrated examples are given in this section, showing that width and ambiguity can vary independently from each other in NFAs.

\subsection{Width 1, Exponentially ambiguous}
Consider the following NFA accepting all words in $\Sigma^*$.\\

\centering
\begin{tikzpicture}[->,>=stealth',shorten >=1pt,auto,node distance=2.8cm,
                    semithick]
  \tikzstyle{every state}=[fill=none,draw=black,text=black]

  \node[initial,state,accepting] (A)                    {$q_1$};
  \node[state,accepting]         (B) [right of=A] 	  {$q_2$};
\path 	(A) edge [loop above] node {$\Sigma$} (A)
	(B) edge [loop above] node {$\Sigma$} (B)
	(A) edge [bend left]	node [above] {$\Sigma$} (B)
	(B) edge [bend left]	node [below] {$\Sigma$} (A);
\end{tikzpicture}
\justify
The above automaton is exponentially ambiguous but not polynomially ambiguous. Indeed each word of length $n$ has $2^n$ accepting runs. However it has width $1$, since it suffices to stay in $q_1$ to produce an accepting run.

\subsection{Width n, Unambiguous}
Consider the following NFA $\A_n$, recognizing the language $\Sigma^*0\Sigma^{n-1}$.

\centering
\begin{tikzpicture}[->,>=stealth',shorten >=1pt,auto,node distance=2.8cm,
                    semithick]
  \tikzstyle{every state}=[fill=none,draw=black,text=black]

  \node[initial,state] (A)                    {$q_0$};
  \node[state]         (B) [right of=A] 	  {$q_1$};
  \node[state]         (C) [right of=B] 	  {$q_2$};

  \coordinate[right of=C] (d1);
  \node[state,accepting]         (D) [right of=d1] 	  {$q_n$};

 \path (A) edge 	node [above] {0} (B)
	 (B) edge 	node [above] {$\Sigma$} (C)
	 (C) edge 	node[above] {$\Sigma$}++(2,0)
	 (d1)++(-2,0) edge[dashed] 	node [above] {} ++(3,0)
	 (d1)++(1,0) edge	node [above] {$\Sigma$} (D)
	(A) edge [loop above] node {$\Sigma$} (A)
;
\end{tikzpicture}
\justify Every word which is in the language of this automaton is accepted by a unique run of $\A_n$. Therefore, it is an unambiguous automaton.\\ But one can show that the minimal DFA for $\A_n$ has exactly $2^n$ states. By Theorem \ref{thm:loding}, this implies that this automaton has width $n$. Indeed, if the width was $k<n$, we could build a deterministic automaton with strictly less than $2^n$ states.\\
More precisely, this automaton has $n+1$ states and width $n$, so this is an example of an NFA $\A_n$ of width $|\A_n|-1$.

\subsection{Width n, Exponentially ambiguous}

Consider the following NFA $\A_n$ , recognizing to the language $L_n = (0+(01^*)^{n-1}0)^*$.\\

\centering
\begin{tikzpicture}[->,>=stealth',shorten >=1pt,auto,node distance=2.8cm,
                    semithick]
  \tikzstyle{every state}=[fill=none,draw=black,text=black]

  \node[initial,state,accepting] (A)                    {$q_1$};
  \node[state]         (B) [right of=A] 	  {$q_2$};
  \node[state]         (C) [right of=B] 	  {$q_3$};

  \coordinate[right of=C] (d1);
  \node[state]         (D) [right of=d1] 	  {$q_n$};

 \path (A) edge 	node [above] {0} (B)
	 (B) edge 	node [above] {0} (C)
	 (C) edge 	node[above] {0}++(2,0)
	 (d1)++(-2,0) edge[dashed] 	node [above] {} ++(3,0)
	 (d1)++(1,0) edge	node [above] {0} (D)
	 (D) edge[bend left] 	node [below] {0} (A)
	(A) edge [loop above] node {0} (A)
	(B) edge [loop above] node {1} (B)
	(C) edge [loop above] node {1} (C)
	(D) edge [loop above] node {1} (D)
;
\end{tikzpicture}
\justify It is shown in \cite{Leung98} that $\A_n$ is exponentially ambiguous but not polynomially ambiguous, and that any DFA (actually any polyniomally ambiguous NFA) recognising $L_n$ must have $2^n-1$ states. Therefore, $\A_n$ has width $n$ by Theorem \ref{thm:loding}, as in the previous example.

\section{Relation with multipebble simulations}\label{sec:sim}

\subsection{Multipebble Simulation}

Let $\A=(Q_\A,\Sigma,q_\A^0,\Delta_\A,F_\A)$ and $\B=(Q_\B,\Sigma,q_\B^0,\Delta_\B,F_\B)$ be NFAs, and $k$ be a positive integer.
The $k$-simulation game $\Gk(\A,\B)$ between Spoiler and Duplicator is defined as follows.

The game is played on arena $Q_\A\times (Q_\B)^{\leq k}$.
The initial position is $(q_\A^0, \{q_\B^0\})$.

A round from position $(p,X)$ consists in the following moves:
\begin{itemize}
\item Spoiler plays a transition $(p,a,p')\in\Delta_\A$
\item Duplicator chooses $X'\subseteq\Delta_\B(X,a)$, with $|X'|\leq k$.
\item the game moves to position $(p',X')$.
\end{itemize}
\noindent
A position $(p,X)$ is winning for Spoiler if $p\in F_\A$ but $X\cap F_\B=\emptyset$. Duplicator wins any play avoiding positions that are winning for Spoiler.

\begin{defiC}[\cite{Ete02}]
The $k$-simulation relation $\A\inck \B$ is said to hold if Duplicator wins $\Gk(\A,\B)$.
\end{defiC}

We visualize positions of the game via \emph{pebbles}: in a position $(p,X)$, Spoiler has a pebble in the state $p$ of $\A$, while Duplicator has pebbles in each state of the set $X$. Notice that according to the definition of the game, Duplicator can duplicate pebbles while erasing some others, as long as it owns at most $k$ pebbles at every step. In other words it is not required that each pebble follows a particular run of the automaton.
The relations $\inck$ can be used to approximate inclusion. Indeed, we have for any NFAs $\A,\B$ \cite{Ete02}:

$$\A\inc_1\B \Rightarrow \A\inc_2\B\Rightarrow \dots \Rightarrow \A\inc_{|\B|}\B \Leftrightarrow L(\A)\subseteq L(\B).$$

Moreover, for fixed $k$ the $\inck$ relation can be computed in polynomial time:

\begin{thm}\label{thm:multiexp} \cite{Ete02}
There is an algorithm with inputs $\A,\B,k$ deciding whether $\A\inck\B$ with time complexity $n^{O(k)}$, where $n=|\A|+|\B|$.
\end{thm}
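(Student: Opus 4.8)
The plan is to reduce deciding $\A \inck \B$ to solving a safety (equivalently, reachability) game played on the arena $Q_\A \times (Q_\B)^{\leq k}$ described in the definition of $\Gk(\A,\B)$, and then bound the size of that arena. First I would observe that the game $\Gk(\A,\B)$ is, after a trivial reformulation, a safety game for Duplicator: the ``bad'' positions are exactly those $(p,X)$ with $p \in F_\A$ and $X \cap F_\B = \emptyset$, and Duplicator wins precisely the plays that never visit a bad position. Dually, Spoiler wins iff Spoiler can force a visit to a bad position, so from Spoiler's side this is a reachability game with target set the bad positions. Such games are solved in time linear in the size of the game graph (number of positions plus number of edges) by the standard attractor computation: iteratively compute the set of positions from which Spoiler can force reaching a bad position, and check whether the initial position $(q_\A^0,\{q_\B^0\})$ lies in this attractor. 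Positional determinacy of reachability/safety games guarantees that this decision is correct, i.e.\ Duplicator wins $\Gk(\A,\B)$ iff the initial position is outside Spoiler's attractor.

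Next I would bound the size of the arena. The set of positions is $Q_\A \times (Q_\B)^{\leq k}$; writing $n = |\A| + |\B|$, we have $|Q_\A| \leq n$ and, by the counting already used in Lemma \ref{lem:sizek}, $|(Q_\B)^{\leq k}| = \sum_{i=0}^{k}\binom{|Q_\B|}{i} \leq \sum_{i=0}^{k}\binom{n}{i} = n^{O(k)}$. Hence the number of positions is $n^{O(k)}$. For the edges: from a position $(p,X)$, Spoiler has at most $|\Sigma|\cdot |\Delta_\A| = n^{O(1)}$ choices of transition $(p,a,p')$, and then Duplicator has at most $|(Q_\B)^{\leq k}| = n^{O(k)}$ choices of $X' \subseteq \Delta_\B(X,a)$ with $|X'|\leq k$; so the number of edges is also $n^{O(k)}$. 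Building this arena explicitly — enumerating positions, and for each Spoiler move enumerating the admissible Duplicator responses — takes time $n^{O(k)}$, since each individual position and edge is constructed in time polynomial in $n$ (checking $X' \subseteq \Delta_\B(X,a)$ and $|X'|\leq k$ is easy). The attractor computation then runs in time linear in the arena size, hence $n^{O(k)}$ overall.

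The only mild subtlety — and the point I would be most careful about — is making sure the game is genuinely finite-state and that a play is well-defined even when a player has no legal move: if $\Delta_\A(p,a) = \emptyset$ for all $a$, Spoiler is stuck, and one should fix a convention (e.g.\ Spoiler stuck means Duplicator wins that play, which is consistent with ``Duplicator wins any play avoiding positions winning for Spoiler''). Once such edge cases are pinned down, correctness is immediate from positional determinacy of safety games, and the complexity bound follows directly from the two counting estimates above. I do not expect any real obstacle here; the theorem is essentially a packaging of ``simulation = safety game on a polynomially-in-$k$-exponential arena'' together with the standard linear-time algorithm for safety games.
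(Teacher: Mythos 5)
Your argument is correct: the paper itself gives no proof of this statement (it is quoted from \cite{Ete02}), and your reconstruction---casting $\Gk(\A,\B)$ as a safety game for Duplicator on the arena $Q_\A\times(Q_\B)^{\leq k}$ of size $n^{O(k)}$, enumerating the $n^{O(k)}$ edges, and solving by a linear-time attractor computation using positional determinacy---is exactly the standard route and contains no gaps. Your side remarks are also handled correctly: Duplicator is never stuck since $X'=\emptyset$ is always a legal response, and the only convention to fix is for a stuck Spoiler, which you resolve consistently with the paper's winning condition.
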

\noindent
We show in Section \ref{sec:exptime} that this problem is EXPTIME-complete, so this algorithm is optimal in the sense that it cannot avoid the exponent in $k$.

\subsection{Width versus $k$-simulation}
\label{sec:simul}
Links between width and $k$-simulation relations are explicited by the following two lemmas.

The first one shows that knowing the width of an NFA allows to use multipebble simulation to test for real inclusion of languages.

\begin{lem}\label{lem:iNCAidth}
Let $\A,\B$ be NFAs and $k=\w(\B)$. Then $L(\A)\subseteq L(\B)$ if and only if $\A\inck \B$.
\end{lem}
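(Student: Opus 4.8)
The plan is to prove the two implications separately, using Lemma~\ref{lem:AkGFG} (which gives that $\B_k$ is GFG since $k=\w(\B)$) and Lemma~\ref{lem:quotient} (the characterization of GFG strategies via left quotients) as the main tools. The easy direction is from right to left: since we always have $\A\inc_1\B\Rightarrow\A\inck\B$ and, more to the point, $\A\inck\B\Rightarrow\A\inc_{|\B|}\B\Leftrightarrow L(\A)\subseteq L(\B)$ by the chain of implications recalled from \cite{Ete02}, the relation $\A\inck\B$ immediately yields $L(\A)\subseteq L(\B)$, regardless of the value of $k$. So the content of the lemma is entirely in the forward direction.

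For the forward direction, suppose $L(\A)\subseteq L(\B)$; I want to produce a winning strategy for Duplicator in $\Gk(\A,\B)$. The idea is to let Duplicator play the GFG strategy of $\B_k$. More precisely, since $\B_k$ is GFG by Lemma~\ref{lem:AkGFG}, fix a GFG strategy $\sigma:\Sigma^*\to (Q_\B)^{\leq k}$ for $\B_k$; note a state of $\B_k$ is a set $X\in(Q_\B)^{\leq k}$, and the transitions of $\B_k$ are exactly the moves available to Duplicator in $\Gk(\A,\B)$ (a move from $X$ on letter $a$ goes to some $X'\subseteq\Delta_\B(X,a)$ with $|X'|\leq k$, matching $\Delta'$ up to the cases in the definition of $\B_k$ — and one checks that any subset of $\Delta_\B(X,a)$ of size $k$ is reachable, so Duplicator loses nothing by committing to $\sigma$). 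In a play of $\Gk(\A,\B)$ where Spoiler's moves spell out a word $u=a_1a_2\cdots$, Duplicator maintains $X_r=\sigma(a_1\cdots a_r)$. I then need to check the winning condition: whenever the Spoiler pebble is in $p\in F_\A$ after reading $a_1\cdots a_r$, the set $X_r$ meets $F_\B$. But $p\in F_\A$ and Spoiler having a legal run to $p$ means $a_1\cdots a_r\in L(\A)\subseteq L(\B)$, so $a_1\cdots a_r\in L(\B)$; since $\sigma$ is a GFG strategy for $\B_k$ and $\B_k$ recognises $L(\B)$, Lemma~\ref{lem:quotient} (or directly the GFG property: on an accepted word the run produced by $\sigma$ ends in an accepting state of $\B_k$) gives that $X_r=\sigma(a_1\cdots a_r)$ is an accepting state of $\B_k$, i.e. $X_r\cap F_\B\neq\emptyset$. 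Hence the play avoids all Spoiler-winning positions, so Duplicator wins, i.e. $\A\inck\B$.

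The main obstacle, and the only place requiring care, is the bookkeeping matching the moves of the game $\Gk(\A,\B)$ with the transitions of the automaton $\B_k$: the definition of $\B_k$ distinguishes the case $|\Delta_\B(X,a)|\leq k$ (a single, deterministic successor $\Delta_\B(X,a)$) from the case $|\Delta_\B(X,a)|>k$ (all size-$k$ subsets), whereas in the game Duplicator may always move to \emph{any} subset of $\Delta_\B(X,a)$ of size at most $k$. I need to argue that restricting Duplicator to the moves of $\B_k$ is without loss of generality: keeping more pebbles (up to $k$) is never worse for Duplicator, and taking the full successor set when it has size $\leq k$ dominates any smaller choice, so a GFG strategy for $\B_k$ transfers verbatim into a winning Duplicator strategy. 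Everything else is a direct unwinding of the definitions, so I expect the write-up to be short once this correspondence is stated cleanly.
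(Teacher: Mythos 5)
Your proof is correct and follows essentially the same route as the paper: the right-to-left direction is the standard chain of implications from \cite{Ete02}, and for the left-to-right direction you have Duplicator play a strategy for $\B$ that resolves its nondeterminism with $k$ pebbles while ignoring Spoiler's pebble, using $L(\A)\subseteq L(\B)$ to verify the winning condition. The only cosmetic difference is that you access that strategy as a GFG strategy of $\B_k$ via Lemma~\ref{lem:AkGFG}, whereas the paper takes it directly as a winning strategy for Player~$0$ in $\Gw(\B,k)$ — which sidesteps the bookkeeping about $\B_k$'s transition relation entirely, since the moves in $\Gw(\B,k)$ are by definition exactly Duplicator's moves in $\Gk(\A,\B)$.
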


\begin{proof}
The right-to-left implication is true regardless of the value of $k$. It is already stated in \cite{Ete02}, and follows from the fact that if $\A\inck\B$, then any accepting run of $\A$ chosen by Spoiler can be answered with a set of runs from $\B$ containing an accepting one.

We show the converse, and assume $L(\A)\subseteq L(\B)$. Let $\sigma$ be a winning strategy for Player $0$ in $\Gw(\B,k)$, witnessing that $k=\w(\B)$. Then Duplicator can also play $\sigma$ in $\Gk(\A,\B)$, ignoring the position of the pebble of Spoiler in $\A$. Since any word reaching an accepting state of $\A$ is in $L(\A)\subseteq L(\B)$, the strategy $\sigma$ guarantees that at least one pebble of Duplicator is in an accepting state, by definition of $\sigma$. So this strategy is winning in $\Gk(\A,\B)$, witnessing $\A\inck\B$.
\end{proof}

The second lemma shows a link in the other direction: width can be computed from the relations $\inck$.

\begin{lem}\label{lem:wsim}
Let $\A$ be an NFA and $\Adet$ be a DFA for $L(\A)$.
Then for any $k\geq 1$, we have $\w(\A)\leq k$ if and only if $\Adet\inck\A$.
\end{lem}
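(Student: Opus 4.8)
The plan is to prove both implications by relating winning strategies in the width game $\Gw(\A,k)$ to Duplicator strategies in the simulation game $\Gk(\Adet,\A)$, exploiting the fact that $\Adet$ is deterministic so that Spoiler's moves in the simulation game are forced to track the unique run of $\Adet$, which in turn exactly records the prefix of the input word read so far.

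\medskip

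\noindent\textbf{From $\w(\A)\leq k$ to $\Adet\inck\A$.} First I would fix a winning strategy $\sigma$ for Player $0$ in $\Gw(\A,k)$. In $\Gk(\Adet,\A)$, Duplicator simply plays $\sigma$, ignoring the Spoiler pebble on $\Adet$ (exactly as in the proof of Lemma~\ref{lem:iNCAidth}). The key observation is that since $\Adet$ is deterministic, after reading a word $u=a_1\dots a_r$ the Spoiler pebble sits in the unique state $\delta^*(q_0^{\Adet},u)$, and this state is accepting in $\Adet$ if and only if $u\in L(\Adet)=L(\A)$. Whenever $u\in L(\A)$, the winning condition for $\sigma$ guarantees that the set $X_r$ of Duplicator pebbles contains an accepting state of $\A$, so the position is not winning for Spoiler. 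Hence this strategy is winning for Duplicator, witnessing $\Adet\inck\A$.

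\medskip

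\noindent\textbf{From $\Adet\inck\A$ to $\w(\A)\leq k$.} Conversely, fix a winning Duplicator strategy $\tau$ in $\Gk(\Adet,\A)$. I would use it to build a winning Player~$0$ strategy in $\Gw(\A,k)$: when Player~1 announces a letter $a_{i+1}$, Player~0 pretends that Spoiler played the (forced) transition of $\Adet$ on $a_{i+1}$, feeds this to $\tau$, and copies $\tau$'s move to choose $X_{i+1}\subseteq\Delta(X_i,a_{i+1})$ of size at most $k$. Because $\Adet$ is deterministic, this simulated Spoiler play is legitimate and, after $u=a_1\dots a_r$, the simulated Spoiler pebble is on $\delta^*(q_0^{\Adet},u)$. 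If $u\in L(\A)=L(\Adet)$ then that pebble is on an accepting state of $\Adet$, and since $\tau$ is winning the position $(\delta^*(q_0^{\Adet},u),X_r)$ is not winning for Spoiler, forcing $X_r\cap F\neq\emptyset$. This is precisely the winning condition for Player~0 in $\Gw(\A,k)$, so $\w(\A)\leq k$.

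\medskip

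\noindent\textbf{Main obstacle.} The argument is mostly bookkeeping, and the one point requiring care is making the translation between the two games rigorous: a Duplicator strategy in $\Gk(\Adet,\A)$ is in principle a function of the whole history of positions, including Spoiler's transition choices, whereas a Player~0 strategy in $\Gw(\A,k)$ only sees letters. The resolution is exactly the determinism of $\Adet$: the history of Spoiler transitions in $\Adet$ is a deterministic function of the letter sequence, so the two histories carry the same information and the strategy transfer is well-defined in both directions. I would also note (as in Lemma~\ref{lem:iNCAidth}) that one direction — here, $\Adet\inck\A$ implying the inclusion side — does not even need determinism, but since $L(\Adet)=L(\A)$ the statement here is an equivalence for every $k$, not just $k=\w(\A)$, which is why the determinism of $\Adet$ must be invoked in the forward direction too.
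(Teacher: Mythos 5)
Your proof is correct and takes essentially the same approach as the paper's: both establish a move-by-move correspondence between the two games, using the determinism of $\Adet$ to identify Spoiler's transition choices with Player~1's letter choices and matching the winning conditions via the fact that the current state of $\Adet$ is accepting exactly when the word read so far is in $L(\A)$. The paper states this bijection more tersely in a single paragraph, whereas you spell out both directions and the strategy-transfer subtlety explicitly, but the underlying argument is identical.
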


\begin{proof}
Moves of Spoiler and Duplicator in $\Gk(\Adet,\A)$ are in bjection with those of Player $1$ and Player $0$ respectively in $\Gw(\A,k)$. Indeed, for Spoiler, choosing a transition in $\Adet$ amounts to only choosing a letter, since the state of $\Adet$ is updated deterministically. Moves of Duplicator and Player $0$ are identical in both games. The winning conditions also match in both games, since in a given round, the current state of $\Adet$ is accepting if and only if the word played so far is in $L(\Adet)=L(\A)$.
Therefore, Duplicator wins $\Gk(\Adet,\A)$ if and only if Player $0$ wins $\Gw(\A,k)$, using the same strategy.
\end{proof}

Notice that this does not imply a polynomial reduction between the width problem and multipebble simulation one way or another, since the size of $\Adet$ is in general exponential in the size of $\A$. 

We recall that an automaton accepting all words is said \emph{universal}.

\begin{cor}\label{cor:univ}
Let $\A$ be a universal NFA, and $k\geq 1$. We have $\w(\A)\leq k$ if and only if $\Atriv\inck \A$, where $\Atriv$ is the trivial one-state automaton accepting all words.
\end{cor}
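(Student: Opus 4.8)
The plan is to derive Corollary~\ref{cor:univ} as an immediate specialisation of Lemma~\ref{lem:wsim}. First I would observe that since $\A$ is universal, $L(\A)=\Sigma^*$, and the one-state automaton $\Atriv$ (with its single state initial and accepting, looping on every letter) is a deterministic automaton recognising $\Sigma^*=L(\A)$. Thus $\Atriv$ is a legitimate choice of DFA $\Adet$ for $L(\A)$ in the statement of Lemma~\ref{lem:wsim}.

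Next I would simply invoke Lemma~\ref{lem:wsim} with this choice: for every $k\geq 1$, $\w(\A)\leq k$ if and only if $\Atriv\inck\A$. That is exactly the claimed equivalence, so the corollary follows.

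If one wanted to spell out the underlying reason rather than cite the lemma verbatim, the argument is that in the simulation game $\Gk(\Atriv,\A)$ the Spoiler pebble sits forever in the unique state of $\Atriv$, so a move of Spoiler is nothing but a choice of letter $a\in\Sigma$, and Duplicator's move is a choice of $X'\subseteq\Delta_\A(X,a)$ with $|X'|\leq k$ --- precisely the two moves of Player~1 and Player~0 in $\Gw(\A,k)$. Moreover the Spoiler pebble is always on the accepting state of $\Atriv$, so a position $(\,\cdot\,,X)$ is losing for Duplicator exactly when $X\cap F_\A=\emptyset$, which matches the winning condition of $\Gw(\A,k)$ since every prefix read so far lies in $L(\A)$ by universality. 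Hence the two games are isomorphic and a strategy wins one iff it wins the other, giving $\Atriv\inck\A \iff \w(\A)\leq k$.

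The only point requiring the slightest care is that $\Atriv$ is genuinely \emph{deterministic} and recognises $L(\A)$, so that Lemma~\ref{lem:wsim} applies with $\Adet=\Atriv$; both facts are immediate from universality of $\A$. There is no real obstacle here --- the corollary is purely a matter of instantiating the preceding lemma, the substance having already been carried out in its proof.
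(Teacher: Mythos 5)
Your proposal is correct and matches the paper's intent exactly: the corollary is stated without proof as an immediate instantiation of Lemma~\ref{lem:wsim} with $\Adet=\Atriv$, which is legitimate precisely because universality of $\A$ makes $\Atriv$ a DFA for $L(\A)=\Sigma^*$. Your additional unfolding of the game isomorphism is just the argument already given in the proof of Lemma~\ref{lem:wsim}, so nothing new is needed.
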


This corollary means that computing the width $k$ of a universal NFA is as hard as testing its multipebble simulations up to $k$ against $\Atriv$.

We will make use of this connection in the following, to show that both the width problem and the multipebble simulation testing are EXPTIME-complete.

\section{Complexity results on the width problem}
\label{sec:exptime}
\newcommand{\Gc}{\GG_{c}}

\newcommand{\init}{\mathit{init}}
\newcommand{\true}{1}
\newcommand{\false}{0}
\newcommand{\bool}{\{\false,\true\}}
\newcommand{\ov}[1]{\overline{#1}}
\newcommand{\Lit}{\mathit{Lit}}
In this section, we study the complexity of the \emph{width problem}: given an NFA $\A$ and an integer $k$, is 
 $\w(\A)\leq k$ ?

Being able to solve this problem efficiently would allow us to optimize the incremental determinisation algorithm, by aiming at the optimal $k$ matching the width right away instead of trying different width candidates incrementally.

The main theorem of this section is the following:
\begin{thm}\label{thm:EXPc} The width problem is EXPTIME-complete.
\end{thm}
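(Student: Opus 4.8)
\textbf{Proof plan for Theorem~\ref{thm:EXPc}.}
The plan is to establish the two directions separately. For membership in EXPTIME, I would invoke Lemma~\ref{lem:AkGFG}: $\w(\A)\leq k$ holds iff $\Ak$ is GFG. By Lemma~\ref{lem:sizek}, $\Ak$ has at most $\frac{n^k}{(k-1)!}+1 = 2^{O(n\log n)}$ states, and by Theorem~\ref{thm:loding} GFG-ness of an NFA is checkable in quadratic time. Hence the whole test runs in time $2^{O(n\log n)}$, which is in EXPTIME; iterating over $k\in[1,n]$ to actually compute $\w(\A)$ costs only a polynomial factor more. (Alternatively, one can solve the safety game $\Gw(\A,k)$ directly on its exponential arena in time polynomial in the arena size.)

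The substance of the theorem is EXPTIME-hardness, and I would prove it even for the restricted class of universal safety automata, since by Corollary~\ref{cor:univ} the width of a universal NFA $\A$ with bound $k$ coincides with testing $\Atriv \inck \A$. So it suffices to exhibit an EXPTIME-complete problem reducing to ``given a universal (safety) NFA $\A$ and $k$, does $\Atriv \inck \A$ hold?''. Following the hint in the introduction, I would reduce from a combinatorial two-player game on boolean formulas known to be EXPTIME-complete from \cite{SC79} (a $G_{\mathrm{CNF}}$-style formula game where the two players alternately flip truth values of variables and a player loses when the formula reaches a designated truth value, with an exponential bound on the play length replaced by the natural infinite-duration safety formulation). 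The idea is to encode a configuration of that game --- essentially a boolean assignment to the variables together with whose turn it is --- by the \emph{set of Duplicator pebbles}, using roughly two states per variable (one for ``true'', one for ``false''), so that $k$ pebbles of Duplicator track one literal per variable, i.e. one full assignment. Spoiler's single pebble in $\A$ is constrained (via the universal/trivial side $\Atriv$, Spoiler effectively just picks letters) to only announce \emph{which move} is being played in the formula game; the transition relation of $\A$ on those letters is designed so that Duplicator's only way to keep all $k$ pebbles alive (avoid the Spoiler-winning positions, i.e.\ a bad sink) is to update the pebble set consistently with a legal move in the formula game. Accepting states are placed so that reaching the ``losing'' valuation kills Duplicator's pebbles, and Duplicator wins the simulation game (equivalently $\w(\A)\leq k$) iff the corresponding player has a winning strategy in the formula game.

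The two main obstacles I anticipate are: (i) forcing Duplicator to keep \emph{exactly one} pebble per variable --- if Duplicator could merge two variables' pebbles into fewer, or spread them unfaithfully, the encoding of a valuation breaks; I would handle this by a ``consistency gadget'' whereby Spoiler can, on special letters, challenge any variable $x_i$ and send Duplicator to a bad state unless Duplicator currently has a pebble in exactly one of the two states $\{x_i^{\mathrm{true}}, x_i^{\mathrm{false}}\}$ (this is exactly the kind of pebble-counting argument that makes $k=n/2$ the tight threshold, matching the claim that the result holds with a fixed $n/2$ pebbles), and by arranging that any turn where Duplicator ``wastes'' a pebble leaves some variable uncovered and thus loses a subsequent challenge. (ii) Matching the \emph{winning condition}: the formula game is naturally a reachability/safety game, while the $k$-simulation game's condition is the GFG-style ``whenever the word is in $L(\A)$, some pebble is accepting''; since $\A$ is universal every word is in $L(\A)$, so this collapses to ``always some pebble is accepting'', a plain safety condition for Duplicator, which aligns cleanly with the safety/avoid-losing-valuation condition of the formula game --- but care is needed so that the \emph{transient} rounds (challenges, bookkeeping of whose turn it is) do not spuriously kill Duplicator; I would make all non-losing states accepting and route only genuine game-losses to a non-accepting sink. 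Finally I would check that the constructed $\A$ is indeed universal (every word has \emph{some} accepting run, e.g. a run that ignores the game semantics) and of safety type, and that its size is polynomial in the formula game instance, closing the reduction. Combining the two directions gives EXPTIME-completeness, and by Corollary~\ref{cor:univ} the same reduction yields EXPTIME-completeness of multipebble simulation $\Atriv \inck \A$, hence of $k$-simulation in general, with $k$ fixed to $n/2$.
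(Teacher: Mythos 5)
Your plan follows essentially the same route as the paper: the upper bound by solving the exponential-size safety game $\Gw(\A,k)$ (or equivalently via Lemma~\ref{lem:AkGFG} and Theorem~\ref{thm:loding}), and the lower bound by reducing the EXPTIME-complete formula game of \cite{SC79} to the width problem for a universal safety NFA, encoding a valuation by the set of $k$ pebbles with two literal-states per variable (hence $k\approx n/2$), letting nondeterminism choose Player $0$'s variables and Spoiler's letters choose Player $1$'s, and using coverage challenges plus the pebble budget to force exactly one pebble per variable; the transfer to multipebble simulation via Corollary~\ref{cor:univ} is also exactly the paper's. Two ingredients your sketch leaves unaddressed, and which the paper supplies explicitly, are worth flagging. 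First, you need a mechanism forcing Spoiler to play letters in the order prescribed by the game protocol (set initial valuation, then alternate $X_1$-moves, clause validations, $X_0$-moves, optional challenges); the paper takes a product with a deterministic safety automaton $\CC$ recognising the protocol language, routing any deviation to an accepting sink so that Spoiler cannot profit from breaking the protocol. Second, your plan only describes how Duplicator is punished when Player $0$'s valuation falsifies the formula, but the formula game is symmetric: Spoiler must also lose (i.e.\ Duplicator must win) when Player $1$'s own valuation falsifies $\varphi$; the paper enforces this by making Spoiler validate his valuation clause-by-clause with letters $a_l$, any false witness letting a pebble escape to the accepting sink. Neither omission is a wrong turn---both fit naturally into the gadget framework you describe---but without them the reduction as stated does not yet go through.
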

\noindent
We start by showing the upper bound:

\begin{lem}
The width problem is in EXPTIME.
\end{lem}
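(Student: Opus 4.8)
The plan is to give a direct alternating-Turing-machine argument, using the characterization of width via the game $\Gw(\A,k)$ from Definition 1. Recall that $\w(\A)\leq k$ holds if and only if Player~0 wins $\Gw(\A,k)$, a game played on positions in $\Qk$. The natural approach is: exhibit a nondeterministic (or alternating) procedure of appropriate complexity that decides who wins this game, and observe that the game has an equivalent finite-duration formulation so that it can be solved in the claimed time bound.

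First I would make the winning condition of $\Gw(\A,k)$ effective. As stated, Player~0 wins if for every prefix $a_1\cdots a_r\in L(\A)$, the set $X_r$ contains an accepting state. The key observation is that one need not track the full word: it suffices to track, alongside the current position $X_r\in\Qk$, the subset $Y_r=\Delta(\{q_0\},a_1\cdots a_r)\subseteq Q$ reached by the \emph{full} powerset construction of $\A$ on the same prefix (equivalently, track the state of a DFA $\Adet$ for $L(\A)$ obtained by the powerset construction). Then $a_1\cdots a_r\in L(\A)$ iff $Y_r\cap F\neq\emptyset$, and the losing condition for Player~0 becomes the purely local condition ``$Y_r\cap F\neq\emptyset$ but $X_r\cap F=\emptyset$''. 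So $\Gw(\A,k)$ is equivalent to a safety game played on the product arena $Q^{\leq k}\times 2^Q$, whose size is $2^{O(n)}$ (using Lemma~\ref{lem:sizek} for the first factor), where Player~1 picks a letter, the two components update accordingly, and Player~0 loses exactly when a bad position is entered. This is in fact the simulation game $\Gk(\Adet,\A)$ of Lemma~\ref{lem:wsim}, so the reformulation is already implicit in the excerpt.

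Next, a safety game on an arena of size $N$ is solvable in time polynomial in $N$ by the standard attractor computation: compute the set of positions from which Player~1 can force reaching a bad position, and check whether the initial position $(\{q_0\},\{q_0\})$ lies outside it. Here $N=2^{O(n)}$, so this gives an $2^{O(n)}$, i.e.\ EXPTIME, algorithm deciding whether $\w(\A)\leq k$, which is exactly the width problem. (Alternatively, one can phrase this as an alternating PSPACE procedure on the exponential-size arena: Player~1's and Player~0's moves alternate, each position is describable in $O(n)$ bits, and an alternating machine running in space polynomial in $n$ — hence time $2^{O(n)}$ after the APSPACE = EXPTIME collapse — guesses and verifies a winning strategy; but the explicit attractor computation is cleaner and self-contained.)

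I do not expect a genuine obstacle here — the only point requiring a little care is the reduction from the infinite-duration game with its ``for all prefixes'' winning condition to a finite-duration safety game, i.e.\ verifying that augmenting the arena with the powerset component $Y_r$ faithfully captures membership of the current prefix in $L(\A)$ and that no information about the history beyond $(X_r,Y_r)$ is needed. Once that is in place, the bound on the arena size from Lemma~\ref{lem:sizek} and the polynomial-time solvability of safety games finish the argument. The matching EXPTIME lower bound is the substantive part of the section and is handled separately.
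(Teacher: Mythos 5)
Your proposal is correct and follows essentially the same route as the paper: build the game $\Gw(\A,k)$ as an exponential-size safety game and solve it in time polynomial in the arena. You spell out one detail the paper leaves implicit — augmenting positions with a powerset component so that the ``prefix in $L(\A)$'' condition becomes a local safety condition — which is a worthwhile clarification but not a different approach.
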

\begin{proof}
To show the EXPTIME upper bound, it suffices to build the game $\Gw(\A,k)$ of exponential size. This is a safety game, so solving it is polynomial in the size of the game. This means this algorithm runs in exponential time. Also note that Algorithm \ref{alg:incrNFA} given in Section \ref{sec:progNFA} computes the width of an NFA in EXPTIME.
\end{proof}

The rest of the section is devoted to showing the EXPTIME-hardness of the width problem.
We will actually show a stronger result: the width problem is EXPTIME-hard on universal safety automata, i.e. automata with all states accepting, and where all words are accepted.
By Corollary \ref{cor:univ}, this implies that this EXPTIME-hardness result applies also to deciding whether a multipebble simulation holds.
We will proceed by reduction from a combinatorial game on boolean formulas, shown EXPTIME-complete in \cite{SC79} (where it is named the game $G_1$). We will call this combinatorial game $\Gc$, and we start by describing it in the following.

\subsection{The combinatorial game $\Gc$}

An instance of the game $\Gc$ is a tuple $(\varphi,X_0,X_1,\alpha_\init)$, where $X_0$ and $X_1$ are disjoint sets of variables, and $\varphi$ is a $4$-CNF formula on variables $V=X_0\cup X_1\cup\{t\}$, where $t\notin X_0\cup X_1$, and finally $\alpha_\init$ is a valuation $V\to\bool$.

This means that $\varphi$ is of the form $C_1\wedge C_2\wedge\dots\wedge C_n$, where $C_i=l_{i,1}\vee l_{i,2}\vee l_{i,3}\vee l_{i,4}$, in which each $l_{i,j}$ is a \emph{literal}, i.e. a variable $x\in V$ or its negation $\ov{x}$.
We will call $\ov{V}$ the set $\{\ov{v}\mid v\in V\}$, and $\Lit=V\cup\ov{V}$ the set of literals. Similarly, we define $\Lit_i=X_i\cup\ov{X_i}$ for $i\in\{0,1\}$. 

A position in $\Gc$ is of the form $(\tau,\alpha)$, where $\tau\in\{0,1\}$ identifies the player who owns the position, and $\alpha$ is a valuation $V\to\bool$.

In such a position, the player $\tau$ owning the position can change the values of variables in $X_\tau$, and additionally the variable $t$ is set to $\tau$. This yields a new valuation $\alpha'$. If this valuation makes the formula $\varphi$ false, Player $\tau$ immediately loses, otherwise the game moves to position $(1-\tau,\alpha')$.

The starting position of the game is $(1,\alpha_\init)$. We say that Player $1$ wins the game if he can force a win, i.e. if he has a strategy $\sigma$ such that all plays compatible with $\sigma$ eventually end with Player $0$ losing the game by making the formula $\varphi$ false.

It is shown in \cite{SC79} than determining whether Player $0$ wins a given instance of the game $\Gc$ is EXPTIME-complete.

\subsection{Reduction to the width problem}

We now want to show that $\Gc$ can be encoded in the width problem for a universal safety automaton. Let $I_c=(\varphi,X_0,X_1,\alpha_\init)$ be an instance of $\Gc$.
We want to build an instance $\A,k$ of the width problem such that $\w(\A)\leq k$ if and only if Player $0$ wins $\Gc$ on instance $I_c$. Moreover the instance $\A,k$ must be computable in polynomial time from $I_c$, and we want $\A$ to be a universal safety automaton.
We will reuse the notations of the previous section for describing the instance $I_c$. In particular $V=X_0\cup X_1\cup \{t\}$, and $\varphi=C_1\wedge C_2\wedge\dots\wedge C_n$, where $C_i=l_{i,1}\vee l_{i,2}\vee l_{i,3}\vee l_{i,4}$. 
The width we will be aiming for is $k=|V|$, i.e. the number of variables in $\varphi$.

\newcommand{\val}{\mathit{val}}

\subsubsection{Intuitive account of the construction}

Before giving a formal definition of $\A$, let us sketch the intuitions guiding this construction.
We want the width game of $\A$ to mimic the game $\Gc$. This means that a subset of states of $\A$ of size $k$ will correspond to a valuation of $k$ variables.
Truth value of variables from $X_0$ will be chosen by Player $0$ through nondeterminism in $\A$, while truth value for variables from $X_1$ will be chosen by Player $1$ via his choice of letters in the width game.
Gadgets will then be added to 
\begin{itemize}
\item control whether the current valuation makes the formula $\varphi$ true,
\item set the initial valuation to $\alpha_\init$ and
\item make the automaton universal.
\end{itemize}

\noindent The width game should be lost immediately by the player who chose a valuation making the formula false.

These components will first be built as an auxiliary automaton $\B$. 
In order to properly mimic the dynamic of the game $\Gc$, one needs to additionally constrain the letters that can be played by Player $1$. This will be done by a separate deterministic automaton $\C$, forcing the word played by Player $1$ to belong to a certain language.
Finally, $\A$ will be obtained by a cartesian product of $\B$ and $\C$. 
In order to illustrate the construction of $\B$ and $\C$, we will use a running example:

\begin{exa}\label{ex:run}
The instance $I_c$ of $\Gc$ we take as example is $(\varphi,X_0,X_1,\alpha_\init)$, with $\varphi=(x\vee y\vee z\vee t)\wedge (\ov{x}\vee y \vee \ov{z}\vee \ov{t})$, $X_0=\{x,y\}$, $X_1=\{z\}$, and $\alpha_{\init}$ to be the valuation setting all variables to true. Notice that this instance is won by Player $0$, as it suffices to always set $y$ to true to guarantee that $\varphi$ remains true.
\end{exa}

After defining $\B$ and $\C$ in Sections \ref{sec:B} and \ref{sec:C}, we will prove in Section \ref{sec:proofcor} that the game $\Gw(\A,k)$ correctly emulates the game $\Gc$ on instance $I_c$.

\subsubsection{The automaton $\B$}
\label{sec:B}

The automaton $\B$ is the main gadget of the construction. The idea is that moving $k$ pebbles in $\B$ will be equivalent to choosing a valuation of $k$ variables, via a set of $2k$ states (called $Q_\Lit$): one state for every literal, corresponding to a truth value assignation for its associated variable.
The transitions of $\B$ are designed so that for variables in $X_0$, Player 0 can choose the valuation using the nondeterminism of $\B$ on a single letter $a$, while for variables in $X_1$, Player $1$ chooses a valuation by choosing which letters of the form $f_l$ to play.

Let us start by describing the automaton $\B$ associated to the running example \ref{ex:run}.
\begin{exa}

\begin{figure}[h]
\centering

\scalebox{.9}{
\begin{tikzpicture}[->,>=stealth',shorten >=1pt,auto,node distance=2.8cm,
                    semithick]
  \tikzstyle{every state}=[fill=none,draw=black,text=black]

  \node[state,initial] (q_0)                    {$q_0$};
  \node[state]         (x) [above left=2cm and 2cm of q_0] 	  {$q_x$};
  \node[state]         (nx) [right=1cm of x] 	  {$q_{\ov{x}}$};
  \node[state]         (y) [right=1cm of nx] 	  {$q_y$};
  \node[state]         (ny) [right=1cm of y] 	  {$q_{\ov{y}}$};
  \node[state]         (z) [below left=2cm and 1cm of q_0] 	  {$q_z$};
  \node[state]         (nz) [right=1cm of z] 	  {$q_{\ov{z}}$};
  \node[state]         (t) [above right=.1cm and 3cm of q_0] 	  {$q_t$};
  \node[state]         (nt) [below=1cm of t] 	  {$q_{\ov{t}}$};
	

 \path 
 (q_0) edge[bend left] node [left] {$a$} (x)
 	edge node [left] {$a$} (y)
 	edge node [right] {$a$} (z)
 	edge node [above] {$a$} (t)
 	edge node [right] {$a$} (nx)
 	edge node [below] {$a$} (nt)
 	edge[bend right, out=0, in=-160] node [right] {$a$} (ny)
 	edge node [left] {$a$} (nz)
 	
 	
 (x) edge[bend left,<->] node{$a$} (nx)
 	edge[loop above] node{$a$} () 
 (nx) 
 edge[bend left] node {$d_x$} (x) 
 edge[loop above] node{$a$} ()
 	
  (y) edge[bend left,<->] node {$a$} (ny)
 	edge[loop above] node{$a$} ()
 (ny) edge[loop above] node{$a$}()
 edge[bend left] node[pos=.55] {$d_y$} (y) 
 	
 	 (z) edge[bend left] node{$f_{\ov{z}}$} (nz)
 (nz) edge[bend left] node{$d_z,f_{z}$} (z)
 	
 	 (t) edge[bend left] node{$a$} (nt)
 (nt) edge[bend left] node{$d_t,f_t$} (t)
;

%
%


  \node[state]         (X) [below left=2cm and 3cm of z] 	  {$q_x$};
  \node[state]         (nX) [right=1cm of X] 	  {$q_{\ov{x}}$};
  \node[state]         (Y) [right=1cm of nX] 	  {$q_y$};
  \node[state]         (nY) [right=1cm of Y] 	  {$q_{\ov{y}}$};
  \node[state]         (Z) [right=1cm of nY] 	  {$q_z$};
  \node[state]         (nZ) [right=1cm of Z] 	  {$q_{\ov{z}}$};
  \node[state]         (T) [right=1cm of nZ] 	  {$q_t$};
  \node[state]         (nT) [right=1cm of T] 	  {$q_{\ov{t}}$};

\path
(X) edge node {$\begin{array}{r}c_1,d_x\\ e_x,a_{\ov{x}}\end{array}$} ++(0,1)
 edge[dashed] node {$c_2$} ++(0,-1)
 (nX) edge node {$\begin{array}{r}c_2,d_{\ov{x}},\\e_x,a_x\end{array}$} ++(0,1)
 edge[dashed] node {$c_1$} ++(0,-1)

 (Y) edge node {$\begin{array}{r}c_1,c_2,d_y\\e_y,a_{\ov{y}}\end{array}$} ++(0,1)
 (nY) edge node {$\begin{array}{r}d_{\ov{y}}\\e_y,a_y\end{array}$} ++(0,1)
 edge[dashed] node {$c_1,c_2$} ++(0,-1)
 
 (Z) edge node {$\begin{array}{r}c_1,d_z\\e_z,a_{\ov{z}}\end{array}$} ++(0,1)
 edge[dashed] node {$c_2$} ++(0,-1)
 (nZ) edge node {$\begin{array}{r}c_2,d_{\ov{z}}\\e_z,a_z\end{array}$} ++(0,1)
 edge[dashed] node {$c_1$} ++(0,-1)
 
 (T) edge node {$\begin{array}{r}c_1,d_t\\e_t,a_{\ov{t}}\end{array}$} ++(0,1)
 edge[dashed] node {$c_2$} ++(0,-1)
 (nT) edge node {$\begin{array}{r}c_2,d_{\ov{t}}\\e_t,a_t\end{array}$} ++(0,1)
 edge[dashed] node {$c_1$} ++(0,-1)
 
;
\end{tikzpicture}
}
\caption{Valuation gadget and exit transitions of the automaton $\B$\label{fig:B}}
\end{figure}
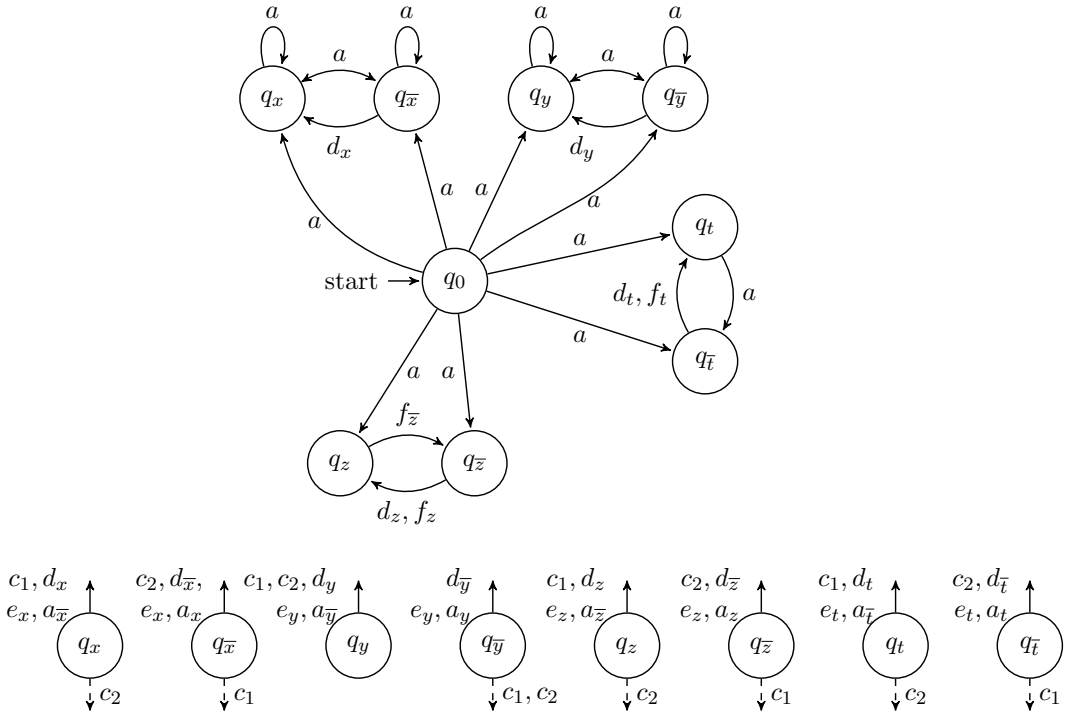

The automaton $\B$ corresponding to Example \ref{ex:run} is described in Figure \ref{fig:B}. The first diagram represents initial transitions and transitions changing the valuation. Deterministic self-loops such as $q_x\trans{d_{\overline{x}}} q_x$ are omitted for readability. The second diagram describes for each state $q_l$ which letters from $\{c_1,c_2\}$ cannot be read (dashed arrows to the bottom), and which letters go to the accepting sink $q_\top$ (arrows to the top). The automaton $\B$ is safe, i.e. all states are accepting.
The only way for a run to fail is to read a letter $c_i$ in a state where it is forbidden. These letters are used to control that the valuation chosen by Player $0$ makes the formula true. For instance reading $c_2$ in $q_x$ leads to a fail of the run, because $x$ does not make clause $2$ true in the formula $\varphi$. If Player $1$ can play a letter $c_i$ such that no pebble is in a state $q_l$ where $l$ makes clause $c_i$ true,   he immediately wins the game.
\end{exa}

In this example, we can notice that the letter $a$ is used to allow Player $0$ to set the values of variables $x$ and $y$ via nondeterminism. On the other hand, the letters $f_z$ and $f_{\overline{z}}$ can be played by Player $1$ to set the value of variable $z$. Letters $d_l$ for each literal $l$ have two roles: they set the initial valuation  (here with all variables to true), and they will help to guarantee that the final automaton is universal, by immediately leading to an accepting state if letter $d_l$ is read in state $q_l$.

We now give the formal definition of the general construction of $\B=(Q,\Sigma,q_0,\Delta,F)$.
\medskip

\noindent \textbf{States}

Let $Q_\Lit=\{q_l\mid l\in \Lit\}$. The states in $Q_\Lit$ will be used to encode valuations $\alpha$, via the positions of the pebbles in the game $\Gw(\A,k)$.

$Q_\Lit$ is partitioned into $Q_0,Q_1,Q_t$, with $Q_i=\{q_l \mid l\in \Lit_i\}$ for $i\in\{0,1\}$ and $Q_t=\{t,\ov{t}\}$.


We finally set $Q=\{q_0,q_\top\}\cup Q_\Lit$ where $q_0$ is the initial state and $q_\top$ is an accepting sink state. Reaching $q_\top$ with one of its pebble will mean immediate win for Player $0$, as this pebble will trivially accept any word subsequently played by Player $1$. We define $F=Q$, i.e. $\B$ is a safety automaton, and every run is accepting.
Notice that the number of states of $\B$ is $|Q|=2+2k$, so it is polynomial in the size of the instance $I_c$ of $\Gc$.

\medskip
\noindent \textbf{Alphabet}

We define here several sub-alphabets that will be used in our encoding.
For each of them, we already give an intuition of how it will be used.
The alphabet is presented in two groups, separated by a double line: the first group will be used in the normal flow of the simulation of the game, while the second group is used by Player $1$ to challenge choices of Player $0$, and will normally never be played if Player $0$ is playing a correct strategy.
\medskip

\noindent\begin{tabular}{|p{.3\textwidth}p{.67\textwidth}|}
\hline
$\{a,f_t\}$ & The letter $a$ allows Player $0$ to choose a value for all variables from $X_0$, and sets variable $t$ to false. The letter $f_t$ is used to set variable $t$ to true.\\
\hline
$\Gamma_\Lit:=\{a_l\mid l\in\Lit\}$ &
For each clause, Player $1$ will have to play a letter $a_l$ witnessing that his valuation makes this clause true.\\
\hline

$\Gamma_1:=\{f_l\mid l\in \Lit_1\}$ &
The letter $f_l$ will be played by Player $1$ to set the literal $l$ to true.\\
\hline
$\Sigma_D:=\{d_l\mid l\in\Lit\}$ & Used just once at the beginning: sets initial valuation while making the automaton universal.\\
\hline
\hline
$\Sigma_C:=\{c_i\mid i\in[1,n] \}$ &
A letter $c_i$ will be played by Player $1$ if the valuation chosen by Player $0$ fails to make the clause $C_i$ true.\\
\hline
$\Sigma_V:=\{e_v\mid v\in V\}$ &
The letter $e_v$ will be played by Player $1$ if Player $0$ has failed to set a value for variable $v$.\\
\hline
\end{tabular}
\medskip

We set $\Sigma=\{a,f_t\}\cup\Gamma_\Lit \cup \Gamma_1\cup\Sigma_D\cup\Sigma_C\cup\Sigma_V$.

Notice that $|\Sigma|\leq 2+2k+2k+2k+n+k=n+7k+2$, so it is polynomial in the size of $I_c$ as well.

\medskip
\noindent \textbf{Transitions}


We will use the notation $p\trans{a} q$ to mean that we put a transition $(p,a,q)$ in $\Delta$.
If $l\in\Lit$ is a literal, we define its projection $\pi(l)$ to variables by $\pi(l)=v$ if $l\in\{v,\ov{v}\}$.
We define the negation of a literal $l\in\{v,\ov{v}\}$ as $\ov{l}=\ov{v}$ if $l=v$ and $\ov{l}=v$ if $l=\ov{v}$.
\medskip

We present the transition table $\Delta$ in the following array: the left column contains the set of transitions, while the right column explains their roles in the reduction. 
As before, we separate in a second component transitions that will only be taken in the event of a Player failing to play a valid strategy, thereby terminating the simulation of the game.
\medskip

\noindent\begin{tabular}{|p{.3\textwidth}p{.66\textwidth}|}
\hline
$q_0\trans{a} q_l$ for all $l\in\Lit$&
allows Player $0$ to choose any initial valuation.\\
\hline
\parbox{.3\textwidth}{$q_l\trans{d_{l'}} q_{l''}$\\ if $l\neq l'$ and $l''=\alpha_\init(\pi(l))$}&
sets the initial valuation to $\alpha_\init$.\\
\hline
\parbox{.3\textwidth}{$q_l\trans{f_{l'}} q_{l'}$ if $\pi(l)=\pi(l')$,\\ for all $l'\in\Lit_1\cup\{t\}$}&
sets the value of $\pi(l)$ to $l'$.\\
\hline
\parbox{.3\textwidth}{$q_l\trans{f_{l'}} q_{l}$ if $\pi(l)\neq\pi(l')$,\\ for all $l'\in\Lit_1\cup\{t\}$}&
leaves other variable unchanged.\\
\hline
\parbox{.3\textwidth}{$q_l\trans{a} q_{l'}$ \\if $l\in \Lit_0$ and $\pi(l)=\pi(l')$,\\ or if $l\in\Lit_1$ and $l=l'$}&
\parbox{.66\textwidth}{nondeterministic choice for variables in $X_0$,\\ leaving variables from $X_1$ unchanged.}\\
\hline
 $q_l\trans{a} q_{\ov{t}}$ if $\pi(l)=t$&
sets $t$ to false during the turn of Player $0$.\\
\hline
 $q_l\trans{a_l} q_l$ if $l\in\Lit$&
will be used to validate the valuation chosen by Player $1$.\\
 \hline
 \hline
  $q_l\trans{d_{l}} q_\top$ for all $l\in\Lit$&
helps towards universality of the automaton: a good guess from Player $0$ on which $d_l$ will be played leads to immediate acceptation.\\
\hline
$q_l\trans{e_{\pi(l)}}q_\top$ for all $l\in\Lit$&
$e_v$ is played if no value is set for variable $v$, leading to instant loss for Player $0$ if no value is set and instant win otherwise.\\
\hline
\parbox{.3\textwidth}{$q_l\trans{c_i} q_\top$\\ if literal $l$ appears in $C_i$}&
if Player $1$ challenges the valuation with clause $C_i$, instant win for Player $0$ if a literal makes the clause true.\\
\hline
 $q_l\trans{a_{\ov{l}}} q_\top$ if $l\in\Lit$&
if Player $1$ tries to validate his valuation with a wrong literal, instant win for Player $0$.\\
\hline
 $q_\top\trans{b} q_\top$ for all $b\in\Sigma$ &
 accepting sink state.\\
 \hline
\end{tabular}
\medskip

This achieves the definition of $\B$. Notice that $\B$ has size polynomial in the size of $I_c$, and can be computed from $I_c$ in polynomial time.

\subsubsection{The automaton $\CC$}
\label{sec:C}

The automaton $\CC$ is used to restrict the moves of Player $1$, i.e. the letters chosen in the width game, to those that are relevant to the game $\Gc$. This includes for instance forcing him to prove that his own valuations make the formula true via the letters $a_l$, and allowing him to challenge valuations chosen by Player $0$ via the letters $c_i$. We define a safety language such that if Player $1$ plays a bad prefix of this language, then the whole automaton $\A$ immediatly goes to an accepting sink state, and therefore Player $0$ wins the width game.

We formally describe $\CC$ in the following, and instantiate it on the running example. 

For each $i\in[1,n]$, let $A_i=\{a_l\mid \text{literal $l$ appears in }C_i\}$.

Let $L_\val=A_1A_2\dots A_n$, $L_\val$ is a subset of $(\Gamma_\Lit)^{n}$.
The purpose of $L_\val$ is to check that the valuation chosen by Player $1$ (corresponding to Player $1$ in $\Gc$) is valid, by forcing him to choose one valid literal by clause. 
\medskip

We define  $$L_\CC=a\Sigma_D(\varepsilon+\Sigma_V)\big(\Gamma_1^{|X_1|} \cdot f_t \cdot L_\val\cdot a(\varepsilon+\Sigma_V)(\varepsilon+\Sigma_C)\big)^*.$$

We detail in the following the meaning of the different factors in this expression, by order of appearance:
\medskip

\noindent\begin{tabular}{|p{.3\textwidth}p{.66\textwidth}|}
\hline
$a\Sigma_D$ &
sets the initial configuration, and makes the automaton accept anything\\
\hline
$(\varepsilon+\Sigma_V)$ &
punishes Player $0$ if a variable is not instantiated\\
\hline
$\Gamma_1^{|X_1|}\cdot f_t$ &
chooses values for variables in $X_1$ and sets $t$ to true\\
\hline
$L_\val$ &
proves that the chosen valuation makes $\varphi$ true\\
\hline
$a$&
allows Player $0$ to choose a valuation for $X_0$, and sets $t$ to false\\
\hline
$(\varepsilon+\Sigma_V)$ &
verifies that each variable is still instantiated ($x$ and $\overline{x}$ could be set to true while erasing the valuation of $y$).\\
\hline
$(\varepsilon+\Sigma_C)$ &
allows to challenge the valuation with a clause $C_i$. \\
\hline
\end{tabular}
\medskip

Let $\CC_0$ be a complete DFA recognizing $L_\CC$ with rejecting sink state $\top_\CC$. We build $\CC$ by making all states of $\CC_0$ accepting, including its sink state $\top_\CC$. This makes $\CC$ a universal complete DFA, and any word that is not a prefix of a word in $L_\CC$ will reach the accepting sink state $\top_\CC$. It is straightforward to build $\CC$ in polynomial time from $I_c$, with the size of $\CC$ polynomial in the size of $I_c$. 

\begin{exa}
The deterministic safety automaton $\CC$ corresponding to the running example \ref{ex:run} is represented in Figure \ref{fig:C}. Some states are labeled by the type of the position they represent, see next section. Transitions to the accepting sink $\top_\CC$ are not represented: all missing transitions for this deterministic automaton to be complete go to state $\top_\CC$.
\begin{figure}[h]
\centering

\scalebox{.9}{
\begin{tikzpicture}[->,>=stealth',shorten >=1pt,auto,node distance=1cm,
                    semithick]
  \tikzstyle{every state}=[fill=none,draw=black,text=black]

  \node[state,initial] (A1) {};
  \node[state,right=of A1] (A1') {};
  \node[state, right= of A1'] (A2) {1};
  \node[state, right= of A2] (A3) {};
  \node[state, below =of A2] (A4) {};
  
  \node[state, right= of A3] (A8) {};
  \node[state, right=2cm of A8] (A9) {};
  \node[state, below=of A9] (A10) {0};
  \node[state, below left= of A10] (A11) {1};
  \node[state, left=1.7cm of A11] (A13) {};

  \node[state, below= of A1'] (top) {$\top_\CC$};
\path
(A1) edge node{$a$} (A1')
(A1') edge node{$\Sigma_D$} (A2)
(A2) edge node[left]{$\Sigma_V$} (A4)
(A2) edge node{$f_z,f_{\ov{z}}$} (A3)
(A4) edge node[near start,right]{$f_z,f_{\ov{z}}$} (A3)
(A11) edge node[ right]{$f_z,f_{\ov{z}}$} (A3)
(A13) edge node[right, near start] {$f_z,f_{\ov{z}}$} (A3)

(A3) edge node{$f_t$} (A8)
(A8) edge node{$a_x,a_y,a_z,a_t$} (A9)
(A9) edge node{$a_{\ov{x}},a_y,a_{\ov{z}},a_{\ov{t}}$} (A10)
(A10) edge node{$a$} (A11)
(A11) edge node[above]{$\Sigma_V$} (A13)
(A13) edge node[above]{$c_1,c_2$} (A4)
(A11) edge[bend left, out=60, in=130] node{$c_1,c_2$} (A4)
;
\end{tikzpicture}
}
\caption{The complete safety DFA $\CC$\label{fig:C}}
\end{figure}
\end{exa}
\medskip

\subsubsection{The main automaton $\A$}

We now combine $\B$ and $\CC$ to obtain the automaton $\A$, for which being able to compute width will amount to solve the game $\Gc$.

The automaton $\A$ is defined as the cartesian product of $\B$ and $\CC$, with the additional modification that all states of the form $(q,\top_\CC)$ or $(q_\top,q')$ are merged to a unique accepting sink state $\top_\A$. We also add all transitions of the form $(q,q')\trans{x}\top_\A$ as soon as $q'\trans{x}\top_\CC$ is a transition of $\CC$, even when the letter $x$ cannot be read from $q$ in $\B$. 

Since $\CC$ is deterministic, it has no impact on the width, and the only non-determinism for Player $0$ to resolve in $\Gw(\A,k)$ comes from $\B$.

\begin{lem}
$\A$ is a universal safety automaton.
\end{lem}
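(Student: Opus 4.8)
The plan is to verify the two defining properties separately: that $\A$ is a safety automaton (all states accepting) and that $\A$ is universal (accepts every word in $\Sigma^*$). The first property is essentially immediate from the construction. Recall that $\B$ has $F=Q$, i.e.\ every state of $\B$ is accepting, and $\CC$ was built from $\CC_0$ by declaring \emph{all} states accepting, including the sink $\top_\CC$. In the cartesian product $\A$, a state is either a pair $(q,q')$ with $q\in Q_\B$ accepting and $q'\in Q_\CC$ accepting, or one of the merged sink states $\top_\A$, which we declared accepting. Hence every state of $\A$ is accepting, so every run is accepting and $\A$ is a safety automaton.

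For universality, I would argue that for every $w\in\Sigma^*$ there is a run of $\A$ on $w$ ending in an accepting state; since all states are accepting, it suffices to show $\A$ has \emph{some} complete run on $w$, i.e.\ that no word gets stuck. The key observation is that the accepting sink $\top_\A$ absorbs everything: $q_\top\trans{b}q_\top$ for all $b\in\Sigma$ in $\B$, $\top_\CC\trans{b}\top_\CC$ for all $b$ in $\CC$, and by construction any transition $q'\trans{x}\top_\CC$ in $\CC$ lifts to $(q,q')\trans{x}\top_\A$ in $\A$ regardless of whether $x$ is readable from $q$ in $\B$. So once a run reaches $\top_\A$ it can continue on any suffix. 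It therefore remains to check that from the initial state $(q_0,\iota_\CC)$ (where $\iota_\CC$ is the initial state of $\CC$), on \emph{any} first letter, $\A$ has an available transition — and more generally that from any non-sink reachable state and any letter, either $\B$ has a transition on that letter matching a $\CC$-transition, or the $\CC$-component moves to $\top_\CC$ and we fall into $\top_\A$. The latter disjunct handles every letter not enabled by $\CC$, so the only thing to verify is: whenever $\CC$ stays out of $\top_\CC$ on a letter $x$ from a reachable non-sink product state $(q,q')$, then $\B$ also has an $x$-transition from $q$. This is where the gadget design does the work — for instance the $d_l$ letters always have transitions from every $q_l$ (either $q_l\trans{d_l}q_\top$ or $q_l\trans{d_{l'}}q_{l''}$), the letter $a$ has transitions from $q_0$ and from every $q_l$, the $e_v$ letters send every $q_l$ to $q_\top$, and the $c_i$, $a_l$, $f_l$ letters each have the appropriate transitions (to $q_\top$ or updating the valuation) designed precisely so that no deadlock occurs — and one checks this by a short case analysis over the alphabet blocks $\{a,f_t\},\Gamma_\Lit,\Gamma_1,\Sigma_D,\Sigma_C,\Sigma_V$, noting that the structure of $L_\CC$ only ever enables a letter in a context where $\B$ can read it.

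Concretely I would organize the proof as: (1) state that all states of $\A$ are accepting, citing $F_\B=Q_\B$, the all-accepting choice for $\CC$, and $\top_\A$ accepting; (2) note that $\top_\A$ is an absorbing state reachable under any letter once entered; (3) prove the "no deadlock" claim by induction on the length of $w$, maintaining the invariant that after reading any prefix the run is either in $\top_\A$ or in a product state $(q,q')$ with $q'\notin\top_\CC$, and in the latter case handling an arbitrary next letter $x$ by the dichotomy "either $\CC$ goes to $\top_\CC$, hence $\A$ goes to $\top_\A$, or $x$ is one of the letters enabled by $L_\CC$ in that state, in which case $\B$ has an $x$-transition from $q$ by inspection of the transition table." Step (3) — and in particular matching up the language $L_\CC$ with the enabledness of letters in $\B$ — is the main (though routine) obstacle; everything else is bookkeeping about the product and the merged sink.
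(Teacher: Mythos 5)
Your parts (1) and (2) are fine and match the paper: all states of $\B$ and $\CC$ are accepting, $\top_\A$ is an accepting absorbing sink, so safety holds and it suffices to exhibit one complete run per word. The gap is in your step (3). The local enabledness claim you rely on --- ``whenever $\CC$ stays out of $\top_\CC$ on a letter $x$ from a reachable non-sink state $(q,q')$, then $\B$ has an $x$-transition from $q$'' --- is false, and it \emph{must} be false for the reduction to work: the whole mechanism of the encoding is that Player $1$ can play letters that are legal for $\CC$ but kill pebbles sitting in the wrong states of $\B$. Concretely, in the running example the word $a\,d_z\,f_z\,f_t\,a_y$ is a prefix of a word of $L_\CC$, yet the run of $\B$ that chose $q_x$ on the first letter is still in $q_x$ when $a_y$ arrives, and $q_x$ has no transition on $a_y$ (the only $\Gamma_\Lit$-transitions from a state $q_l$ are on $a_l$ and $a_{\ov{l}}$); likewise $e_v$ is readable only from $q_v,q_{\ov{v}}$, and $c_i$ only from those $q_l$ with $l$ appearing in $C_i$. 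So the inductive invariant fails, and it cannot be repaired by an online choice of run, since which run survives depends on letters not yet read.

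The correct argument is shorter and uses lookahead on just the second letter of $w$. Since $L_\CC$ begins with $a\Sigma_D$, any word whose first two letters do not form a prefix of some $a\,d_l$ drives $\CC$ into $\top_\CC$ within two steps, hence drives $\A$ into $\top_\A$ (recall that transitions $(q,q')\trans{x}\top_\A$ were added whenever $q'\trans{x}\top_\CC$, independently of whether $x$ is readable from $q$ in $\B$). If instead $w=a\,d_l\,v$, pick the run of $\B$ that guesses $q_l$ on the first letter and then takes $q_l\trans{d_l}q_\top$ --- the transition inserted precisely ``towards universality'' --- after which $q_\top$ absorbs $v$. Together with the trivial cases $\varepsilon$ and $a$, this covers all of $\Sigma^*$ with no global no-deadlock claim needed.
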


\begin{proof}
Notice that since $\B$ and $\CC$ are safety automata, $\A$ is a safety automaton as well.
We have to show that $\A$ accepts all words.

Let $w\in\Sigma^*$. If $w$ truncated to its first two letters is not a prefix of $ad_l$ for some $l\in\Lit$, then it immediately ends up in state $\top_\CC$ in $\CC$, and so it reaches $\top_\A$ in $\A$. So in this case, $w\in L(\A)$. Assume now that $w=ad_lv$ for some $l\in\Lit$ and $v\in\Sigma^*$.

Then the run $q_0\trans{a}q_l\trans{d_l} q_\top \trans{v} q_\top$ of $\B$ is a witness that $\A$ can reach $\top_\A$ when reading $w$. 

Finally, the words $\varepsilon$ and $a$ are also accepted by $\A$.

Therefore, any $w\in\Sigma^*$ is accepted by $\A$, and $\A$ is a universal safety automaton.
\end{proof}




We are now ready to prove that the construction of $\A$ performs the wanted reduction, thereby completing the proof of Theorem \ref{thm:EXPc}.

\subsection{Proof of correctness}
\label{sec:proofcor}

We prove in this section that the above construction allows to use the width game of $\A$ to emulate the game $\Gc$.

Let $k=|V|$, we want to prove that $\w(\A)>k$ if and only if Player $1$ wins in $I_c$.

We will show that the game $\Gw(\A,k)$ simulates the game $\Gc$, by establishing a correspondence between strategies for these two games.

\medskip
\noindent \underline{Player $1$ wins in $I_c$ $\Longrightarrow$ $\w(\A)>k$.}

Assume Player $1$ wins in $I_c$, with a winning strategy $\sigma_c$. We aim at building a winning strategy $\sigma_w$ for Player $1$ in $\Gw(\A,k)$. It means that Player $1$ can enforce a position of the game where the word played is in $L(\A)=\Sigma^*$, but all pebbles have been erased due to non-existing transitions.

We now define $\sigma_w$. Following the language $L_C$, Player $1$ starts by playing $a$.
Player $0$ can move the pebbles to any subset of $Q_\Lit$ of size at most $k$. In order to prevent Player $0$ from reaching $\top_\A$, Player $1$ will now play $d_l$ where $q_l$ is a state not occupied by a pebble. This puts all pebbles to the states corresponding to the initial valuation.

If not all variables are instantiated by a pebble, Player $1$ plays $e_v\in\Sigma_v$ such that $q_v$ and $q_{\ov{v}}$ do not contain a pebble. The resulting word $ad_le_v$ is in $L(\A)=\Sigma^*$, but Player $0$ fails to accept it, since no state occupied by a pebble can read $e_v$, so Player $1$ wins the game. We can therefore assume that Player $0$ uses all $k$ pebbles and reaches all states $q_l$ corresponding to the valuation $\alpha_\init$. In this case we define strategy $\sigma_w$ so that Player $1$ does not play a letter in $\Sigma_V$, as allowed by the $\varepsilon$ in the definition of $L_\CC$.

We now switch to the main dynamic of the game, and we will match certain positions of $\Gw(\A,k)$ to positions of the game $I_c$. The current position of $\Gw(\A,k)$, where $a\cdot d_l$ has been played and the $k$ pebbles are in the states of $Q_\Lit$ matching $\alpha_\init$, corresponds to the initial position $(1,\alpha_\init)$ of $I_c$.

More generally the positions reached after playing a word from $L_\CC$ will be matched to positions $(1,\alpha)$ of $I_c$, where $\alpha$ is described by the states $q_l$ occupied by the $k$ pebbles. We say that such a position of $\Gw(\A,k)$ is of type $1$. Similarly the positions reached after playing a word from $L_\CC(\Gamma_1^{|X_1|} \cdot f_t \cdot L_\val)$ will be matched to a position $(0,\alpha')$ of $I_c$, and are called positions of type $0$. 

We define $\sigma_w$ in positions of type $1$ in the following way: Let $\alpha_1$ be the values of variables in $X_1$ chosen by $\sigma_c$ in the matching position of $I_c$. The factor of $\Gamma_1^{|X_1|}$ played by $\sigma_w$ will explicit these values, by playing for each literal $l$ that is true in $\alpha_1$ the letter $f_l$. This switches the pebbles in $X_1$ to match the valuation $\alpha_1$, and leave the other variables (from $X_0\cup\{t\}$) unchanged.

The letter $f_t$ must then be played by Player $1$, in order to avoid losing by allowing Player $0$ to put his pebbles in $\top_\A$ (any other letter would lead the deterministic run of $\CC$ to $\top_\CC$). This moves the $t$ pebble to $q_t$, setting the value of $t$ to $1$, according to the definition of $\Gc$.

Player $1$ must now play a word in $L_\val$. Since $\sigma_c$ is winning, the current valuation $\alpha$ makes the formula true. The strategy $\sigma_w$ consists in witnessing this by choosing for each clause $C_i$ a literal $l$ from $\alpha$ that makes it true, and play $a_l$. This leaves the position of the pebbles unchanged.

We have now reached a position of type $0$. We define the strategy $\sigma_w$ for these positions.
First, Player $1$ must play the letter $a$. This allows Player $0$ to move pebbles from $Q_0$ freely, thereby choosing a new valuation for variables in $X_0$. Moreover it moves a pebble from $q_t$ to $q_{\ov{t}}$, setting the value of $t$ to $0$. Notice that by the definition of $\Gw(\A,k)$, Player $0$ could also duplicate some pebbles and erase some others, thereby setting some variables in $X_0$ to both true and false, and not assigning other variables from $V$.
If Player $0$ chooses to do this, the strategy $\sigma_w$ of Player $1$ will immediately punish it by playing $e_v$ where $v$ is a non-assigned variable, and as before this allows Player $1$ to win the game.
This allows us to continue assuming the pebbles describe a valuation $\alpha$ of all variables in $V$.
If $\alpha$ makes the formula true, we are back to a position of type $1$, and we continue with the strategy as described.

On the other hand, if we have reached a winning position for Player $1$ in $I_c$, i.e. if the valuation $\alpha$ makes the formula false, we show that Player $1$ can win $\Gw(\A,k)$. To do so, he plays a letter $c_i$ such that no literal in $C_i$ is true in $\alpha$. This way, no pebble is in a state where $c_i$ can be read, and no pebbles are present in the next position of $\Gw(\A,k)$. Since the word $w$ played until now is in the language $L(\A)=\Sigma^*$, this is a winning position for Player $1$ in $\Gw(\A,k)$.

Since $\sigma_c$ is winning, the game will eventually reach a position where the valuation chosen by Player $0$ makes the formula $\varphi$ false, hence $\sigma_w$ is a winning strategy for Player $1$ in $\Gw(\A,k)$, witnessing $\w(\A)>k$.

\medskip

\noindent \underline{Player $0$ wins in $I_c$ $\Longrightarrow$ $\w(\A)\leq k$.}

%
Assume that Player $0$ has a winning strategy $\sigma_0$ in $I_c$.
This means that this strategy avoids losing positions for Player $0$, either by playing forever, or by reaching a position that is losing for Player $1$. Moreover, since $\Gc$ is a safety game for Player $0$, $\sigma_0$ can be chosen positional, i.e. its move only depends on the current valuation $\alpha$.

We will show that this strategy $\sigma_0$ can be turned into a strategy $\tau_0$ that is winning for Player $0$ in $\Gw(\A,k)$, thereby witnessing $\w(\A)\leq k$.

The strategy $\tau_0$ consists in the following:
\begin{itemize}
\item on the first occurence of $a$, put the pebbles in all states $q_l$ with $l$ appearing in $\alpha_\init$ (or to any other valuation),
\item on other occurences of $a$, match the choice of $\sigma_0$ for the valuation of $X_0$ (according to the current valuation $\alpha$).
\end{itemize}
Other choices to be made by $\tau_0$ are described in the following.

First of all, we may assume that Player $1$ always plays words from $L_\CC$, otherwise he immediately loses $\Gw(\A,k)$, as the automaton $\A$ goes to $\top_\A$.

After the prefix from $a\Sigma_D$, Player $1$ has no interest in playing a letter in $\Sigma_V$, since the strategy $\tau_0$ assigned a value to each variable.

We are then in a position of type $1$, and Player $1$ must play a word in $\Gamma_1^{|X_1|}$. This allows him to choose a valuation for any variable in $X_1$. The letter $f_t$ then sets the value of $t$ to $1$, according to the rules of $\Gc$.

Player $1$ must now play a word from $L_\val$. We show that this forces him to prove that the current valuation $\alpha$ makes the formula $\varphi$ true. Indeed, for each clause $i$, Player $1$ must choose a literal $l$ appearing in $C_i$, and play $a_l$. If $l$ is currently false, i.e. there is a pebble in $q_{\ov{l}}$, Player $0$ can move this pebble to $q_\top$ and wins the game $\Gw(\A,k)$. Therefore, if Player $1$ cannot choose a valuation of $X_1$ setting $\varphi$ to true, Player $0$ wins the game $\Gw(\A,k)$.

Otherwise, we reach a position of type $0$. Letter $a$ allows Player $0$ to choose a valuation for $X_0$, and sets $t$ to $0$. Strategy $\tau_0$ is defined to do this accordingly to $\sigma_0$, and therefore this will always reach a valuation $\alpha$ setting $\varphi$ to true. If Player $1$ plays a letter from $\Sigma_V$, Player $0$ can reach $q_\top$ and win the game. If Player $1$ plays a letter $c_i$ from $\Sigma_C$, it will allow Player $0$ to reach $q_\top$, since one literal $l$ of clause $C_i$ is currently set to true, via a pebble in $q_l$.
Therefore, the only interesting move for Player $1$ is to go back to his move in a position of type $1$, and play a new valuation of $X_1$ via a word in $\Gamma_1^{|X_1|}$.

Since $\sigma_0$ is winning in $I_c$, either the play goes on forever in $I_c$, which means Player $0$ wins the corresponding play in $\Gw(\A,k)$, or Player $0$ is eventually able to reach $q_\top$, either because Player $1$ loses in $I_c$ via a bad valuation, or because he made bad choices in $\Gw(\A,k)$, for instance by playing a letter from $\Sigma_V$. Either way, the strategy $\sigma_0$ is winning for Player $0$ in $\Gw(\A,k)$.
\medskip

This achieves the proof that $\w(\A)>k$ if and only if Player $1$ wins in $I_c$.
Since the reduction from an instance of $I_c$ to an instance of the width problem for a universal safety NFA can be done in polynomial time, we showed the following theorem:

\begin{thm}
Computing the width of a universal safety NFA is EXPTIME-complete.
\end{thm}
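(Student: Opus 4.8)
The plan is to establish the two directions of the equivalence ``$\w(\A) > k$ iff Player $1$ wins $I_c$'' separately, where $k = |V|$ and $\A$ is the product automaton $\B \times \CC$ already constructed. The upper bound (membership in EXPTIME) is immediate: the width game $\Gw(\A,k)$ is a safety game of size $|\Qk| = O(|\A|^k)$, hence exponential in $|I_c|$, and safety games are solved in polynomial time in the size of the arena; combined with the earlier membership lemma, this part needs no further work. The heart of the matter is EXPTIME-hardness, obtained by reducing from the game $\Gc$ of \cite{SC79}, and the entire content is the correctness of the simulation between $\Gw(\A,k)$ and $\Gc$ on instance $I_c$.

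For the direction ``Player $1$ wins $I_c$ $\Longrightarrow$ $\w(\A) > k$'', I would take a winning strategy $\sigma_c$ for Player $1$ in $\Gc$ and translate it into a winning strategy $\sigma_w$ for Player $1$ in $\Gw(\A,k)$. The key bookkeeping device is to define a notion of \emph{type} for positions of $\Gw(\A,k)$: positions reached after a word in $L_\CC$ are type $1$, positions reached after a word in $L_\CC \cdot (\Gamma_1^{|X_1|} f_t L_\val)$ are type $0$, and I maintain the invariant that at a type-$\tau$ position the $k$ pebbles sit exactly on the states $q_l$ encoding a valuation $\alpha$ matching the $\Gc$-position $(\tau,\alpha)$. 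I then argue that Player $0$ is forced to respect this invariant: if Player $0$ duplicates pebbles or drops a variable, Player $1$ punishes with a letter $e_v \in \Sigma_V$, reaching a word in $L(\A) = \Sigma^*$ with no pebble surviving; if Player $1$ ever deviates from $L_\CC$ he would instead send $\A$ to the accepting sink $\top_\A$, so Player $1$ never does this. On a type-$1$ position, $\sigma_w$ plays the $\Gamma_1^{|X_1|}$-word encoding $\sigma_c$'s choice for $X_1$, then $f_t$, then an $L_\val$-word witnessing (clause by clause, via some true literal $l$ and the letter $a_l$) that $\alpha$ satisfies $\varphi$ — which exists precisely because $\sigma_c$ is winning; on a type-$0$ position $\sigma_w$ plays $a$ (forcing Player $0$'s $X_0$-choice and flipping $t$ to $0$) and loops back. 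When $\sigma_c$ eventually drives the play to a valuation $\alpha$ falsifying $\varphi$, there is a clause $C_i$ with no true literal, so Player $1$ plays $c_i$: every pebble is in a state where $c_i$ cannot be read, all pebbles vanish, and since the word played lies in $\Sigma^* = L(\A)$, this is a win for Player $1$ in $\Gw(\A,k)$.

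For the converse ``Player $0$ wins $I_c$ $\Longrightarrow$ $\w(\A) \le k$'', I use that $\Gc$ is a safety game for Player $0$, hence Player $0$ has a \emph{positional} winning strategy $\sigma_0$ depending only on the current valuation. I turn it into a strategy $\tau_0$ for Player $0$ in $\Gw(\A,k)$: on the first $a$, place the $k$ pebbles on the states encoding $\alpha_\init$; on subsequent $a$'s, move the $X_0$-pebbles according to $\sigma_0$ applied to the current valuation. I then go through the possible moves of Player $1$ and show none of them helps: a word outside $L_\CC$ sends $\A$ to $\top_\A$; a letter in $\Sigma_V$ after the initial block lets some pebble in $q_l$ reach $q_\top$ (since every variable is assigned); a word in $L_\val$ forces Player $1$ to exhibit, for each clause, a literal $l$ appearing in $C_i$ and play $a_l$, and if that literal is currently false then the pebble in $q_{\ov l}$ moves to $q_\top$ via the transition $q_{\ov l}\trans{a_{\ov l}} q_\top$ — so either Player $1$ cannot find a satisfying $X_1$-valuation (and loses, mirroring losing in $\Gc$) or the invariant is maintained; a letter $c_i$ in $\Sigma_C$ at a type-$0$ position lets the pebble on the true literal of $C_i$ jump to $q_\top$ (the valuation satisfies $\varphi$ because $\tau_0$ follows $\sigma_0$). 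Hence the only non-suicidal continuation for Player $1$ is to keep mirroring $\Gc$, where $\sigma_0$ guarantees the play is infinite (Player $0$ survives), so $\tau_0$ wins the safety game $\Gw(\A,k)$ and $\w(\A)\le k$.

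The main obstacle I anticipate is not any single step but the sheer amount of case analysis needed to verify that the deterministic ``referee'' automaton $\CC$, composed with $\B$, genuinely forces the turn structure of $\Gc$ — in particular that every attempt by Player $1$ to cheat (play out of order, skip $f_t$, use a wrong witness literal $a_{\ov l}$, drop a variable) is immediately fatal because it either hits $\top_\CC$ (hence $\top_\A$, an accepting sink that Player $0$ loves) or hits $q_\top$ in $\B$, and symmetrically that every attempt by Player $0$ to exploit the pebble-duplication freedom of $\Gw(\A,k)$ is punished by some $e_v$. Getting the bookkeeping of ``which letters are legal from which $q_l$'' exactly right, so that the invariant ``the $k$ surviving pebbles encode a full valuation'' is preserved across a full round, is the delicate part; once that is pinned down, the correspondence of winning conditions (safety on both sides, with $\top_\A$/$q_\top$ meaning Player $0$ wins and all-pebbles-erased-on-a-word-in-$\Sigma^*$ meaning Player $1$ wins) is routine. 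Since $\B$, $\CC$ and hence $\A$ all have size polynomial in $|I_c|$ and are computable in polynomial time, and $\A$ is by construction a universal safety automaton, the reduction is complete and yields EXPTIME-hardness of the width problem even restricted to universal safety NFAs.
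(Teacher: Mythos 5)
Your proposal is correct and follows essentially the same route as the paper: the same reduction from the game $\Gc$ of \cite{SC79} via the product $\B\times\CC$, the same classification of positions into types $0$ and $1$ with the invariant that the $k$ pebbles encode the current valuation, the same punishment mechanisms ($e_v$ for unassigned variables, $c_i$ for falsified clauses, $a_{\ov{l}}$ and $\top_\A$ against cheating), and the same use of positionality of Player $0$'s safety strategy in $\Gc$ for the converse direction. No substantive difference from the paper's argument.
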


\noindent By Corollary \ref{cor:univ}, we obtain the following result:

\begin{cor}
It is EXPTIME-complete to decide, given two NFAs $\A,\B$ and $k\geq 1$, whether $\A\inck\B$. 
This is already true when $\A$ is fixed to the trivial automaton $\Atriv$ and the input $\B$ is restricted to universal safety NFAs.
\end{cor}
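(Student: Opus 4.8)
The plan is to derive this corollary directly from the two results established just above: the EXPTIME-completeness of computing the width of a universal safety NFA, and the equivalence $\w(\A)\leq k \iff \Atriv\inck\A$ for universal $\A$ given by Corollary~\ref{cor:univ}. I would treat membership and hardness separately.

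For \emph{membership in EXPTIME}, I would first argue that $k$ may be assumed at most $|\B|$: a position of $\Gk(\A,\B)$ is a pair $(p,X)$ with $X\subseteq Q_\B$, so $|X|\leq|\B|$ holds automatically, and the bound ``$|X'|\leq k$'' on Duplicator's moves becomes vacuous as soon as $k\geq|\B|$. Hence $\A\inck\B$ iff $\A\inc_{\min(k,|\B|)}\B$, and one may replace $k$ by $\min(k,|\B|)\leq n$, where $n=|\A|+|\B|$. Theorem~\ref{thm:multiexp} then decides the relation in time $n^{O(k)}\leq n^{O(n)}$, i.e.\ in exponential time. Alternatively, one builds the safety game $\Gk(\A,\B)$, which has $O(|Q_\A|\cdot 2^{|Q_\B|})$ positions, and solves it in time polynomial in its size.

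For \emph{EXPTIME-hardness}, I would take an arbitrary input $(\A,k)$ of the width problem with $\A$ a universal safety NFA, which is EXPTIME-hard by the theorem just proved, and map it to the $k$-simulation instance $(\Atriv,\A,k)$. By Corollary~\ref{cor:univ}, $\w(\A)\leq k$ iff $\Atriv\inck\A$, so this is a correct reduction; it is computable in polynomial time, it fixes the first automaton to $\Atriv$, and it leaves the second automaton equal to $\A$, hence still a universal safety NFA. This yields EXPTIME-hardness already for the restricted problem in the statement, and a fortiori for the general one; combined with the upper bound it gives EXPTIME-completeness in both forms.

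I do not expect any genuine obstacle here: all the difficulty has already been spent in the reduction from the game $\Gc$ and in Corollary~\ref{cor:univ}. The only points that deserve a moment's attention are checking that the reduction preserves the ``universal safety'' restriction on the second automaton --- which is immediate, since that automaton is literally the input $\A$ --- and that capping $k$ at $|\B|$ is sound, which follows from the reachable pebble sets always being subsets of $Q_\B$ and hence of size at most $|\B|$ regardless of $k$.
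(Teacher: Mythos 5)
Your proposal is correct and follows essentially the same route as the paper: the hardness direction is exactly the paper's one-line derivation from Corollary~\ref{cor:univ} and the EXPTIME-hardness of the width problem on universal safety NFAs, and the membership direction (solving the exponential-size safety game $\Gk(\A,\B)$, with the harmless observation that $k$ can be capped at $|\B|$) matches the paper's standing upper-bound argument.
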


\begin{rem}\label{rem:notdup}
Notice that this proof also shows that the alternative versions of width and multipebble simulations where pebbles cannot be duplicated are also EXPTIME-complete. This is because our reduction actually only needs this kind of moves, and uses the $\Sigma_V$ gadget to forbid duplicating pebbles while erasing others. Moreover, all results from Section \ref{sec:simul} can be carried to this alternative version.
\end{rem}

\begin{rem}\label{rem:omega}
Although the present section deals with finite words, most results are immediately transferable to safety automata on infinite words. Any infinite run is accepting in a safety automaton. This acceptance condition is of particular interest in verification, as it describes very natural properties such as deadlock freeness of a system. See \cite{Baier} for an introduction to automata on infinite words for verification.
This also shows that the EXPTIME-hardness result can be lifted to any acceptance condition generalising the safety one, such as B\"uchi, coB\"uchi, parity.
\end{rem}
%
%
%
%
%


\section{CoB\"uchi Automata}\label{sec:cobuchi}

We now turn to the case of coB\"uchi automata, and their determinisation problem. Here, since GFG and DBP are no longer equivalent \cite{BKKS13,KS15}, aiming for a GFG automaton becomes a problem that is different from determinization via DBP automata. We will compare these two problems, and we will see that the class of GFG coB\"uchi automata is particularly interesting for several reasons.

First of all recall that NCA and DCA have same expressive power, i.e. the determinisation of coB\"uchi automata does not need to introduce more complex acceptance conditions. This follows from the breakpoint construction \cite{MH84} that we will generalise in this section to its incremental variant.

\subsection{Width of $\omega$-automata}
We define here the width of automata on infinite words in a general way, as the definition is independent of the accepting condition.

Let $\A=(Q,\Sigma,q_0,\Delta,\alpha)$ be an automaton on infinite words with acceptance condition $\alpha$, and $n=|Q|$ be the size of $\A$.

As before, we want to define the \emph{width} of a $\A$ as the minimum number of states that need to be tracked in order to deterministically build an accepting run in an online way.

We will use the same family of games $\Gw(\A,k)$ as in Section \ref{sec:width}, and they will only differ in the winning condition.

The game $\Gw(\A,k)$ is played on $\Qk$, starts in $X_0=\{q_0\}$, and the round $i$ of the game from a position $X_i\in \Qk$ is defined as follows:
\begin{itemize}
\item Player 1 chooses a letter $a_{i+1}\in \Sigma$.
\item Player 0 moves to a subset $X_{i+1}\subseteq \Delta(X_i,a_{i+1})$ of size at most $k$.
\end{itemize}
\noindent
An infinite play is winning for Player $0$ if whenever $a_1a_2\dots \in L(\A)$, the sequence  $X_0X_1X_2\dots$ contains an accepting run. That is to say there is a valid accepting run $q_0q_1q_2\dots$ of $\A$ on $a_1a_2\dots$ such that for all $i\in\N$, $q_i\in X_i$.

\begin{defi}
The width of $\A$, denoted $\w(\A)$, is the least $k$ such that Player $0$ wins $\Gw(\A,k)$.
\end{defi}

As before, an automaton $\A$ is GFG if and only if $\w(\A)=1$.

\subsection{GFG coB\"uchi automata}
We recall here previous results on GFG coB\"uchi automata.

The first result is the exponential succinctness of GFG NCAs compared to DCAs.
\begin{thmC}[Theorem 1, \cite{KS15}]\label{thm:succinct}
There is a family of languages $(L_n)_{n\in\N}$ such that for all $n$, $L_n$ is accepted by a coB\"uchi GFG automaton of size $n$, but any deterministic parity automaton for $L_n$ must have size in $\Omega\big(\frac{2^n}{n}\big)$. 
\end{thmC}

\noindent Moreover, each language $L_n$ can be chosen LTL-definable \cite{IK19}, hinting towards applicability of GFG NCAs in LTL synthesis.

This means that some GFG NCA only admit GFG strategies with exponential memories \cite{KS15}, i.e. the witness for an NCA being GFG can be of exponential size. Despite this fact, the next theorem shows that GFG NCAs can be recognised efficiently.

\begin{thmC}[Theorem 11, \cite{KS15}]\label{thm:GFG_P}
Given an NCA $\A$, it is in PTIME to decide whether $\A$ is GFG.
\end{thmC}

\noindent It was also shown that GFG coB\"uchi automata can be efficiently minimized:
\begin{thmC}[Theorem 21, \cite{AK19}]\label{thm:minCob}
Cob\"uchi automata with acceptance condition on transitions can be minimized in polynomial time.
\end{thmC}

The conjunction of these results make the coB\"uchi class particularly interesting in our setting: the succinctness allows us to potentially save a lot of space compared to classical determinisation, and Theorem \ref{thm:GFG_P} can be used to stop the incremental construction. Moreover, once a coB\"uchi GFG automata has been built, it can be minimized efficiently thanks to Theorem \ref{thm:minCob}. Since GFG automata suffice for many purposes, for instance in a context where we want to test for inclusion, or compose the automaton with a game, this makes the class of GFG coB\"uchi automata particularly interesting. 

We examine later the case where GFG automata are not enough and we are aiming at building a DCA instead.

\subsection{Partial breakpoint construction}\label{sec:partbreak}
We generalise here the breakpoint construction from \cite{MH84}, in the same spirit as Section \ref{sec:partial}.

Let us first briefly recall the breakpoint construction. If $\A = (Q,\Sigma,\Delta,q_0,F)$ is an NCA, then a state of its determinized automaton is of the form $(X,Y)$, with $Y\subseteq X$. The powerset construction is performed on both sets, but the $Y$-component deletes rejecting states. The new transition function $\delta$ is defined as $\delta(X,Y)= (\Delta(X),\Delta(Y)\cap F)$, if $Y\neq\emptyset$. States with an empty second component are ``breakpoints'': they are rejecting, but they allow to reset the second component to the first one: $\delta(X,\emptyset)=(\Delta(X),\Delta(X))$.
The resulting deterministic run will accept if and only if the second component is eventually non-empty, witnessing the existence of an accepting run in $\A$.
\medskip

We now describe the incremental version of this construction.
For a parameter $k$, we want the $k$-breakpoint construction to be able to keep track of at most $k$ states simultaneously.


Given an NCA $\A = (Q,\Sigma,\Delta,q_0,F)$, we define the $k$-breakpoint construction of $\A$ as the NCA $\Ak = (Q', \Sigma,\Delta',(\{q_0\},\{q_0\}),F')$, with

$Q' = \{(X,Y) | X,Y \in \Qk \text{ and } Y\subseteq X\}$,
$$\Delta'((X,Y),a) := 
\begin{cases}
\{(\Delta(X,a),\Delta(X,a))\} & \text{if } Y=\emptyset\ \text{and}\ |\Delta(X,a)|\leq k\\
	\{(X',X') | \ X' \subseteq \Delta(X,a), |X'| = k\} & \text{if } Y=\emptyset\ \text{and}\ |\Delta(X,a)|> k\\
    \{(\Delta(X,a),\Delta(Y,a)\cap F)\} & \text{if }Y\neq\emptyset\ \text{and}\ |\Delta(X,a)|\leq k\\
\multicolumn{2}{l}{\{(X',X'\cap ( \Delta(Y,a)\cap F)) \ | \ X' \subseteq \Delta(X,a), |X'| = k\}~~ \text{otherwise}}
    
\end{cases}$$

$F' := \{(X,Y) \in Q'\ |\ Y \neq\emptyset\}$

That is, a run is accepting in $\Ak$ if it visits the states of the form $(X,\emptyset)$ finitely many times.

\begin{lem}\label{lem:numbreak}
The number of states of $\Ak$ is at most $\sum_{i=0}^k \binom{n}{i} 2^i$, which is in $O\big(\dfrac{(2n)^k}{k!}\big)$. 
\end{lem}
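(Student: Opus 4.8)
The plan is to count the states of $\Ak$ directly from its definition. Recall that a state of $\Ak$ is a pair $(X,Y)$ with $X,Y\in\Qk$ and $Y\subseteq X$, so I first count pairs of this shape. The natural decomposition is to group the pairs by the set $X$: fix a set $X\subseteq Q$ with $|X|=i\leq k$, and observe that the admissible second components $Y$ are exactly the subsets of $X$, of which there are $2^i$ (note $Y$ is automatically of size at most $i\leq k$, so the constraint $Y\in\Qk$ is vacuous once $X$ is fixed). Summing over all choices of $X$ of each size $i$ from $0$ to $k$ gives the bound $\sum_{i=0}^k\binom{n}{i}2^i$ on the number of states of $\Ak$.

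The second part is the asymptotic estimate $\sum_{i=0}^k\binom{n}{i}2^i = O\!\big(\frac{(2n)^k}{k!}\big)$. Here I would reuse the same elementary inequality $\binom{n}{i}\leq \frac{n^i}{i!}$ already invoked in the proof of Lemma~\ref{lem:sizek}, giving $\binom{n}{i}2^i\leq \frac{(2n)^i}{i!}$. Then, exactly as in that earlier proof, I bound each term $\frac{(2n)^i}{i!}$ by $\frac{(2n)^k}{k!}$ for $i\leq k$ (valid for $2n\geq k$, which holds in the regime of interest since $k\leq n$), so that $\sum_{i=0}^k\binom{n}{i}2^i\leq (k+1)\frac{(2n)^k}{k!}$. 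A slightly more careful argument, treating the sum as dominated by its geometric-like tail near $i=k$, shows the sum is in fact $O\!\big(\frac{(2n)^k}{k!}\big)$ without the extra linear factor; but since only a big-$O$ statement is claimed and the exponential prefactor in determinisation bounds is what matters, even the crude $(k+1)\frac{(2n)^k}{k!}=O\!\big(\frac{(2n)^{k+1}}{k!}\big)$ style estimate is essentially enough and I would just note that the dominant term is $\frac{(2n)^k}{k!}$.

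I do not anticipate a genuine obstacle here: the statement is a routine counting lemma of the same flavour as Lemma~\ref{lem:sizek}, and the only mildly delicate point is making the phrase ``which is in $O(\tfrac{(2n)^k}{k!})$'' precise --- one wants the sum, not just its largest term, to be $O(\tfrac{(2n)^k}{k!})$. The cleanest way to see this is to write $\sum_{i=0}^k \frac{(2n)^i}{i!} = \frac{(2n)^k}{k!}\sum_{i=0}^k \frac{k!}{i!}(2n)^{i-k} = \frac{(2n)^k}{k!}\sum_{j=0}^k \frac{k!}{(k-j)!}(2n)^{-j}$, and bound $\frac{k!}{(k-j)!}\leq k^j$, so the inner sum is at most $\sum_{j\geq 0}(k/2n)^j \leq 2$ as soon as $2n\geq 2k$, i.e. $n\geq k$, which is exactly the assumption. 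Hence $\sum_{i=0}^k\binom{n}{i}2^i \leq 2\cdot\frac{(2n)^k}{k!}$, completing the proof.
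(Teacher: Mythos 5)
Your proof is correct and follows essentially the same route as the paper: count the pairs $(X,Y)$ by the size $i$ of $X$ to get $\sum_{i=0}^k\binom{n}{i}2^i$, then apply $\binom{n}{i}\le n^i/i!$. The paper disposes of the sum a bit more directly --- it bounds $\binom{n}{i}\le n^k/k!$ for all $i\le k$ and sums the geometric series $\sum_{i=0}^k 2^i\le 2^{k+1}$, obtaining $2\cdot\frac{(2n)^k}{k!}$ in one step --- so your more careful tail estimate, while valid, is not needed.
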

\begin{proof}
A state of $\Ak$ is of the form $(X,Y)$ with $|X|\leq k$ and $Y\subseteq X$. Therefore, there are at most $\sum_{i=0}^k \binom{n}{i} 2^i$ such states.
Since $\binom{n}{i}\leq \frac{n^i}{i!}$, we can bound the number of states by $\sum_{i=0}^k \frac{n^k}{k!}2^i\leq \frac{n^k}{k!}2^{k+1}=O\big(\frac{(2n)^k}{k!}\big)$
\end{proof}

\begin{lem}\label{lem:breakpoint_k} $L(A)=L(\Ak)$, and $\w(\A)\leq k \iff \Ak$ is GFG.
\end{lem}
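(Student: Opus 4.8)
The plan is to split the statement into the two assertions $L(\A)=L(\Ak)$ and $\w(\A)\le k\iff\Ak$ is GFG, and to reduce both to a single combinatorial observation: for every run $(X_0,Y_0)(X_1,Y_1)\cdots$ of $\Ak$ on a word $w=a_1a_2\cdots$, this run is accepting (i.e.\ $Y_i\neq\emptyset$ for all $i$ from some point on) if and only if the sequence of first components $X_0X_1\cdots$ contains a coB\"uchi-accepting run of $\A$ on $w$. The ``only if'' direction is the real work: once $Y_i\neq\emptyset$ for all $i\ge m$, one has $Y_{i+1}\subseteq\Delta(Y_i,a_{i+1})$ and $Y_{i+1}\subseteq F$ for all $i\ge m$, so the DAG with vertices $\{(i,q):i\ge m,\ q\in Y_i\}$ has nonempty levels of size $\le k$ in which every vertex at level $>m$ has a predecessor; a K\"onig-style argument then produces an infinite path $q_mq_{m+1}\cdots$ through the $Y_i$ with $q_i\in Y_i\subseteq X_i$ and $q_i\in F$ for $i>m$, and since $X_0=\{q_0\}$ and $X_{i+1}\subseteq\Delta(X_i,a_{i+1})$ this path extends backwards through the $X_i$ to a genuine run of $\A$ on $w$, which is coB\"uchi-accepting. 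The ``if'' direction is the easier half and is described next.

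For $L(\A)=L(\Ak)$: the inclusion $L(\Ak)\subseteq L(\A)$ is immediate from the observation. For the converse, take $w\in L(\A)$ with a coB\"uchi-accepting run $p_0p_1\cdots$ of $\A$, say $p_i\in F$ for all $i\ge N$, and run $\Ak$ along $w$ choosing, at each pruning step, a $k$-subset $X_{i+1}$ containing $p_{i+1}$ (possible, inductively, since $p_{i+1}\in\Delta(p_i,a_{i+1})$ and $p_i\in X_i$). Then breakpoints occur only finitely often: if $Y_i=\emptyset$ for some $i\ge N$, the second component resets to $X_{i+1}\ni p_{i+1}$, and inductively $p_j\in Y_j$ forces $p_{j+1}\in\Delta(Y_j,a_{j+1})\cap F$ together with $p_{j+1}\in X_{j+1}$, hence $p_{j+1}\in Y_{j+1}$; so no breakpoint occurs after time $i$. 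Thus all breakpoints lie before $N$, or none occurs from $N$ on, and in either case the run is accepting, so $w\in L(\Ak)$. The same computation, with ``$p_i$ lies in the given run's $X_i$'' in place of the chosen $X_{i+1}$, yields the ``if'' direction of the observation.

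For $\w(\A)\le k\iff\Ak$ is GFG I would argue as in Lemma \ref{lem:AkGFG}, transferring strategies between $\Gw(\A,k)$ and $\Ak$. For the direction $\Ak$ GFG $\Rightarrow\w(\A)\le k$: a GFG strategy for $\Ak$ on a word $w$ produces a run $(X_i,Y_i)_i$; letting Player $0$ play the $X_i$ in $\Gw(\A,k)$ is legal, and whenever $w\in L(\A)=L(\Ak)$ that run is accepting, so by the observation $X_0X_1\cdots$ contains an accepting run of $\A$, which is exactly Player $0$'s winning condition. For the converse, given a winning strategy $\tau$ of Player $0$ in $\Gw(\A,k)$, I would first ``inflate'' it: define $X_i^+$ by $X_0^+=\{q_0\}$, $X_{i+1}^+=\Delta(X_i^+,a_{i+1})$ when this set has size $\le k$, and otherwise $X_{i+1}^+$ a $k$-element subset of $\Delta(X_i^+,a_{i+1})$ containing $\tau$'s move $X_{i+1}$; one checks $X_i\subseteq X_i^+$ throughout, and since the winning condition ``$X_0X_1\cdots$ contains an accepting run'' is monotone under enlarging the sets, $\tau^+$ is again winning. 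Now the sequence $(X_i^+,Y_i^+)_i$, where $Y_i^+$ is the second component that $\Delta'$ forces along $X_0^+X_1^+\cdots$, is a legitimate run of $\Ak$, so $\sigma(a_1\cdots a_i):=(X_i^+,Y_i^+)$ is a well-defined strategy for $\Ak$; and when $w\in L(\A)$, $\tau^+$ being winning gives a coB\"uchi-accepting run of $\A$ inside the $X_i^+$, whence the observation makes $(X_i^+,Y_i^+)_i$ accepting. So $\sigma$ witnesses that $\Ak$ is GFG.

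The one place where care is needed — and the step I expect to be the main obstacle — is exactly this last direction: a GFG strategy for $\Ak$ must move to $(\Delta(X,a),\cdot)$ whenever $|\Delta(X,a)|\le k$, so it cannot prune below size $k$ on those letters, while Player $0$ in $\Gw(\A,k)$ is free to do so. The monotone‑inflation trick above is what reconciles the two; everything else is the K\"onig argument of the first paragraph together with routine checking of the transition tables of $\Ak$.
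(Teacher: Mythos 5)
Your proof is correct, and its skeleton --- transferring strategies between $\Gw(\A,k)$ and $\Ak$ by projecting onto, respectively reconstructing, the first components --- is the same as the paper's. The organisation differs in two ways worth noting. First, you factor everything through a single equivalence (a run of $\Ak$ is accepting iff its first components contain an accepting run of $\A$), proved directly by a K\"onig-lemma argument on the levels $Y_i$, which you then use both for $L(\Ak)\subseteq L(\A)$ and for the direction ``$\Ak$ GFG $\Rightarrow \w(\A)\le k$''. The paper instead proves $L(\Ak)\subseteq L(\A)$ by comparing the run of $\Ak$ with the run of the full breakpoint DCA of \cite{MH84} and importing that construction's correctness, and handles the GFG-to-width direction by a separate contradiction argument (if no accepting run threads the $X_i$, every run through $Y_N$ leaves $F$ or dies by some finite time $K$, forcing $Y_K=\emptyset$) --- which is exactly your K\"onig argument in contrapositive form. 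Your version is more self-contained; the paper's is shorter because it reuses \cite{MH84}. Second, for ``$\w(\A)\le k\Rightarrow \Ak$ GFG'' you correctly flag that a winning strategy in $\Gw(\A,k)$ need not respect $\Ak$'s transition table, which forces $X'=\Delta(X,a)$ when that set is small and $|X'|=k$ otherwise; your monotone inflation handles both cases, whereas the paper's ``w.l.o.g.\ Player~$0$ takes the largest possible set'' explicitly addresses only the $|\Delta(X,a)|\le k$ case, so on this point your treatment is in fact slightly more careful. I see no gaps.
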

\begin{proof}


%
%

\newcommand{\sigmaGFG}{\sigma_{GFG}}

%

First let us show that $L(\A)=L(\Ak)$.

Let $w\in L(\A)$, witnessed by an accepting run $\rho=q_0q_1q_2\dots$. The run $\rho$ can be used to resolve the nondeterminism of $\Ak$ while reading $w$, by choosing at each step $i$ as first component any set $X$ containing $q_i$. Since for all $i$ big enough, $q_i\in F$, after this point there is at most one empty second component in the run of $\Ak$, and therefore $w\in L(\Ak)$.

Conversely, let $w\in L(\Ak)$, witnessed by an accepting run $\rho=(X_0,Y_0)(X_1,Y_1)\dots$
We consider the DCA $\D$ obtained from $\A$ via the breakpoint construction \cite{MH84}. The states of $\D$ are of the form $(X,Y)$ with $X\supseteq Y$, and its acceptance condition is the same as the one of $\Ak$.
Let $\rho'=(X_0',Y_0')(X_1',Y_1')\dots$ be the run of $\D$ on $w$. By definition of $A_k$, for all $i\in\N$ we have $X_i\subseteq X_i'$ and $Y_i\subseteq Y_i'$. Since $Y_i$ is empty for finitely many $i$, it is also the case for $Y_i'$, and therefore $\rho'$ is accepting. We obtain $w\in L(\D)=L(\A)$.

Now we shall show $\w(\A)\leq k \Longrightarrow \Ak$ is GFG.


Assume that there is a winning strategy $\sigma_w:\Sigma^*\to \Qk$ for Player 0 in $\Gw(\A,k)$. We show that this induces a GFG strategy $\sigmaGFG:\Sigma^*\to Q'$ for $\Ak$. 
First, notice that without loss of generality, we can assume that for any $(u,a)\in \Sigma^*\times\Sigma$ such that $|\Delta(\sigma_w(u),a)|\leq k$, we have $\sigma_w(ua)=\Delta(\sigma_w(u),a)$. Indeed, it is always better for Player 0 to choose a set as big as possible. 

Using this assumption, the strategy $\sigmaGFG$ is naturally defined from $\sigma_w$ by relying on the first component, i.e. $\sigmaGFG(u):=(X',Y')$, where $X'=\sigma_w(u)$, and $Y'$ is forced by the transition table of $\Ak$. That is, if $Y\neq \emptyset$ we have $Y'=X'\cap \Delta(Y,a)\cap F$, and else ($Y=\emptyset$) we have $Y'=X'$.
We show that $\sigmaGFG$ is indeed a GFG strategy. Let $w$ be an infinite word in $L(\Ak)=L(\A)$. We must show that the run $\rho_w=(X_0,Y_0)(X_1,Y_1)(X_2,Y_2)\dots$ of $\Ak$ induced by $\sigmaGFG$ on $w$ is accepting.
Since $\sigma_w$ is a winning strategy in $\Gw(\A,k)$, there is an accepting run $\rho=q_0q_1q_2\dots$ of $\A$ such that for all $i\in\N$, $q_i\in X_i$. This means that there is $N\in\N$ such that for all $i\geq N, q_i\in F$. 
If for all $i\geq N$, $Y_i\neq\emptyset$, the run $\rho_w$ is accepting. Otherwise, let $M>N$ be such that $Y_M=\emptyset$. By definition of $\Ak$, we get $Y_{M+1}=X_{M+1}$. For $i\geq M+1$, we will always have $q_i\in Y_i$, by definition of $\Ak$, therefore $Y_i\neq \emptyset$. We can conclude that $\rho_w$ is accepting, and therefore $\Ak$ is GFG.

It remains to prove that $\Ak$ is GFG $\Longrightarrow \w(\A)\leq k$.

Let $\sigmaGFG:\Sigma^*\to Q'$ be a GFG strategy for $\Ak$. We build a strategy $\sigma_w:\Sigma^*\to \Qk$ for Player 0 in $\Gw(\A,k)$, witnessing that $\w(\A)\leq k$.

For all $u\in\Sigma^*$, we define $\sigma_w(u)$ to be the first component of $\sigmaGFG(u)$.
Let $w\in L(\A)$
= $L(\Ak)$, so the run $\rho=(X_0,Y_0)(X_1,Y_1)(X_2,Y_2)\dots$ induced by $\sigmaGFG$ on $w$ is accepting. We have to show that the play $\pi=X_0X_1X_2\dots$ is winning for Player 0 in $\Gw(\A,k)$, i.e. that there exists an accepting run $\rho_\pi=q_0q_1q_2\dots$ of $\A$ with $q_i\in X_i$ for all $i\in\N$. Assume no such run exists, i.e. all runs included in $\pi$ are rejecting.
Let $N\in\N$ be such that for all $i\geq N$, $Y_i\neq\emptyset$. Any run included in $\pi$ and starting in $q\in Y_N$ must encounter a non-accepting state. This means that there is $K>N$ such that between indices $N$ and $K$, every run included in $\pi$ contains a non-accepting state. By definition of $\Ak$, this implies there is $Y_i$ with $i>N$ such that $Y_i=\emptyset$. This contradicts the definition of $N$, therefore there must be an accepting run $\rho_\pi$ included in $\pi$.
\end{proof}



%
%
%
%
%

\subsection{Incremental construction of GFG NCA}

Supppose we are given an NCA $\A$, and we want to build an equivalent GFG automaton.

We can do the same as in Section \ref{sec:progNFA}: incrementally increase $k$ and test whether $\Ak$ is GFG, which is in PTIME by Theorem \ref{thm:GFG_P}. However in the coB\"uchi setting, the GFG automaton is not necessarily DBP, and can actually be more succinct than any deterministic automaton for the language (Theorem \ref{thm:succinct}).

If we are in a context where we are satisfied with a GFG automaton, such as synthesis or inclusion testing, this procedure can provide us one much more efficiently than determinisation.

Indeed, the example from Theorem \ref{thm:succinct} showing that GFG NCA are exponentially succinct compared to deterministic automata can be easily generalised to any width. For instance if our procedure is applied to the product of the GFG NCA of size $n$ from Theorem \ref{thm:succinct} with the one from Example \ref{ex:width2}, our construction will stop at the second step and generate a GFG automaton of quadratic size. This shows that the incremental construction for finding an equivalent GFG NCA can be very efficient compared to determinisation.

\subsection{Complexity of the width problem for NCAs}

As stated in Remark \ref{rem:omega}, directly computing the width of an NCA is EXPTIME-hard. 
The above construction together with Lemma \ref{lem:breakpoint_k} gives an EXPTIME algorithm solving the width problem for an input $\A,k$ with $\A$ an NCA: build $\A_k$, and test whether it is GFG in polynomial time. This shows that the EXPTIME-completeness of the width problem can be lifted to the coB\"uchi condition.

\begin{cor}
The width problem for NCAs is EXPTIME-complete.
\end{cor}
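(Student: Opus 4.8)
The plan is to establish the two bounds separately, assembling machinery already developed in the paper. For EXPTIME-hardness, I would invoke Remark~\ref{rem:omega}: the EXPTIME-hardness of the width problem was proved for universal safety NFAs via the reduction from the combinatorial game $\Gc$ of Section~\ref{sec:proofcor}, and since an automaton whose states are all accepting accepts, under the coB\"uchi condition, exactly the same infinite words as under the safety condition, the very same reduction produces a coB\"uchi instance. Nothing in the construction of $\A$ or in the correctness argument of Section~\ref{sec:proofcor} depends on the finite-word semantics: $\A$ is a safety automaton and the width game $\Gw(\A,k)$ is defined uniformly, so the equivalence ``$\w(\A)>k$ iff Player~$1$ wins $I_c$'' carries over verbatim once plays are read as infinite. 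Hence the width problem is already EXPTIME-hard when the input is restricted to NCAs.

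For the EXPTIME upper bound, I would use the partial breakpoint construction of Section~\ref{sec:partbreak}. Given an input $\A,k$ with $\A$ an NCA on $n$ states, first build $\Ak$, the $k$-breakpoint construction. By Lemma~\ref{lem:numbreak}, $\Ak$ has $O\big((2n)^k/k!\big)$ states, hence size at most exponential in the size of the input (and only polynomial when $k$ is fixed). By Lemma~\ref{lem:breakpoint_k}, $\w(\A)\leq k$ if and only if $\Ak$ is GFG. Finally apply Theorem~\ref{thm:GFG_P}: deciding whether an NCA is GFG takes time polynomial in its size, hence time exponential in $|\A|$ when applied to $\Ak$. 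Composing these steps yields an EXPTIME decision procedure for ``$\w(\A)\leq k$'' (the degenerate case $k\geq n$ being answered positively outright, since width never exceeds the number of states).

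Combining the two bounds gives EXPTIME-completeness. I do not expect a genuine obstacle here: the statement is a corollary obtained by stitching together Lemma~\ref{lem:breakpoint_k}, Lemma~\ref{lem:numbreak}, Theorem~\ref{thm:GFG_P}, and Remark~\ref{rem:omega}. The only point requiring a word of care is the hardness transfer, namely checking that the reduction of Section~\ref{sec:proofcor} is insensitive to the passage from safety on finite words to coB\"uchi on infinite words; as noted above this is immediate because the target automaton is a safety automaton, so every infinite run is accepting and the simulation of $\Gc$ is unchanged.
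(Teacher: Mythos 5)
Your proposal is correct and follows essentially the same route as the paper: EXPTIME-hardness via Remark~\ref{rem:omega} (the reduction's automaton is a universal safety automaton, so the argument transfers unchanged to the coB\"uchi condition), and membership via building $\Ak$ with the $k$-breakpoint construction and checking GFG-ness in time polynomial in $|\Ak|$ using Lemma~\ref{lem:breakpoint_k} and Theorem~\ref{thm:GFG_P}. No gaps.
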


\subsection{Aiming for determinism}

In cases where a GFG automaton is not enough, and we want instead to build a DCA, we can test whether the current automaton is DBP instead of GFG in the incremental algorithm. If we find the automaton is DBP, we can remove the useless transitions, and obtain an equivalent DCA. This procedure will always stop, as in the worst case it will eventually reach the breakpoint construction, which directly builds a DCA.

Notice that the number of steps in this procedure corresponds to an alternative notion of width that can be called \emph{det-width}. The det-width of an automaton $\A$ is the least $k$ such that Player $0$ has a positional winning strategy in $\Gw(\A,k)$. Det-width always matches width on finite words by Theorem \ref{thm:loding}, but the notions diverge on infinite words.

This section studies the complexity of checking whether an NCA is DBP.
The next theorem shows that surprisingly, this is harder to check than being GFG for NCAs.

\begin{thm}\label{thm:npc} Given an NCA $\A$, it is NP-complete to check whether it is DBP.\end{thm}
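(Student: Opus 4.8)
The statement has two halves: NP-membership and NP-hardness. For membership, the plan is to guess a pruning of the transition relation of $\A$ that makes it deterministic --- i.e. for each state $q$ and letter $a$, select exactly one outgoing transition --- and then verify in polynomial time that the resulting DCA $\D$ is equivalent to $\A$. Equivalence of a DCA with an NCA is checkable in polynomial time: $L(\D)\subseteq L(\A)$ holds automatically since $\D$ is a sub-automaton of $\A$, and $L(\A)\subseteq L(\D)$ can be tested by complementing $\D$ (a DCA complements to a DBA of the same size) and checking emptiness of $L(\A)\cap\overline{L(\D)}$, which is a polynomial intersection-and-emptiness computation on B\"uchi automata. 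So the whole procedure is a nondeterministic polynomial-time guess-and-check, giving NP-membership. (One subtlety to address: the guessed pruning should leave the automaton complete, or we should allow partial determinism; I would just require a total deterministic transition function, which is without loss of generality since we can guess to keep any one transition.)

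For NP-hardness, the plan is a reduction from a suitable NP-complete problem; the natural candidate is a variant of SAT, e.g. $3$-SAT or a covering/hitting problem. The idea is to build an NCA in which a "pruning choice'' at the nondeterministic branching points of $\A$ encodes a truth assignment to the variables of the formula, and the coB\"uchi acceptance condition is used to enforce that a pruning is language-preserving if and only if the corresponding assignment satisfies the formula. Concretely I would have a gadget where, from some state, on a dedicated letter $\A$ branches into two states $q_x$ and $q_{\bar x}$ for each variable $x$; keeping the transition to $q_x$ (resp. $q_{\bar x}$) in the pruned automaton corresponds to setting $x$ to true (resp. false). Then clause gadgets would be designed so that a certain family of input words is accepted (with the coB\"uchi condition satisfied) from $q_\ell$ precisely when literal $\ell$ satisfies the relevant clause; a pruned automaton is then equivalent to $\A$ exactly when, for every clause, the retained literal-states cover that clause, i.e. the assignment is satisfying. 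The coB\"uchi condition is what lets a single wrong choice be "fatal'': a word that loops forever through rejecting states in the pruned automaton but has an accepting run in $\A$ witnesses non-equivalence.

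The main obstacle I expect is getting the hardness gadget exactly right so that \emph{the only way} to obtain an equivalent deterministic automaton is by a consistent, satisfying assignment --- in particular ruling out "cheating'' prunings that keep inconsistent literal choices across different occurrences, or that exploit states unrelated to the variable encoding. This typically requires (i) making the automaton's nondeterminism occur at a single, clearly identified place so that a pruning really is a global assignment to variables rather than a per-context choice, and (ii) ensuring every deterministic sub-automaton that is \emph{not} induced by a satisfying assignment genuinely fails to accept some word of $L(\A)$ with the coB\"uchi condition. I would handle (i) by a "commit once'' structure (e.g. a self-looping hub from which each variable's branch is taken exactly once, with the choice then frozen), and (ii) by a clause-checking phase whose accepting words exercise, for each clause, the requirement that at least one of its literal-states survived the pruning. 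Verifying the two directions of the reduction (satisfying assignment $\Rightarrow$ equivalent pruning, and equivalent pruning $\Rightarrow$ satisfying assignment) is then a routine but careful case analysis on runs.
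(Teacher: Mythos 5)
Your NP-membership argument is correct and is essentially the paper's: guess the pruning, note that $L(\D)\subseteq L(\A)$ is automatic, and test $L(\A)\subseteq L(\D)$ by complementing the DCA $\D$ into a DBA and searching for an accepting lasso in the product, which is polynomial. That half is fine.

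The hardness half, however, is a genuine gap: you describe a reduction strategy from SAT but never construct the gadget, and you yourself identify the crucial step --- ensuring that \emph{every} non-equivalent pruning fails and that \emph{only} satisfying assignments yield equivalent prunings --- as an unresolved obstacle. A plan plus an acknowledged open difficulty is not a proof; the entire content of the hardness direction lives in that gadget. The difficulty is real: a pruning is a per-state, per-letter choice, not a per-variable choice, and the NCA's language must be independent of the pruning while acceptance of specific witness words must depend on it through the coB\"uchi condition; arranging all of this around clause coverage is delicate. The paper sidesteps this by reducing from a problem whose combinatorics already \emph{is} a pruning problem, namely Hamiltonian Cycle in a directed graph: each vertex $i$ becomes a three-state cloud $\{p_i,q_i,r_i\}$ with the only nondeterminism at $r_i$ on a separator letter $\#$ (one transition $r_i\to p_k$ per edge $(i,k)$), so a pruning is exactly a choice of one out-edge per vertex, i.e.\ a disjoint union of cycles. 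The accepting states are the $p_i,q_i$, the language is $\bigcup_i(\Sigma'\#)^*(a_i\#)^\omega$, and the pruned automaton accepts $(a_i\#)^\omega$ only if cloud $i$ is reachable from the initial cloud; hence the pruning preserves the language iff the chosen out-edges form a single cycle through all vertices, i.e.\ a Hamiltonian cycle. If you want to salvage your approach you must actually build and verify a SAT gadget meeting your conditions (i) and (ii); as written, the hardness direction is not established.
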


\noindent We first show the hardness with the following lemma.

\begin{lem}\label{lem:NPh}
Checking whether an NCA is DBP is NP-hard.
\end{lem}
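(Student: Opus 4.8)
The plan is to reduce from a known NP-complete problem; the most natural candidate is 3-SAT (or possibly the problem of whether a graph is 3-colorable, but 3-SAT is cleaner for encoding "nondeterministic guesses that must be globally consistent"). Given a 3-CNF formula $\varphi = C_1 \wedge \dots \wedge C_m$ over variables $x_1, \dots, x_n$, I would construct an NCA $\A_\varphi$ that is DBP if and only if $\varphi$ is satisfiable (or its negation, depending on which direction is easier to engineer). The key idea: an equivalent deterministic automaton obtained by pruning must commit, up front and uniformly, to a single transition at each nondeterministic choice point; if the "correct" choice depends on information that only arrives later in the word, no consistent pruning exists, whereas the original NCA stays correct because it can branch. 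So I want the nondeterministic branching at the start of a run to encode a guessed truth assignment, and the rest of the word to "test" clauses; a fixed pruning corresponds exactly to a fixed assignment, which works for all inputs iff that assignment satisfies every clause, i.e.\ iff $\varphi$ is satisfiable.

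Concretely, I would have the automaton, on reading an initial letter, nondeterministically move to one of $2^n$ "assignment states" — but that is exponential, so instead I would spread the guessing out: read a word of the form $v_1 v_2 \cdots v_n$ where the automaton guesses the truth value of $x_i$ while reading $v_i$, keeping a small amount of state. To make pruning correspond to a \emph{global} assignment rather than a per-letter choice, the structure must force the same pruning decision to be reused, which I would arrange by having a single "choice gadget" per variable that is entered once. After the assignment phase, the input presents a clause index $j$ (or cycles through clauses), and the automaton must be able to verify that the guessed assignment satisfies $C_j$ using a coB\"uchi (eventually-always-accepting) condition: a run that picked a bad assignment for $C_j$ gets trapped visiting rejecting states infinitely often, while a good assignment lets the run settle into accepting states. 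I need $L(\A_\varphi)$ to be independent of satisfiability (so that DBP is the only thing varying) — e.g.\ the language is all words of the correct syntactic shape — and I need the coB\"uchi condition to be essential so that a wrong early guess genuinely cannot be corrected. Membership in NP is the easy direction already hinted at earlier: a DBP witness is a choice of transitions (polynomial size), and one then checks in polynomial time (or at worst NP, but really PTIME) that the resulting deterministic automaton is equivalent to $\A$ — equivalence of a DCA with the NCA $\A$ can be checked in PTIME since one side is deterministic.

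The main obstacle I anticipate is making the reduction \emph{tight} in the coB\"uchi setting: I must ensure (i) that pruning to a deterministic sub-automaton forces exactly the combinatorics of choosing one assignment, with no "cheating" by delaying the commitment (this is why the choice gadgets must be visited exactly once and be genuinely nondeterministic, not re-enterable), and (ii) that the automaton is \emph{not} DBP when $\varphi$ is unsatisfiable — here I must rule out clever prunings that exploit the $X$-component/breakpoint-style tricks or the fact that coB\"uchi runs only need to be \emph{eventually} accepting. Point (ii) is subtle because coB\"uchi acceptance is forgiving: a deterministic pruning could in principle be "wrong" finitely often. To block this I would feed the clause tests infinitely often (the input word cycles through all clauses forever), so a pruning that fails clause $C_j$ fails it infinitely often on some input, killing acceptance. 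Getting all these gadgets to interlock while keeping the construction polynomial-size — and carefully verifying that $L(\A_\varphi)$ really does not depend on $\varphi$'s satisfiability — is where the real work lies; the rest (NP membership, the high-level correspondence between prunings and assignments) should be routine.
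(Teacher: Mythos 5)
There is a genuine gap, and it is fatal to the architecture you chose rather than a detail to be filled in. Your reduction hinges on a single run of $\A_\varphi$ guessing a truth assignment for all $n$ variables during an initial phase and then \emph{remembering} it so that arbitrary clauses can be tested later. A run of an automaton is one path through the state space, so after the guessing phase the current state alone must determine the answer to every future clause test; that forces the automaton to distinguish all $2^n$ assignments, i.e.\ exponentially many states. Spreading the guess over letters $v_1\cdots v_n$ does not help: you can amortize the \emph{choices}, but not the \emph{memory}. There is a second, independent tension you half-notice but do not resolve: if the input cycles through all clauses forever and a run is coB\"uchi-accepting only when its assignment satisfies every clause seen infinitely often, then $L(\A_\varphi)$ is nonempty on such words iff $\varphi$ is satisfiable, so the language itself (not just DBP-ness) tracks satisfiability; if instead the word eventually settles on one clause, the up-front guess made before that clause is revealed cannot be correct for all continuations, so the NCA itself fails to accept its intended language unless you allow re-guessing --- which then reopens the question of what prunings can do.

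The paper avoids both problems by reducing from Hamiltonian Cycle instead of SAT. There the nondeterministic commitment is \emph{local and repeated}: each vertex $i$ becomes a three-state cloud, the only nondeterminism is choosing an out-edge of $i$ each time the cloud is exited, and the memory a run needs is just ``which cloud am I in'' ($O(n)$ states). The language is $\bigcup_i (\Sigma'\#)^*(a_i\#)^\omega$ --- words that eventually settle on one vertex --- which the NCA accepts using strong connectivity to navigate to whichever cloud is eventually demanded. A pruning fixes one out-edge per vertex, hence a union of disjoint cycles, and equivalence with $\A$ forces every cloud to remain reachable, i.e.\ the chosen edges form a single Hamiltonian cycle. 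Your high-level intuition (``a pruning is a global commitment that must work for all futures'') is exactly right and is what the paper exploits, but realizing it requires a source problem whose witness can be committed to one local decision at a time; SAT is the wrong source for that, and the interlocking-gadget work you defer is where the construction would collapse. (Your NP-membership remark is fine and matches the paper's Lemma \ref{lem:NPmem}, but it is not part of this lemma.)
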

\begin{proof}
We prove this by reduction from the Hamiltonian Cycle problem on a directed graph, which is known to be NP-complete \cite{NPcompl}.

Recall that a Hamiltonian cycle is a cycle using each vertex of the graph exactly once.

Suppose we have a directed graph $G =  ([1,n], E)$ and we want to check whether it contains a Hamiltonian cycle. W.l.o.g. we can assume that the graph is strongly connected, otherwise the answer is trivially no. 

We construct an NCA $\A = (Q,\Sigma,\Delta,q_0,F)$, where $F$ is the set of accepting states, such that $\A$ is DBP if and only if $G$ has a Hamiltonian cycle.
The components of $\A$ are defined as follows: $Q := \bigcup_{i\in[1,n]} \{p_i,q_i,r_i\}$,
$\Sigma := \{a_1,a_2,\cdots,a_n,\#\}$,
$q_0 := p_1$,
$F :=\bigcup_{i\in[1,n]} \{p_i,q_i\}$, and finally $\Delta$ contains the following transitions, for all $i\in[1,n]$:
$$
\begin{array}{cccc}
    p_i \xrightarrow{a_i} q_i, &
    p_i \xrightarrow{a_j} r_i \text{ for all } j \neq i, &
    q_i \xrightarrow{\#} p_i, &
    \text{ and }r_i \xrightarrow{\#} p_k \text{ if } (i,k) \in E
\end{array}.$$

The only non-determinism in $\A$ occurs at the $r_i$ states when reading $\#$: we then have a choice between all the $p_k$ where $(i,k)\in E$.

We give an example for $G$ in Figure \ref{fig:GA}, where solid lines show the Hamiltonian cycle, together with a construction of $\A$ from $G$, where solid lines show a determinisation by pruning witnessing this Hamiltonian cycle. 
\bigskip

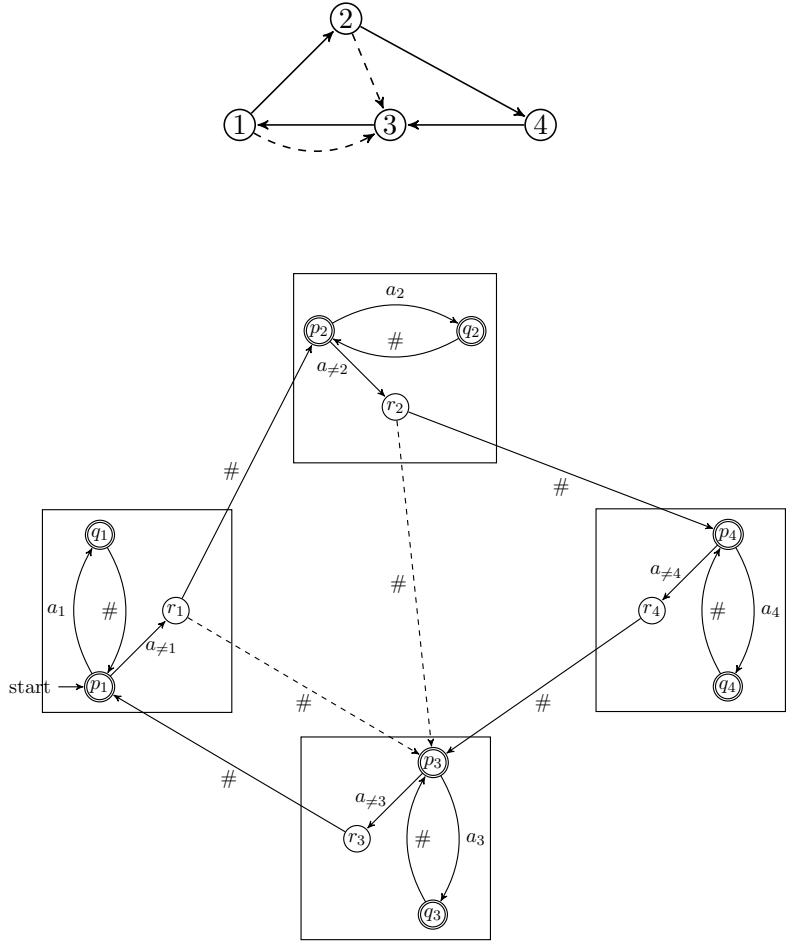
\begin{figure}[h]
\centering

\begin{tikzpicture}[->,>=stealth',shorten >=0.5pt,auto,node distance=2cm,
                    semithick]
  \tikzstyle{every state}=[fill=none,draw=black,text=black,minimum size=0pt,inner sep=1pt]

  \node[state] (A)                    {$1$};
  \node[state]         (B) [above right of=A] {$2$};
  \node[state]         (C) [right  of=A] {$3$};
  \node[state]         (D) [ right of=C] {$4$};

  \path (A) edge              node {} (B)
            edge            [dashed,bend right]  node {} (C)

        (B) edge              node {} (D)
	edge 	[dashed]	node{}(C)
        (D)
            edge              node {} (C)
        (C) edge node {} (A);
\end{tikzpicture}
\vspace{1.5cm}

\scalebox{.7}{
\begin{tikzpicture}[->,>=stealth',shorten >=0.5pt,auto,node distance=2cm,
                    semithick]
  \tikzstyle{every state}=[fill=none,draw=black,text=black,minimum size=0pt,inner sep=1pt]

  \node[state] (3)                    {$r_1$};
  \node[state,initial,accepting] (1)          [below left = 1.5cm of 3]          {$p_1$};
  \node[state,accepting] (2)          [above left = 1.5cm of 3]         {$q_1$};
  \node[state] (6)          [above right = 3.5cm and 3.8cm of 3]          {$r_2$};
  \node[state,accepting] (5)          [above right = 1.5cm of 6]          {$q_2$};
  \node[state,accepting] (4)          [ above left = 1.5cm of 6]          {$p_2$};
  \node[state,accepting] (7)          [below right = 2.5cm and 4.5cm of 3]          {$p_3$};
  \node[state] (9)          [below left = 1.5cm  of 7]          {$r_3$};
  \node[state,accepting] (8)          [below right = 1.5cm of 9]          {$q_3$};
  \node[state] (12)          [below right = 3.5cm and 4.5cm of 6]          {$r_4$};
  \node[state,accepting] (10)          [above right = 1.5cm of 12]          {$p_4$};
  \node[state,accepting] (11)          [below right = 1.5cm of 12]          {$q_4$};
  
  
\node[draw,inner xsep=8mm, inner ysep = 2mm, fit=(2) (1) (3) (1)] {};
\node[draw,inner xsep=2mm, inner ysep = 8mm, fit=(4) (6) (5) (4)] {};
\node[draw,inner xsep=8mm, inner ysep = 2mm, fit=(7) (8) (9) (8)] {};
\node[draw,inner xsep=8mm, inner ysep = 2mm, fit=(10) (11) (10) (12)] {};

  \path (1) edge        [bend left]      node {$a_1$} (2)
            edge              node [right] {$a_{\neq1}$} (3)
	(2) edge       [bend left]       node[left] {\#} (1)
	(3) edge            node[left] {\#} (4)
	(3) edge     [dashed]       node[below] {\#} (7)

	(4) edge        [bend left]      node {$a_2$} (5)
            edge              node [left=1pt] {$a_{\neq2}$} (6)
	(5) edge       [bend left]       node[above] {\#} (4)
	(6) edge      [dashed]        node[left] {\#} (7)
	(6) edge             node[below] {\#} (10)

	(7) edge        [bend left]      node [right]{$a_3$} (8)
            edge              node [left = .01cm] {$a_{\neq3}$} (9)
	(8) edge       [bend left]       node[right] {\#} (7)
	(9) edge              node[below] {\#} (1)

	(10) edge        [bend left]      node {$a_4$} (11)
            edge              node [left = .01cm] {$a_{\neq4}$} (12)
	(11) edge       [bend left]       node[right] {\#} (10)
	(12) edge              node[below] {\#} (7)
       ;
\end{tikzpicture}
}
\caption{Construction of NCA $\A$ (below) from $G$ (above)}\label{fig:GA}
\end{figure}

For each $i\in[1,n]$, we can think of the set of states $\{p_i,q_i,r_i\}$  as a cloud in $\A$ representing the vertex $i$ of the graph $G$.

Let $\Sigma' := \Sigma\setminus\{\#\}$, and $L = \bigcup\limits_{i=1}^n (\Sigma'\#)^*(a_i\#)^\omega$. First note that, provided $G$ is strongly connected, we have $L(\A) = L$. Indeed, for a run to be accepting by $\A$, it has to visit $r_i$ finitely many times for all $i$, i.e. after some point it has to loop between $p_i$ and $q_i$ for some fixed $i$, so the input word must be in $L$. This shows $L(\A)\subseteq L$. On the other hand, consider a word $w\in L$ of the form $u(a_i\#)^\omega$ with $u\in(\Sigma'\#)^*$. Then $\A$ will have a run on $u$ reaching some cloud $j$, and since the graph is strongly connected, the run can be extended to the cloud $i$ reading a word of $(a_i\#)^*$. From there, the automaton will read $(a_i\#)^\omega$  while looping between $p_i$ and $q_i$. We can build an accepting run of $\A$ on any word $w\in L$, so $L\subseteq L(\A)$.
\medskip

Now we shall prove that $\A$ is DBP if and only if G has a Hamiltonian cycle.

($\Rightarrow$) Suppose $\A$ is DBP, and let $\D$ be an equivalent DCA obtained from $\A$ by removing transitions.
Notice that since the only non-determinism in $\A$ is when reading a $\#$ letter in a $r_i$ state, by construction of $\A$, building $\D$ corresponds to choosing one out-edge for each vertex of $G$. This means it induces a set of disjoint cycles in $G$. We show that it actually is a unique Hamiltonian cycle. Indeed, assume that some vertex of $i$ is not reachable from $1$ in $G$. Equivalently, it means that some cloud $i$ is not reachable from $p_1$ in $\D$.
This implies that $(a_i\#)^\omega\notin L(\D)$, which contradicts $L(\D)=L(\A)=L$.
Therefore, $\D$ is strongly connected, and describes a Hamiltonian cycle in $G$.
\medskip

($\Leftarrow$) Conversely, if $G$ has an Hamiltonian cycle $\pi$ , we can build the automaton $\D$ accordingly, by setting for all $i\in[1,n]$, $\Delta_\D(r_i,\sharp)=\{p_j\}$ where $j$ is the successor of $i$ in $\pi$.
Since $\D$ is strongly connected, it still recognises $L$, and since it is deterministic it is a witness that $\A$ is DBP.
\medskip

This completes the proof of the fact that $\A$ is DBP if and only if $G$ has a Hamiltonian cycle.
Since this is a polynomial time reduction from Hamiltonian Cycle to the DBP property of an NCA, we showed
that checking whether an NCA is DBP is NP-hard.

Note that we used $n+1$ letters here, but it is straightfoward to re-encode this reduction using only two letters. Therefore, the problem is NP-hard even on a two-letter alphabet. It is trivially in PTIME on a one-letter alphabet, as there is a unique infinite word.
\end{proof}

The second part of Theorem \ref{thm:npc} is given by the following lemma.

\begin{lem}\label{lem:NPmem}
Checking whether an NCA is DBP is in NP.
\end{lem}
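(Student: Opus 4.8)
The plan is to show that the DBP property for an NCA $\A = (Q,\Sigma,\Delta,q_0,F)$ with $n$ states can be witnessed by a polynomial-size certificate that is verifiable in polynomial time. The obvious candidate certificate is the set of transitions to keep, i.e.\ a deterministic subautomaton $\D$ of $\A$; this clearly has size polynomial in $|\A|$. The issue is that checking $L(\D) = L(\A)$, i.e.\ checking language equivalence of two coB\"uchi automata, is not known to be in polynomial time in general (NCA universality/containment is PSPACE-complete). So the first step is to reduce the verification task to something cheaper by exploiting that $\D$ is deterministic and that removing transitions can only shrink the language: we always have $L(\D) \subseteq L(\A)$, so the certificate only needs to let us verify the reverse inclusion $L(\A) \subseteq L(\D)$.

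The key step is to argue that the reverse inclusion, for a \emph{deterministic} pruning $\D$ of $\A$, can be checked in polynomial time. I would run the breakpoint-style synchronized product of $\A$ (with its full nondeterminism) against $\D$: the idea is that $L(\A) \subseteq L(\D)$ fails iff there is an ultimately periodic word $w = uv^\omega$ on which $\A$ has a coB\"uchi-accepting run but the unique run of $\D$ is rejecting, i.e.\ visits non-accepting states infinitely often. Because $\D$ is deterministic, we can track its single state exactly, and because $\A$'s accepting runs can be captured by the Miyano--Hashimoto breakpoint determinisation, a word is in $L(\A)$ iff the breakpoint run of $\A$ is accepting. So I would build the product of the breakpoint determinisation of $\A$ with $\D$ --- but this breakpoint determinisation is exponential, which is the obstacle. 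To get polynomial time, I would instead observe that the witness word, if it exists, can be taken of a very special form: after a bounded prefix it loops in a simple cycle. More precisely, I would search for a lasso in an appropriate small graph: a reachable pair (state $s$ of $\D$, set of states of $\A$ reachable alongside) together with a cycle on which $\D$ stays outside $F_\D$ forever while $\A$ still admits a run staying inside $F$ eventually. The right formalization is to note that $L(\A)\not\subseteq L(\D)$ iff there is an ultimately periodic word whose $\D$-run's periodic part avoids accepting states and such that $\A$ restricted to accepting states has, on the periodic part, a cycle reachable from an $\A$-run on the prefix. All of these are reachability/cycle-detection conditions in graphs of polynomial size (the product of $\D$ with $\A$, possibly with an added breakpoint bit saying "have we committed to staying in $F$"), hence checkable in polynomial time.

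Putting it together: the NP algorithm guesses a subset of transitions of $\A$ forming a deterministic automaton $\D$ (polynomial size, polynomial-time checkable that it is deterministic and complete on the relevant alphabet), then verifies in polynomial time that $L(\A)\subseteq L(\D)$ via the lasso search described above; since $L(\D)\subseteq L(\A)$ always holds, this certifies $L(\D)=L(\A)$, hence $\A$ is DBP. Conversely if $\A$ is DBP such a $\D$ exists, so it will be guessed. The main obstacle is the middle step --- making precise the claim that failure of $L(\A)\subseteq L(\D)$ is witnessed by a polynomial-size lasso and that the search for it avoids the exponential blow-up of determinising $\A$; the crucial leverage is that $\D$ is deterministic, so only $\A$'s side needs care, and there the coB\"uchi condition ("eventually stay in $F$") can be handled by a single-bit breakpoint annotation rather than the full subset breakpoint construction, keeping the product polynomial.
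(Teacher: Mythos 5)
Your overall strategy is exactly the paper's: guess the pruned deterministic automaton $\D$, observe that $L(\D)\subseteq L(\A)$ comes for free, and reduce the remaining check $L(\A)\subseteq L(\D)$ to a polynomial-size lasso search in the product of $\A$ with the B\"uchi automaton $\overline{\D}$ obtained by dualizing $\D$'s acceptance condition. Your handling of the $\A$-side is also the right one (a loop lying entirely inside $F$, equivalently a one-bit ``committed to $F$'' annotation), and the worry you flag about an exponential breakpoint determinisation is a non-issue for precisely the reason you give.

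There is, however, one concrete error in your formalization of the $\D$-side of the lasso. You first (correctly) say that $\D$ rejects iff its run visits non-accepting states infinitely often, but you then formalize the witness as a cycle on which ``$\D$ stays outside $F_\D$ forever'' / ``the periodic part avoids accepting states.'' That condition is strictly stronger than what is needed: the loop of the lasso must contain \emph{at least one} non-accepting state of $\D$, not consist only of them. As written, your search misses counterexamples in which $\D$'s ultimate cycle alternates between accepting and non-accepting states. For instance, take $Q=\{1,2,3\}$, $F=\{1,3\}$, a single letter $a$, with $\A$ containing $1\trans{a}2$, $2\trans{a}1$, $2\trans{a}3$, $3\trans{a}3$ and $\D$ keeping only $1\trans{a}2\trans{a}1$: then $a^\omega\in L(\A)\setminus L(\D)$, but $\D$ has no cycle avoiding $F$, so your check would wrongly certify $L(\A)\subseteq L(\D)$ and hence accept a non-equivalent pruning, breaking soundness of the NP algorithm. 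The fix is one line --- require the loop to visit only $F$-states on the $\A$-component and at least one non-$F$-state on the $\D$-component (i.e.\ at least one B\"uchi state of $\overline{\D}$), which is exactly the condition the paper uses and is still decidable in NL.
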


\begin{proof}
Suppose an NCA $\A$ is given. We want to check whether it is DBP.
We do this via the following NP algorithm.

\begin{itemize}
\item
Nondeterministically prune transitions of $\A$ to get a deterministic automaton $\D$.
\item
Check whether $L(\A)\subseteq L(\D)$. For that, we check if $L(\A)\cap\overline{L(\D)}=\emptyset$ 
\end{itemize}

\noindent
The second step of the algorithm can be done in polynomial time, since it amounts to finding an accepting lasso in $\A\times \overline{\D}$, where $\overline{\D}$ is a B\"uchi automaton obtained by dualizing the acceptance condition of $\D$. An accepting lasso is a path of the form $q_0\srelT{~u~}{} p\srelT{~v~}{} p$, witnessing that the word $uv^\omega$ is accepted, i.e. in this case the loop should visit only accepting states from the NCA $\A$, and at least one B\"uchi state from the DBA $\overline{\D}$. Finding such a lasso is actually in NL.
%
Therefore, the above algorithm is in NP, and its correctness follows from the fact that $L(\D)\subseteq L(\A)$ is always true, as any run of $\D$ is in particular a run of $\A$.
\end{proof}

\section{B\"uchi Case}\label{sec:buchi}

Non-deterministic B\"uchi automata (NBA) correspond to the general case of non-deterministic $\omega$-automata, as they allow to recognise any $\omega$-regular language, and are easily computable from non-deterministic automata with more general acceptance conditions.

We will briefly describe the generalisation of previous constructions here, and explain what is the main open problem remaining to solve in order to obtain a satisfying generalisation.
We take Safra's construction \cite{Safra88} as the canonical determinisation for B\"uchi automata. Safra's construction outputs a Rabin automaton.

The idea behind the previous partial determinisation construction can be naturally adapted to Safra's: it suffices to restrict the image of the Safra tree labellings to sets of states of size at most $k$.
The bottleneck of the incremental determinisation is then to test whether a Rabin automaton is GFG (or DBP). For DBP checking, the same proof as Theorem \ref{thm:npc} shows that it is NP-complete. However for GFG checking, the complexity is widely open. It is only currently known to be in P for the particular cases of coB\"uchi \cite{KS15} and B\"uchi \cite{BK18} conditions.
A lower bound for GFG checking with acceptance condition $C$ (for instance $C$ being Parity or Rabin) is the complexity of solving games with winning condition $C$ \cite{KS15}, known to be in QuasiP for parity \cite{CaludeJKLS17} and NP-complete for Rabin \cite{EJ88}. In both cases, the complexity of solving those games is in P if the acceptance condition $C$ is fixed (for instance $[i,j]$-parity for fixed $i$ and $j$). On the other hand, the best known upper bound for the checking the GFG property is EXPTIME \cite{KS15}, even for a fixed acceptance condition $C$ such as Parity with $3$ ranks.
Finding an efficient algorithm for GFG checking of Rabin (or Parity) automata would therefore be of great interest for this incremental procedure, and would allow to efficiently build GFG automata from NBA.

\subsection{Safra Construction}
\newcommand{\green}{\mathit{Green}}
\newcommand{\white}{\mathit{White}}

Let $\A$ be an NBA $(Q, \Sigma, \Delta, q_0, F)$ where $F$ is the set of accepting states, and let $n=|Q|$. Safra construction produces an equivalent deterministic Rabin automaton $\D = (Q', \Sigma, \Delta', q_0', F')$ with $2^{O(n \log n)}$ many states \cite{Safra88,M12}.

We recall here the construction, in order to adapt it to an incremental construction computing the width and producing a GFG automaton.

Each state in $Q'$ is a tuple $(T, \sigma, \chi, \lambda)$ where
\begin{itemize}
\item
$T = (V, v_r, \cl)$ is a tree where $V$ is the set of nodes, $v_r\in V$ is the root and $\cl$ (for \emph{children list}) is a function $V\to V^*$ mapping each node to the ordered list of its children, from left to right.
\item
$\sigma : V \rightarrow 2^Q$ maps each node to a set of states, such that
\begin{enumerate}
\item
For each $v\in V$, if $\cl(v)=v_1\dots v_k$, then $\sigma(v_1)\cup\dots\sigma(v_k)\subsetneq \sigma(v)$.
\item
If $v$ and $v'$ are two nodes such that none of them is ancestor of the other then $\sigma(v)$ and $\sigma(v')$ are disjoint.
\item
If $\sigma(v) = \emptyset$, then 
 $v$ must be the root node $v_r$.
\end{enumerate}
\noindent Note that these conditions imply that $|V|\leq n$.
\item
$\chi : V \rightarrow \{\green,\white\}$ assigns a colour to each node.
\item
$\lambda : V \rightarrow \{l_1, l_2, \cdots , l_{2n}\}$ associates a label to each node in V.
\end{itemize}
\noindent The initial state $q_0'$ is $(T_0,\sigma_0,\chi_0,\lambda_0)$, where $T_0$ contains only the root $v_r$, $\sigma_0(v_r) = \{q_0\}$, $\chi_0(v_r) = \white$, $\lambda_0(v_r) = l_1$.

Now we define $\Delta'$. The state $(T, \sigma, \chi, \lambda)$, reading $a\in \Sigma$,  is moved to a state $(T', \sigma', \chi', \lambda')$ as follows :
\begin{enumerate}
\item[(i)] Let $T = (V,v_r,\cl)$. First expand the tree $T$ to $T_1 = (V_1, v_r, \cl_1)$ as follows: for each node $v\in V$, if $\sigma(v) \cap F \neq \emptyset$, then add a node $v'$ such that $v'$ is the right-most child of $v$ in $T_1$.
\item[(ii)] Extend $\sigma$ to $\sigma_1$ as follows:
for all $v\in V\cap V_1,$ set $\sigma_1(v) = \sigma(v)$.
And for each new node $v\in V_1\setminus V$, let $p$ be the parent node of $V$ in $T_1$, set $\sigma_1(v) = \sigma(p)\cap F$.

Extend $\lambda$ to $\lambda_1$ as follows: for all $v\in V\cap V_1$, $\lambda_1(v) = \lambda(v)$. And for each new node $v\in V_1 \setminus V$, fix a new label to $v$ which was not used in $V$.
We can always find such a label since there are $2n$ labels whereas $|V|\leq n$ and each node in $V$ generates atmost one new node in $V_1$.
\item[(iii)] For each node $v\in V_1$, apply the subset construction locally, i.e. define $\sigma_1':V_1\rightarrow2^Q$ such that $\sigma_1'(v) = \Delta(\sigma_1(v),a)$. As before, $\Delta(X,a)$ denotes the set $\{q'\mid \ \exists q \in X, \ q'\in\Delta(q,a)\}$).
\end{enumerate}
Now we modify $T_1$ and $\sigma_1'$ so that the structure satisfies the conditions specified for the states of $\D$ as follows:
\begin{enumerate}
\item[(iv)]

For every node $v \in V$, if there is some $s \in \sigma_1'(v)$ and also $s\in\sigma_1'(v')$ for some node $v'$ 
such that $v$ and $v'$ have a common ancestor, say $u$, and $v$ is in the subtree rooted at a child $u_1$ of $u$ and $v'$ is in the subtree rooted at a child $u_2$ of $u$ where $u_1$ is to the left of $u_2$, then remove $s$ from $\sigma_1'(v')$. This corresponds to retaining only the ``oldest'' copy of each active state in the simulation.
\item[(v)]
Remove all nodes $v$ such that $\sigma_1'(v) = \emptyset$ and $v$ is not the root.
\item[(vi)]
For each node $v$, if 
 the union of the sets associated with the children of $v$ is equal to $\sigma(v)$, then
 remove all the children of $v$ and make $\chi(v) = Green$. And for all other nodes, set $\chi(v) = White$.
\end{enumerate}
Let the set of remaining nodes be $V'$. And $\sigma', \lambda', \chi'$ are retained from the nodes in $V'$.

If $T$ is a tree, we denote by $V(T)$ its set of nodes.

The Rabin acceptance condition is given by $\{(G_1,B_1), (G_2,B_2), \cdots, (G_{2n},B_{2n})\}$ where $G_i$ are the good states and $B_i$ are the bad states and they are defined as follows:
\begin{itemize}
\item
$G_i = \{(T,\sigma,\chi,\lambda)\in Q' \ | \ \exists v\in V(T) : \lambda(v) = l_i \text{ and } \chi(v) = Green\}$
\item
$B_i = \{(T,\sigma,\chi,\lambda)\in Q' \ | \ \forall v\in V(T) : \lambda(v) \neq l_i\}$
\end{itemize}
That is to say, a run is accepting if there is some $i$ such that states from $G_i$ appear infinitely often while states from $B_i$ appear only finitely often.

\medskip
The Safra construction is an efficient way to compress information about all possible runs of $\A$. Indeed,  on a word $w=a_0a_1a_2\dots$, the set of runs on $w$ can be described by an infinite Direct Acyclic Graph (DAG) called the run-DAG. This run-DAG has vertices $Q\times\N$, and edges $\{((p,i),(q,i+1)) \mid (p,a_i,q)\in\Delta\}$. We have that $w\in L(\A)$ if and only if there is a path in this run-DAG starting in $(q_0,0)$ and visiting infinitely many states from $F$. Safra trees store relevant information about prefixes of this DAG of the form $Q\times[0,i]$, and the acceptance condition of $\D$ is designed to characterize whether the run-DAG of $\A$ contains a B\"uchi accepting run.

\begin{thm}\cite{Safra88}\label{thm:Safra}
The deterministic Rabin automaton $\D$ built via the Safra construction is equivalent to $\A$.
\end{thm}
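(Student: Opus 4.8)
The statement to prove is Theorem~\ref{thm:Safra}: the deterministic Rabin automaton $\D$ built via the Safra construction is equivalent to $\A$. Since this is the classical Safra determinisation result, the plan is to prove the two language inclusions separately, using the run-DAG picture described just before the theorem statement as the guiding intuition.

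\medskip
\noindent\textbf{Plan.} First I would fix an input word $w=a_0a_1a_2\dots$ and observe that since $\D$ is deterministic and complete, it has a unique run $\rho_\D = s_0 s_1 s_2 \dots$ on $w$, where $s_i = (T_i,\sigma_i,\chi_i,\lambda_i)$. The whole proof hinges on relating this run to the run-DAG $G_w$ of $\A$ on $w$. The key structural invariant to establish, by induction on $i$, is that for every node $v$ present in $T_i$ with label $\lambda_i(v)=l$, the set $\sigma_i(v)$ consists exactly of those states $q$ such that $(q,i)$ is reachable in $G_w$ from $(q_0,0)$ along a path that, since the moment the node labelled $l$ was (last) created, has passed through $F$ at the ``depth'' corresponding to $v$'s position in the tree. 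Step~(iv) of the construction (keeping only the oldest copy) is what makes the $\sigma_i(v)$'s disjoint among incomparable nodes and keeps the bookkeeping consistent; step~(vi) (greening a node when its children cover it) records that every run tracked by $v$ has seen a fresh visit to $F$ since the last breakpoint at that node.

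\medskip
\noindent\textbf{The $L(\D)\subseteq L(\A)$ direction.} Suppose $\rho_\D$ is Rabin-accepting, so there is an index $i$ with $l_i$ the label of some node from some point on (states in $B_i$ occur only finitely often) and that node is green infinitely often (states in $G_i$ occur infinitely often). Call this node, once it stabilises, $v^\ast$; after a finite prefix it is never deleted and never relabelled. Each time $v^\ast$ turns green, its associated state set $\sigma(v^\ast)$ was covered by its children, i.e.\ every run currently tracked at $v^\ast$ has just made a visit to $F$ and been ``promoted'' to a child. Using König's lemma on the subtree of the run-DAG consisting of the (finitely many at each level) states tracked at $v^\ast$ or its descendants, together with the fact that $\sigma(v^\ast)$ is refreshed from a visit to $F$ infinitely often, I would extract an infinite path in $G_w$ from $(q_0,0)$ visiting $F$ infinitely often, hence an accepting run of $\A$ on $w$, so $w\in L(\A)$.

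\medskip
\noindent\textbf{The $L(\A)\subseteq L(\D)$ direction.} Conversely suppose $\rho=q_0 q_1 q_2\dots$ is an accepting run of $\A$ on $w$, so $q_i\in F$ for infinitely many $i$. At every stage $i$, the state $q_i$ lies in $\sigma_i(v_r)$ (the root always tracks the full reachable set), hence in $\sigma_i(v)$ for a unique deepest node $v$ along the root path, by the subset-partitioning invariant. Track the sequence of these deepest nodes carrying $q_i$: I would argue it can only move ``up'' the tree finitely often (a node disappears only when its set empties, and our run keeps it non-empty at the relevant level until it is absorbed into a child), so it eventually stabilises to sit forever inside the subtree of some fixed node, and in fact — because $\rho$ visits $F$ infinitely often, each such visit pushing $q_i$ one level deeper or triggering a greening/collapse at the level where it sits — it stabilises at a node $v^\ast$ with a fixed label $l_i$ that turns green infinitely often. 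Since that label is never re-used elsewhere once $v^\ast$ is fixed, states of $B_i$ occur finitely often and states of $G_i$ occur infinitely often, so $\rho_\D$ satisfies the Rabin pair $(G_i,B_i)$ and $w\in L(\D)$.

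\medskip
\noindent\textbf{Main obstacle.} The delicate point, and where I would spend most of the care, is making the ``deepest node carrying $q_i$ eventually stabilises at a label that greens infinitely often'' argument precise: one must rule out the scenario where the tracked position of $\rho$ keeps drifting between siblings or gets repeatedly re-created with fresh labels (which would put infinitely many states into $B_i$ for every $i$). This requires a careful analysis of the interaction of steps~(i)--(ii) (creation of children on $F$-visits, fresh labels), step~(iv) (oldest-copy rule, which forces $\rho$'s tracked node to be a stable leftmost choice), step~(v) (deletion on emptiness) and step~(vi) (collapse-and-green). The finiteness of the number of labels ($2n$) versus the unbounded length of the run is exactly what the label-recycling discipline is engineered to handle, and pinning down that a label is never reused while its node is alive is the technical heart of both directions. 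The bound $|V|\le n$ (noted in the construction) is what guarantees fresh labels are always available, and should be invoked explicitly.
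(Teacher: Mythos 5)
First, note that the paper does not prove this statement at all: Theorem \ref{thm:Safra} is quoted directly from \cite{Safra88}, and the surrounding text only recalls the construction and the run-DAG intuition. So your proposal cannot be measured against an in-paper argument; it has to stand on its own as a reconstruction of Safra's correctness proof.

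As an outline it is the right one: the two inclusions, the run-DAG invariant, K\"onig's lemma for $L(\D)\subseteq L(\A)$, and the host-node stabilisation argument for $L(\A)\subseteq L(\D)$ are exactly the standard ingredients. But as written it is a plan, not a proof: the step you yourself flag as the ``main obstacle'' --- that the deepest node tracking the accepting run $\rho$ eventually stabilises at a node whose label greens infinitely often --- is asserted rather than established, and it is the entire technical content of the theorem. To close it you need (a) the label-freshness observation: new nodes receive labels unused in the \emph{old} node set $V$, so a node deleted during a transition cannot be replaced by a same-labelled node in that same transition, hence ``states of $B_i$ occur finitely often'' really does pin down a single physical node from some point on; and (b) an induction on depth: at each level, the host of $q_i$ among the siblings can only move leftward, by the oldest-copy rule (iv), hence eventually stabilises; if the stabilised host at depth $d$ turns green only finitely often, the infinitely many $F$-visits of $\rho$ push $q_i$ into its children, so a stable host at depth $d+1$ eventually exists; since depth is bounded by $n$, some stabilised host must green infinitely often. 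Your phrase that the host ``can only move up the tree finitely often'' is not quite the right invariant --- the danger is lateral drift among siblings and repeated re-creation under fresh labels, which is exactly what leftward monotonicity plus the depth bound rule out. Similarly, in the $L(\D)\subseteq L(\A)$ direction, the K\"onig argument needs the precise claim that between two consecutive greenings of $v^\ast$, every state of $\sigma(v^\ast)$ at the later time is reachable from a state of $\sigma(v^\ast)$ at the earlier time by a path of the run-DAG that meets $F$; you gesture at this but do not state it. Nothing in the proposal is wrong, but it defers rather than performs the work that makes the theorem true.
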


\subsection{Incremental Safra Construction} 

We can extend the concept of $k$-subset construction or $k$-breakpoint construction for NFA and NCA respectively to the Safra construction. We describe below the $k$-Safra construction, for a parameter $k\leq n$.

Here we restrict $\sigma(v_r)$, $v_r$ being the root, to sets of size at most $k$. Since all other nodes in Safra trees are labelled by subsets of $\sigma(v_r)$, this implies that the labelling of all the nodes in all Safra trees have size at most $k$.
We define the construction formally as follows:

Given an NBA $\A = (Q, \Sigma, \Delta, q_0, F)$. Define the NRA $\A_k = (Q', \Sigma, \Delta', q_0', F')$ where each state in $Q'$ is $(T,\sigma,\chi,\lambda)$ such that
\begin{itemize}
\item
$T = (V,v_r,\cl)$ as in the original construction.
\item
$\sigma : V \rightarrow 2^Q$ satisfying the same properties as before. Additionally we also have the condition that $|\sigma(v)|\leq k$ for all $v\in V$.
\item
$\chi$ and $\lambda$ are also defined as before.
\end{itemize}

\noindent Now we define $\Delta'$.
All the steps remain unchanged except step (iii), which is replaced by the following step, nondeterministically choosing a subset of size $k$ for the root, and propagating it down in the tree:

$\sigma_1'(v_r) := 
\begin{cases}
    \{\Delta(\sigma_1(v_r),a)\}						   & \text{if }|\Delta(\sigma_1(v_r),a)| \leq k\\
    \{X' \mid X'\subseteq \Delta(\sigma_1(v_r),a),|X'| = k\}				   & \text{otherwise}
\end{cases}$

\noindent And for every other node $v \neq v_r$, $\sigma_1'(v) := (\Delta(\sigma_1(v),a))\cap\sigma_1'(v_r)$.

This corresponds to extending the run-DAG by at most $k$ nodes (the label of the root) at each step, i.e. non-deterministically building a subset of the run-DAG of width at most $k$.

Initial states and acceptance condition are defined as before.

We can now state that this $k$-Safra construction characterizes the notion of width in the same way the previous incremental constructions did:
\begin{lem}
$L(\A) = L(\A_k)$ and $\w(\A)\leq k\iff \A_k$ is GFG.
\end{lem}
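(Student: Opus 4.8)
The plan is to mimic the two proofs already given for the analogous statements in the NFA case (Lemma~\ref{lem:AkGFG}) and the coB\"uchi case (Lemma~\ref{lem:breakpoint_k}), adapting the arguments to the extra bookkeeping that the Safra tree carries. The proof will split into the language-equivalence statement $L(\A)=L(\A_k)$ and the two implications of $\w(\A)\leq k\iff\A_k$ is GFG.

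\medskip
\noindent\textbf{Language equivalence.}
For $L(\A_k)\subseteq L(\A)$ I would compare the run of $\A_k$ with the run of the full Safra automaton $\D$ of Theorem~\ref{thm:Safra}. The key observation is that the only change from $\D$ to $\A_k$ is in step (iii): in $\A_k$ the root label is a subset of size at most $k$ of what $\D$ would keep, and every other node's label is intersected with this root subset. Hence, writing the run of $\A_k$ as a sequence of Safra trees and the run of $\D$ in parallel, an easy induction on the length of the prefix shows that each node label in the $\A_k$-tree is contained in the corresponding node label of the $\D$-tree (using the label function $\lambda$ to identify ``corresponding'' nodes, as in the original construction). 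Therefore any accepting index $i$ for the $\A_k$-run is also accepting for the $\D$-run: a node labelled $l_i$ that is eventually always present and infinitely often green in $\A_k$ is also present (with a superset $\sigma$) in $\D$, and greenness propagates upward along label-identified nodes. So $L(\A_k)\subseteq L(\D)=L(\A)$. For the converse $L(\A)\subseteq L(\A_k)$, I would use an accepting path $\pi$ in the run-DAG of $\A$ on $w\in L(\A)$: since at each step at most $k$ nodes are added to the subset of the run-DAG that $\A_k$ builds, Player~$0$ (equivalently, the nondeterminism of $\A_k$) can always choose the root subset to contain the vertex of $\pi$ at the current level, and this is compatible with the requirement $|\cdot|\le k$. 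The resulting run of $\A_k$ keeps a live copy of $\pi$ forever; by the standard analysis of the Safra construction, keeping one fixed accepting path alive forces some label to be eventually permanent and infinitely often green, so the run is accepting and $w\in L(\A_k)$.

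\medskip
\noindent\textbf{The equivalence $\w(\A)\leq k\iff\A_k$ is GFG.}
Both directions will go through the observation that resolving the nondeterminism of $\A_k$ is \emph{exactly} choosing, at each step, a subset of $\Delta(\sigma_1(v_r),a)$ of size at most $k$ --- everything else in $\Delta'$ is deterministic. Hence a strategy $\sigma_w:\Sigma^*\to\Qk$ for Player~$0$ in $\Gw(\A,k)$ and a GFG strategy $\sigmaGFG:\Sigma^*\to Q'$ for $\A_k$ are in natural correspondence: given $\sigma_w$, let $\sigmaGFG(u)$ be the unique Safra tree obtained by using $\sigma_w(u)$ as the root label at each step and propagating deterministically; conversely, given $\sigmaGFG$, let $\sigma_w(u)$ be the root label of $\sigmaGFG(u)$. (As in Lemma~\ref{lem:breakpoint_k} one may assume WLOG that $\sigma_w$ always picks a maximal set, i.e. the full $\Delta(\sigma_w(u),a)$ when that has size $\le k$.) It then remains to check that the two winning conditions match under this correspondence. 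For the direction ``$\w(\A)\leq k\Rightarrow\A_k$ GFG'': if the play $X_0X_1X_2\dots$ induced by $\sigma_w$ on $w\in L(\A)$ contains an accepting run $q_0q_1q_2\dots$ of $\A$, then this run is a path in the subset of the run-DAG built by $\sigmaGFG$, and by the Safra-correctness argument the presence of such a path forces the Rabin condition of $\A_k$ to be met. For ``$\A_k$ GFG $\Rightarrow\w(\A)\leq k$'': if the run of $\A_k$ on $w\in L(\A)=L(\A_k)$ induced by $\sigmaGFG$ is accepting, then its witnessing label $l_i$ is eventually permanent with infinitely many green occurrences; a green occurrence of a node happens exactly when the union of its children's labels has caught up with its own label, which (by the design of the acceptance condition, exactly as in the original Safra proof) is equivalent to saying that every run-DAG path kept alive under the node has, since the last green occurrence, met $F$ --- and one extracts from K\"onig's lemma an infinite path meeting $F$ infinitely often that stays inside the sets $X_i=\sigma_i(v_r)$. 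This gives the accepting run required to win $\Gw(\A,k)$.

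\medskip
\noindent\textbf{Main obstacle.}
The routine part is the correspondence of strategies with the root-label nondeterminism; the delicate part --- and the one I expect to be the real work --- is the second bullet above, extracting an accepting \emph{run} of $\A$ inside the sequence $X_0X_1X_2\dots$ from the \emph{acceptance of the Rabin condition} of $\A_k$. In the coB\"uchi case of Lemma~\ref{lem:breakpoint_k} this used a clean contradiction argument (if no accepting path existed, the breakpoint set $Y_i$ would empty out infinitely often). Here the analogue is genuinely the correctness argument of Safra's construction itself: one must argue that a permanent, infinitely-often-green node $v$ with label $l_i$ corresponds, via K\"onig's lemma applied to the nested prefixes of the run-DAG restricted to $\sigma(v)$, to an actual path visiting $F$ infinitely often. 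I would handle this by invoking the structure already recalled after Theorem~\ref{thm:Safra} (Safra trees store prefixes $Q\times[0,i]$ of the run-DAG, and the acceptance condition characterizes the existence of a B\"uchi-accepting path), and simply noting that restricting the run-DAG width to $k$ does not affect this characterization since we only ever discarded run-DAG vertices that the chosen root subset did not keep, and the path we are tracking was kept throughout.
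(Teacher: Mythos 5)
Your proposal follows essentially the same route as the paper: identify the nondeterminism of $\A_k$ with the choice of the root label, put winning strategies for Player $0$ in $\Gw(\A,k)$ and GFG strategies for $\A_k$ in natural correspondence, and delegate the matching of the acceptance conditions to the correctness of Safra's construction applied to the width-$k$ sub-run-DAG that the chosen root labels trace out. The paper's proof is exactly this, stated even more tersely.

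The one place where you diverge is the inclusion $L(\A_k)\subseteq L(\A)$, which you argue by a per-node containment induction between the Safra trees of $\A_k$ and those of the full Safra automaton $\D$, followed by a transfer of the accepting Rabin index. This step does not work as stated. First, the containment of $\lambda$-identified node labels is not obviously preserved by the horizontal-merge step (iv): a state kept by the root subset may survive in a \emph{younger} node of $\A_k$'s tree than in $\D$'s tree, because the older node that would have claimed it in $\D$ received a smaller set in $\A_k$; the trees can also diverge in shape after step (vi), since a node can turn green (and lose its children) in $\A_k$ while staying white in $\D$. Second, and more importantly, even granting the containment, acceptance does not transfer from the smaller structure to the larger one: a node that is infinitely often green in $\A_k$ --- its children's union catching up with its own, \emph{smaller}, label --- need not ever be green in $\D$, so an accepting index for the $\A_k$-run is not in general an accepting index for the $\D$-run. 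The fix is to use for this direction the same argument you already give for ``$\A_k$ GFG $\Rightarrow \w(\A)\leq k$'': an accepting run of $\A_k$ means the Safra acceptance condition holds over the width-$k$ sub-run-DAG actually tracked, and by the correctness of Safra's construction (the K\"onig's-lemma extraction you describe) that sub-DAG contains a B\"uchi-accepting path, which is a genuine accepting run of $\A$. This uniform ``Safra on the restricted run-DAG'' view is what the paper uses for all four implications.
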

\begin{proof}
Suppose that $w\in L(\A)$, witnessed by the run $\rho = q_0q_1q_2\cdots$. Then in $\A_k$, at step $i$, choose a set $X$ of size at most $k$ containing $q_i$ as $\sigma(v_r)$ where $v_r$ is the root of the tree and for all other nodes, take its intersection with the label of the root. Since the run-DAG that is built contains a B\"uchi accepting run, the correctness of the original Safra construction ensures that this run is accepting, so $w\in L(\A_k)$.

Conversely, let $w\in L(\A_k)$. This means that the run-DAG of width $k$ guessed by the automaton contains a B\"uchi accepting run of $\A$ on $w$, so $w\in L(\A)$.
\medskip

Now suppose that $\w(\A)\leq k$ and let $\sigma_w$ be a winning strategy for Player $0$ in $\Gg_w(\A,k)$. The only non-determinism in $\A_k$ consists in choosing subsets of size $k$ for the label of the root of Safra trees. So $\sigma_w$ naturally induces a GFG  strategy $\sigma_{GFG}$ for $\A_k$, by choosing as root labelling the subset given by $\sigma_w$. Again, the correctness of the Safra construction implies that this GFG strategy is correct, as the run-DAG will contain a B\"uchi accepting run, by definition of the winning condition in $\Gw(\A,k)$.

Conversely, let $\sigma_{GFG}$ be a GFG strategy for $\A_k$. It induces a winning strategy $\sigma_w$ for Player $0$ in $\Gg_w(\A,k)$ which follows the labellings of the root nodes in the run induced by $\sigma_{GFG}$.
\end{proof}

Therefore, we can design a similar incremental approach as in the case of NFA or NCA, and find the minimum $k$ for which the $k$-Safra construction is GFG, or DBP.

\subsection{Complexity of GFG and DBP checking}

As mentioned in the beginning of Section \ref{sec:buchi}, the complexity of checking whether an automaton is GFG in the general case of Rabin or Parity automata is widely open. It is only known to be in P for coB\"uchi condition \cite{KS15} and for B\"uchi condition \cite{BK18}, while the upper bound is EXPTIME for higher conditions, such as parity condition with $3$ ranks.
This means that in the current state of knowledge, the incremental Safra construction combined with GFG checking only yields a doubly exponential upper bound for the width problem of B\"uchi automata. Recall that the lower bound for the width problem is EXPTIME also in this case, by Remark \ref{rem:omega}.
Obtaining a better upper bound via a different route is an interesting problem, that we leave open.

On the other hand, if we aim at obtaining a deterministic automaton, the picture is simplified and matches the coB\"uchi case. Indeed, for DBP checking, we can generalise Theorem \ref{thm:npc} stating NP-completeness of DBP checking for coB\"uchi automata:

\begin{thm}
Checking whether a Rabin (or Streett, Muller) automata is DBP is NP-complete.
\end{thm}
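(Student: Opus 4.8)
The plan is to establish both directions by lifting the coB\"uchi result (Theorem~\ref{thm:npc}) to the more general acceptance conditions, reusing the same two ingredients: the Hamiltonian-cycle reduction of Lemma~\ref{lem:NPh} for the lower bound, and the guess-a-pruning-then-check-inclusion scheme of Lemma~\ref{lem:NPmem} for the upper bound.

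For NP-hardness, note first that a coB\"uchi automaton with accepting set $F$ is \emph{syntactically} a Rabin automaton (one pair $(Q, Q\setminus F)$) and also a Streett automaton (one pair $(Q\setminus F,\emptyset)$), with the same transition structure and the same accepting runs. Hence the NCA produced by the reduction of Lemma~\ref{lem:NPh} from a directed graph $G$ is equally a Rabin and a Streett automaton recognising the same language, so it is DBP in the Rabin (resp.\ Streett) sense iff it is DBP as an NCA iff $G$ has a Hamiltonian cycle; this is already a polynomial-time reduction. For Muller one cannot always translate coB\"uchi into a polynomially sized Muller condition, but this is not needed: the \emph{particular} automaton $\A$ built in Lemma~\ref{lem:NPh} has the feature that any run visiting the states $r_i$ only finitely often eventually alternates forever between a single $p_i$ and the corresponding $q_i$, so the set of states it visits infinitely often is exactly $\{p_i,q_i\}$ for some $i\in[1,n]$. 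Consequently, \emph{on this automaton}, the coB\"uchi acceptance coincides with the polynomially sized Muller condition $\mathcal{F}=\{\{p_i,q_i\}\mid i\in[1,n]\}$. Since being DBP depends only on the language and the transition structure, $\A$ equipped with $\mathcal{F}$ is DBP iff $G$ has a Hamiltonian cycle, which gives NP-hardness for the Muller case as well.

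For membership in NP, proceed exactly as in Lemma~\ref{lem:NPmem}: nondeterministically prune the transition relation of $\A$ to obtain a deterministic automaton $\D$ (a certificate of polynomial size), and then verify $L(\A)\subseteq L(\D)$ in polynomial time; the converse inclusion $L(\D)\subseteq L(\A)$ holds automatically, since a run of $\D$ is a run of $\A$ and the acceptance condition, being expressed through the set of states visited infinitely often, is inherited verbatim. To test $L(\A)\subseteq L(\D)$, we check emptiness of the product $\A\times\overline{\D}$, where $\overline{\D}$ is obtained from $\D$ by keeping the transition structure and dualising the condition: a deterministic Rabin automaton becomes a deterministic Streett one on the same pairs, a deterministic Streett automaton becomes a deterministic Rabin one, and a deterministic Muller automaton with family $\mathcal{F}$ becomes a deterministic automaton accepting exactly when the infinity set does \emph{not} lie in $\mathcal{F}$ --- in all cases the description stays polynomial. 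The product is a nondeterministic automaton whose acceptance condition is the conjunction of a polynomially sized condition of $\A$'s type with the dual condition of $\D$; for the Rabin and Streett cases its emptiness is decided in polynomial time by fixing one Rabin pair at a time (deleting its ``bad'' states) and running the classical polynomial Streett-emptiness loop over the remaining strongly connected subgraphs, additionally requiring the surviving subgraph to meet the corresponding ``good'' set. This yields an NP algorithm, which together with the hardness above proves the theorem.

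The step requiring the most care is the Muller instance of the polynomial containment test. Complementing a deterministic Muller automaton is immediate at the level of conditions, but one must resist expanding the complemented family (which may be exponential); the product-emptiness test has to be carried out directly, by a strongly-connected-subgraph analysis that looks for a reachable loop whose $\A$-projection is one of the Muller sets of $\A$ and whose $\D$-projection is \emph{none} of the Muller sets of $\D$, and one must argue that this search stays polynomial despite the exponentially many possible loop shapes. Once this bookkeeping is in place the three acceptance conditions are treated uniformly, and since parity is a special case of Rabin the same construction also covers parity automata.
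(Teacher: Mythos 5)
Your overall strategy coincides with the paper's: NP-hardness by reusing the Hamiltonian-cycle automaton of Lemma~\ref{lem:NPh}, and NP-membership by guessing a pruning $\D$ and verifying $L(\A)\subseteq L(\D)$ via emptiness of $\A\times\overline{\D}$. On the hardness side you are in fact more thorough than the paper, which only invokes ``coB\"uchi is a special case of Rabin'': your observation that the specific automaton of Lemma~\ref{lem:NPh} has infinity sets exactly of the form $\{p_i,q_i\}$, so that its coB\"uchi condition is expressible as a \emph{polynomial-size} Muller family, is exactly what is needed to cover the Muller case (a generic coB\"uchi-to-Muller translation would not be polynomial), and your single-pair Streett encoding is correct for the paper's convention of Streett as the dual of Rabin.

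The one place where your argument over-reaches is the verification step. The paper does \emph{not} claim a deterministic polynomial-time emptiness test for $\A\times\overline{\D}$; it folds this step into the nondeterminism of the NP algorithm by guessing an accepting lasso $q_0\srelT{u}{}p\srelT{v}{}p$ and then checking, in polynomial time, that the guessed loop satisfies the relevant Rabin pair of $\A$ together with all Streett pairs (or the Muller constraints) of $\overline{\D}$. Your deterministic routine is fine for the Rabin/Streett cases (one Rabin pair at a time plus the polynomial Streett-emptiness recursion), but for the Muller case you assert that the strongly-connected-subgraph search ``stays polynomial despite the exponentially many possible loop shapes'' without proving it --- and this is not obvious: the product condition ``$\pi_\A(\mathrm{Inf})\in\mathcal{F}_\A$ and $\pi_\D(\mathrm{Inf})\notin\mathcal{F}_\D$'' is a succinctly described Muller condition whose witness sets are not closed under union, so the naive SCC analysis does not immediately terminate in polynomially many candidates. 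Fortunately nothing in the theorem requires this: since the algorithm is already nondeterministic, you should simply add the loop (equivalently, its infinity set, a polynomial-size certificate) to the guess and verify the boolean combination of conditions on that fixed set in polynomial time, exactly as the paper does. With that weakening your proof is complete.
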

\begin{proof}
Since coB\"uchi condition is a particular case of Rabin condition, NP-hardness follows from Lemma \ref{lem:NPh}.

We now show membership in NP, in the same way as in Lemma \ref{lem:NPmem}.

As before, we non-deterministically choose a set of edges to remove in order to obtain a deterministic Rabin automaton $\D$.
Then, it remains to decide emptiness of the automaton $\A\times\overline{\D}$, where $\A$ is a non-deterministic Rabin automaton and $\D$ is a deterministic Streett automaton (Streett is the condition dual to Rabin). Again, emptiness of this automaton amounts to guessing an accepting lasso $q_0\srelT{u}{} p\srelT{v}{} p$. There is no direct NL algorithm to guess such a lasso as in the proof of Theorem \ref{thm:npc}, since in order to verify the Streett condition of $\overline{\D}$, all Streett pairs must be checked. However, this can be done in NP: guess the lasso, guess the Rabin pair to witness acceptance of $\A$ (or verify them all), and verify that all Streett pairs of $\overline{\D}$ are verified by the loop: for each Streett pair $(E,F)$, either the loop contains a state from $E$ or no state from $F$.
This procedure can be generalised to Muller acceptance condition for $\A$: it is always in P to decide whether a particular loop verifies a boolean combination of Muller acceptance conditions.
Overall, this yields an NP algorithm for checking whether a Rabin (or Streett, Muller) automaton is DBP.
\end{proof}

\section{Conclusion}

The width measure can be thought of as a measure of non-determinism in automata, that is essentially orthogonal to ambiguity. It also bounds the number of steps in our incremental determinisation procedures. Therefore, if we know that the width is small, we can obtain a deterministic or GFG automaton without having to go through a full determinization construction as intermediary step. The EXPTIME-completeness of the width problem shows that there is essentially no shortcut that would allow to jump directly to the good level of the incremental construction by computing the optimal width without performing the incremental construction. A dichotomic approach could still present some advantages in practice by saving a few computations in the process of finding the width, but the EXPTIME barrier is a strict theoretical limit.

The cases of finite words and coB\"uchi condition are especially well-suited for this approach. Indeed, these conditions allow polynomial time checking of the GFG property \cite{LR13,KS15}, and polynomial time minimization of the resulting GFG automaton \cite{AK19}. The NP-completeness of DBP checking for coB\"uchi automata is another reason to aim for a GFG automaton when determinism is not strictly required.

For B\"uchi automata, we need to build increasingly complex Rabin automata $\A_k$ via the $k$-Safra construction. The complexity of checking whether such automata are GFG is less well-understood. As future work, it is pertinent to either search for a special GFG checking procedure well-suited to these $\A_k$, or on the contrary show that checking the GFG property for these particular $\A_k$ is as hard as for general automata. In any case, this problem provides additional motivation to pinpoint the complexity of GFG checking in general.

\bigskip

\noindent\textbf{Acknowledgements.} We thank the anonymous reviewers for their detailed feedback, and Nathalie Bertrand for helpful discussions.

\bibliographystyle{plainurl}
\bibliography{width}
\end{document}